\documentclass[a4paper,UKenglish,cleveref, autoref, thm-restate]{lipics-v2021}

\hideLIPIcs  

\usepackage[T1]{fontenc}
\usepackage[utf8]{inputenc}
\usepackage{microtype}
\usepackage{amsmath,amssymb,amsthm,mathtools}
\usepackage{graphicx}
\usepackage{xspace}
\usepackage{tikz}
\usepackage{xcolor}
\usepackage{subcaption}
\usepackage{thmtools}
\usepackage{thm-restate}
\usepackage{todonotes}

\newcommand\indeg{{\textnormal{in}}}
\newcommand\out{{\textnormal{out}}}

\newcommand\edge[2]{\ensuremath{\{#1,#2\}}}

\usetikzlibrary{calc,decorations.pathmorphing,shapes.misc, fit}
\definecolor{darkred}{rgb}{0.62,0.13,0.28}
\definecolor{darkgreen}{rgb}{0.01,0.53,0.47}

\newcommand{\N}{\mathbb{N}}
\newcommand{\dist}{\operatorname{dist}}
\newcommand{\FO}{\ensuremath{\mathsf{FO}}\xspace}
\newcommand{\MSO}{\ensuremath{\mathsf{MSO}}\xspace}

\newcommand{\cw}{\operatorname{cw}}

\newcommand{\W}{\textsf{W}}
\newcommand{\NP}{\textsf{NP}}
\newcommand{\XP}{\textsf{XP}}

\newcommand{\PVC}{\textsc{Partial Vertex Cover}\xspace}
\newcommand{\PVCD}{\textsc{Partial Vertex Cover Discovery}\xspace}
\newcommand{\VCD}{\textsc{Vertex Cover Discovery}\xspace}

\newcommand{\FOCV}{\textsc{FO Cost-Value Decision}\xspace}

\newcommand{\FOVD}{\textsc{FO Value Discovery}\xspace}

\newcommand{\GK}{\textsc{Group Knapsack}}

\newcommand{\Ufam}{\mathcal{U}}

\newcommand{\Cc}{\mathscr{C}}

\providecommand{\Z}{\mathbb Z}

\newcommand{\val}{\operatorname{val}}
\newcommand{\cost}{\operatorname{cost}}

\providecommand{\cov}{\operatorname{cov}}

\providecommand{\cost}{\operatorname{cost}}

\providecommand{\Z}{\mathbb Z}

\renewcommand{\phi}{\varphi}

\title{FO Value Discovery and Partial Vertex Cover Discovery} 

\titlerunning{FO Value and Partial Vertex Cover Discovery} 

\author{Enna Gerhard}{University of Bremen, Germany}{}{gerhard@uni-bremen.de}{}
\author{Stephanie Maaz}{University of Waterloo, Canada}{}{smaaz@uwaterloo.ca}{}
\author{Pascale Schott}{University of Bremen, Germany}{}{schott@uni-bremen.de}{}
\author{Sebastian Siebertz}{University of Bremen, Germany}{}{sebastian.siebertz@uni-bremen.de}{}
\author{Jan Wodtke}{University of Bremen, Germany}{}{jan.wodtke@uni-bremen.de}{}
\authorrunning{E. Gerhard, S. Maaz, P. Schott, S. Siebertz, and J. Wodtke}

\Copyright{Enna Gerhard, Stephanie Maaz, Pascale Schott, Sebastian Siebertz, and Jan Wodtke}
\ccsdesc[500]{Theory of computation~Parameterized complexity and exact algorithms}
\ccsdesc[300]{Mathematics of computing~Graph theory}
\keywords{solution discovery, first-order logic, meta theorems, partial vertex cover}
\category{}
\relatedversion{}
\supplement{}
\funding{}
\acknowledgements{}

\keywords{solution discovery, first-order logic, meta theorems, partial vertex cover} 

\category{} 

\relatedversion{} 

\nolinenumbers 

\EventEditors{John Q. Open and Joan R. Access}
\EventNoEds{2}
\EventLongTitle{42nd Conference on Very Important Topics (CVIT 2016)}
\EventShortTitle{CVIT 2016}
\EventAcronym{CVIT}
\EventYear{2016}
\EventDate{December 24--27, 2016}
\EventLocation{Little Whinging, United Kingdom}
\EventLogo{}
\SeriesVolume{42}
\ArticleNo{23}

\begin{document}

\maketitle

\begin{abstract}
We study solution discovery in the token-sliding model from a logical and cost-value optimization perspective. 
In solution discovery, we are given a graph, an initial placement of $k$ tokens, and a movement budget $b$. 
The task is to find a reachable target configuration satisfying a prescribed condition.

Our results are inspired by \textsc{Partial Vertex Cover Discovery}, where the condition is that the~$k$ tokens cover at least $t$ edges of the input graph.
This objective is not merely a sum of independent occupied vertex contributions: each selected vertex contributes its degree, but edges with both endpoints selected have to be subtracted once. 
To capture this phenomenon, we introduce \textsc{FO Value Discovery}, an optimization problem in which the value of a selected tuple is given by unary vertex weights together with first-order definable correction terms.
This framework generalizes \textsc{Boolean FO Discovery} and contains \textsc{Partial Vertex Cover Discovery} as a quantifier-free special case.

We further generalize the setting to \textsc{FO Cost-Value Decision}, where vertices carry both costs and values, and the task is to decide whether there is a tuple whose first-order value expression reaches a prescribed value threshold while respecting a cost bound.
We prove a conditional meta-theorem based on two algorithmic ingredients: local FO cost-value decision and an anchored weighted multicolored distance-independence problem.
We then prove that these assumptions are satisfied in several restricted graph classes and fragments of first-order logic. 

Finally, we study the parameterized complexity of \textsc{Partial Vertex Cover Discovery} and \textsc{Vertex Cover Discovery}. As a consequence of the logical meta-theorems, we obtain fixed-parameter tractability of \textsc{Partial Vertex Cover Discovery} on several graph classes, including classes of locally bounded cliquewidth.
We also show that \textsc{Partial Vertex Cover Discovery} is \W[1]-hard parameterized by $k+b$ and fixed-parameter tractable on $d$-degenerate graphs parameterized by $k+d$. For \textsc{Vertex Cover Discovery}, we prove \NP-hardness on planar graphs, \W[1]-hardness parameterized by the clique cover number, even when a clique cover is supplied with the input, and \W[1]-hardness with respect to parameter cutwidth.
\end{abstract}

\section{Introduction}
\label{sec:introduction}

Many graph problems ask for a set (or tuple) of vertices with a prescribed property, such as being a vertex cover or an independent set.
In \emph{solution discovery}, one does not construct such a solution from scratch. 
Instead, the input already contains an initial placement of tokens, and the task is to decide whether the tokens can be moved, within a given budget, to a placement that satisfies the desired property. 
This captures situations in which a current undesired assignment or configuration is established, and only a limited number of local repair operations are allowed to reach a feasible solution.

The solution discovery viewpoint was formalized by Fellows et al.~\cite{fellows2025onsd} and is closely related to earlier work on minimizing movement to reach a desired configuration, including the work of Demaine et al.~\cite{DBLP:conf/soda/DemaineHMSOZ07}.~It has since been studied for polynomial-time solvable graph problems~\cite{GroblerMMMRSS24}, from the viewpoint of kernelization~\cite{GroblerMMNRS24}, through algorithmic meta-theorems~\cite{bousquet2025algorithmic}, for several classical vertex-subset problems~\cite{saito2026solution}, and in a two-graph model that separates feasibility from movement~\cite{von2026separating}. 
Solution discovery is also related to combinatorial reconfiguration~\cite{Nishimura18}. 
Unlike reconfiguration, solution discovery starts from an infeasible configuration and imposes feasibility only on the final configuration, not on the intermediate configurations. 

Existing formulations of solution discovery are mainly Boolean: a target placement either satisfies the prescribed property or it does not. Many natural discovery tasks, however, are quantitative. One may want to optimize the quality of the reachable target, and this quality need not decompose into independent vertex contributions. Instead, equality patterns, adjacency patterns, or more general first-order (\FO) definable patterns of the selected tuple may contribute bonuses, penalties, or render the tuple of chosen vertices forbidden. 

Our guiding example is \PVC. Given a graph and integers $k$ and~$t$, the question is whether there is a set of at most $k$ vertices covering at least $t$ edges. In the discovery version studied here, the number of tokens fixes the size of the final set. Every selected vertex contributes its degree, but an edge is counted twice if both its endpoints are selected. Hence, for a set $X\subseteq V(G)$, the number of edges covered by $X$ is
\begin{equation*}
\cov_G(X)=|{e\in E(G):e\cap X\neq\emptyset}|=\sum{v\in X}\deg_G(v)-|E(G[X])|.
\end{equation*}

Thus, the value of a chosen set is a sum of unary vertex weights, together with a correction term determined by the adjacency pattern of the tuple.
More generally, we also allow unary costs for the tuple coordinates. This leads to \FOCV.
A formula $\varphi(\bar x)$ with free variables $\bar x$ defines, in a graph $G$, the set $\{\bar v\in V(G)^{|\bar x|}:G\models\varphi(\bar v)\}$.
We write $\Z_{-\infty}$ for $\Z\cup\{-\infty\}$ and $\Z_{+\infty}$ for $\Z\cup\{+\infty\}$.

\begin{definition}[\FOCV]
\label{def:weighted-fo-cost-value-optimization}
An instance of \FOCV consists of a graph $G$, cost functions $c_1,\ldots,c_k\colon V(G)\to\Z_{+\infty}$, value functions $w_1,\ldots,w_k\colon V(G)\to\Z_{-\infty}$, first-order formulas $\varphi_1(\bar x),\ldots,\varphi_m(\bar x)$, where $\bar x=(x_1,\ldots,x_k)$, and corresponding constants $\alpha_1,\ldots,\alpha_m\in\Z_{-\infty}$. The number $k$ is called the \emph{arity} of the instance.

We assume that the formulas $\varphi_1,\ldots,\varphi_m$ define a partition of $V(G)^k$ in every graph~$G$. 
We are also given a cost bound $C\in\Z$ and a value
threshold $W\in\Z$.
For a tuple $\bar v=(v_1,\ldots,v_k)\in V(G)^k$, define
\begin{equation*}
\operatorname{cost}(\bar v)=\sum_{i=1}^k c_i(v_i), \qquad
\operatorname{val}(\bar v)=\sum_{i=1}^k w_i(v_i)+\alpha_j,
\end{equation*}
where $j$ is the unique index such that $G\models\varphi_j(\bar v)$. 

In this decision variant, the task is to decide whether there exists a tuple $\bar v\in V(G)^k$ with $\val(\bar v)\geq W$ and $\cost(\bar v)\leq C$.
In the optimization variant, the task is to compute a tuple of optimum value subject to the cost bound.
\end{definition}

Note that the partition assumption is notational only. 
Any finite family of formulas can be refined into the Boolean partition generated by all truth assignments to these formulas.
We call the family $(\phi_1,\alpha_1), \ldots, (\phi_m,\alpha_m)$ appearing in an instance of \FOCV an \emph{\FO value expression}.

\medskip
This framework contains Boolean first-order constraints in unweighted graphs. 
If $\psi(\bar x)$ is an FO formula, use the two cases $\psi(\bar x)$ and $\neg\psi(\bar x)$, assign correction $0$ to the first case and~$-\infty$ to the second, and set all vertex weights to zero. 
Tuples of value at least $0$ are then exactly the tuples satisfying $\psi$.
\medskip

It also subsumes the solution discovery variant, which we call \FOVD. 

\begin{definition}[\FOVD]
\label{def:FO-value-discovery}

Let $\Theta=(\Theta_k)_{k\ge 1}$ be a computable family of \FO value expressions.
An instance of \textsc{FO Value Discovery} consists of
\begin{enumerate}
    \item a graph $G$,
    \item an initial placement of $k$ tokens given by a tuple
    $\bar s=(s_1,\ldots,s_k)\in V(G)^k$,
    \item vertex value functions
    $w_1,\ldots,w_k\colon V(G)\to\Z_{-\infty}$,
    \item for every $i\in\{1,\ldots,k\}$ the movement cost function is defined as 
$c_i(v):=\dist_G(s_i,v)$,
    \item a movement budget $b\in\N$, and
    \item a value threshold $W\in\Z$.
\end{enumerate}

Using these cost functions together with the \FO value expression
$\Theta_k$, we obtain an instance of \FOCV.
The question hence  is whether there exists a tuple
$\bar t=(t_1,\ldots,t_k)\in V(G)^k$
such that $\cost(\bar t)\le b$
and $\val(\bar t)\ge W$.
Equivalently, we ask whether the initial token placement can be moved,
within total cost at most $b$, to a tuple whose value is at least $W$.
\end{definition}

The special case in which $\Theta_k$ consists only of the formula
$\phi_1(x_1,\ldots,x_k)\equiv \top$ and the correction value
$\alpha_1=0$ is called \textsc{Value Discovery}.
In this case every discovered tuple is feasible, and the task is only to
move the initial tokens within the given budget so as to obtain a tuple
whose value is at least the given threshold.
\medskip

As a special case we also obtain our motivating example \PVCD. 
For solution size $k$, set $w_i(v):=\deg_G(v)$ for all $i\in [k]$. 
The correction terms defined by $\phi_1,\ldots, \phi_m$ are the equality and adjacency patterns of $x_1,\ldots,x_k$, which are definable by atomic first-order formulas. 
Note that $m\leq f(k)$ for a function $f$. 
For an equality and adjacency pattern $\phi_j$ we assign the correction value $\alpha_j=-|\{\{i,j\}:1\le i<j\le k$, $E(x_i,x_j)\}|$.
To obtain the discovery version, we are furthermore given an initial tuple $\bar s=(s_1,\ldots, s_k)$ and the task is to move the tokens to a desired solution $\bar t=(t_1,\ldots, t_k)$. 
Moving a token along one edge costs one unit. 
We write $\dist_G(u,v)$ for graph distance, with value $\infty$ if $u$ and $v$ lie in different connected components. 
For the given tuple $\bar s=(s_1,\ldots, s_k)$ we assign the cost functions $c_i(v)=\dist_G(s_i,v)$ as the number of steps it takes to move token~$s_i$ to the vertex $v$. 
For the discovered tuple $\bar t$, 
\begin{equation*}
        \cost(\bar t)=\dist_G(\bar s,\bar t):=\sum_{i=1}^k\dist_G(s_i,t_i).
\end{equation*}

\medskip

\paragraph*{Contribution 1: logical meta-theorems for \FOCV and\\ \FOVD.}

We isolate two algorithmic ingredients that are sufficient for fixed-parameter tractability of \FOCV.
The first is a local version of \FOCV: for every fixed radius, we assume that \FOCV can be solved when all tuple entries are required to lie in a bounded-radius neighborhood of a given center. We allow the value expression to mention this center as an additional parameter, which is useful for the locality-based reduction. 

\begin{definition}[Efficient \textsc{Local} \FOCV]
\label{def:efficient-local-FOCV}
A graph class $\Cc$ admits \emph{efficient \textsc{Local} \FOCV} if, for every
$r\in\N$, the restriction of \FOCV to instances with the following additional
structure is fixed-parameter tractable, parameterized by $r$, the arity, and
the defining value expression:
\begin{itemize}
\item the input contains a distinguished center $a\in V(G)$;
\item every feasible tuple $\bar v$ is required to lie in $N_r^G[a]^k$;
\item the formulas in the value expression may use the center $a$ as an
additional parameter.
\end{itemize}
Equivalently, the value expression is of the form
$\{(\varphi_1(\bar x,z),\alpha_1),\ldots,
(\varphi_m(\bar x,z),\alpha_m)\}$,
where~$z$ is interpreted by the center $a$.
\end{definition}

The second ingredient is the global packing problem produced by the locality argument. After local candidates have been computed with their corresponding costs and values, one must choose one candidate for each local block so that the chosen vertices are pairwise far apart while the total cost and total value satisfy the prescribed bounds. We formalize this as an anchored weighted multicolored distance independence problem.

\begin{definition}[Efficient anchored weighted multicolored distance independence]
\label{def:efficient-anchored-wmdi}
A graph class~$\Cc$ admits \emph{efficient anchored weighted multicolored
distance-$r$ independence} if, for every radius~$r$, the following problem is
fixed-parameter tractable parameterized by $p+r$, where $p$ is the number of
colors.

The input consists of a graph $G\in\Cc$, pairwise disjoint candidate sets
$A_1,\ldots,A_p$, an anchor map
$a\colon A_1\cup\cdots\cup A_p\to V(G)$, cost and profit functions
$\kappa\colon A_1\cup\cdots\cup A_p\to\N,$
\mbox{$\nu\colon A_1\cup\cdots\cup A_p\to\Z$}, a profit threshold $T$, and a cost budget $B$ that is polynomially bounded in $|V(G)|$. The question is whether there are candidates $y_i\in A_i$ for all $i\in[p]$ such that $\dist_G(a(y_i),a(y_j))>r$ for all $i\ne j$, $\sum_{i=1}^p \kappa(y_i)\le B$, and
$\sum_{i=1}^p \nu(y_i)\ge T$.
\end{definition}

\begin{remark}
The ordinary vertex formulation is the special case where
$A_i\subseteq V(G)$ and $a(v)=v$.
The anchored formulation is only a mild extension of the ordinary vertex
formulation. Indeed, whenever the ordinary version is handled through the
usual FO model-checking machinery, the anchored version can be treated in
the same way: one adds suitable unary predicates to the vertices of~$G$ recording which candidates may use a given vertex as their anchor,
together with the relevant local cost and profit information. 
The \FO part
then selects the anchors and expresses the pairwise distance condition
$\dist_G(x_i,x_j)>r$, while the bounded number of selected colours and the
polynomially bounded budget allow the remaining cost-profit bookkeeping to
be handled by the standard dynamic-programming layer.

Thus, for the purposes of the graph classes considered here, the anchored
problem reduces straightforwardly to the ordinary vertex-based setting once
one allows these additional colors. We nevertheless state the assumption in the anchored form because it is the natural output of the locality reduction:
the same anchor vertex may give rise to several candidates with different
local costs and values. 
We note that the anchored weighted distance-independence problem is weaker than full \FO model checking. In principle, it may therefore be tractable on graph classes where full \FO model checking is not known, or is not expected, to be tractable. In the applications below, however, we do not derive any additional tractability
results from this distinction. 
We only use the fact that, for the graph classes under consideration, the relevant FO model-checking machinery is
available.
\end{remark}

\begin{restatable}{theorem}{conditionalfovaluediscovery}
\label{thm:conditional-FO-value-discovery}
Let $\Cc$ be a graph class that admits efficient \textsc{Local} \FOCV and efficient
anchored weighted multicolored distance independence.
Then, \FOCV is fixed-parameter tractable on $\Cc$, parameterized by the
arity $k$ and the defining value expression.
\end{restatable}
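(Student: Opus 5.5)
We will use Gaifman locality, following the standard approach to first-order model checking on sparse classes. Fix the value expression $(\varphi_1,\alpha_1),\ldots,(\varphi_m,\alpha_m)$ of arity $k$ and quantifier rank $q$.

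\textbf{Step 1: normal form.} By Gaifman's theorem there is a radius $r=r(q,k)$ such that the truth of each $\varphi_j(\bar v)$ is determined by two pieces of data. The first is the \emph{distance pattern}: the partition of $[k]$ into classes given by the connected components of the graph on $[k]$ with $i\sim i'$ iff $\dist_G(v_i,v_{i'})\le 2r$. The second is, for each class $I$, the $r$-local type of $\bar v_I$ in its $r$-neighbourhood, together with finitely many Boolean basic local sentences about $G$. We evaluate these sentences once up front, using the local-solver assumption with the arity-$1$ center formulation (guessing the witnesses of a basic local sentence is an instance of Local \FOCV plus a distance-independence check). After fixing their truth values, each $\varphi_j$ becomes a Boolean combination of local formulas $\psi(\bar x_I)$.

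\textbf{Step 2: enumerate and decompose.} We enumerate all partitions $\mathcal P=\{I_1,\ldots,I_p\}$ of $[k]$ together with an assignment of local types to blocks; there are $f(k,q)$ of these. For a fixed choice, $\val$ decomposes as a sum over blocks of the unary weights plus one correction constant $\alpha_j$, which is now determined. The cost decomposes as a sum over blocks.

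\textbf{Step 3: local tables.} Each block $I$ has diameter at most $2r(|I|-1)$, so all its vertices lie in $N_{r'}^G[a]$ for some anchor $a$ (for example $a=v_{\min I}$) with $r'=2rk$. For each vertex $a$ and each budget value $c\in\{0,\ldots,C\}$ (or a relevant polynomial range), we call the Local \FOCV solver with center $a$. We use the formula ``$x_{\min I}=z$, block $I$ is connected at scale $2r$, and has the prescribed local type,'' with cost bound $c$. By binary search on the threshold this gives the maximum achievable block value $\nu$. This produces candidates $y=(I,a,c)$ with $\kappa(y)=c$, profit $\nu(y)$, and anchor $a$.

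\textbf{Step 4: the main obstacle.} The main difficulty is enforcing separation and making the budget polynomial. Distinct blocks must be at distance greater than $2r$, but anchors at distance greater than $2r$ do not guarantee this. We would therefore require anchor distance greater than $R=2r+2r'$; this is sufficient, but it is not exhaustive. To handle it, we use the usual Gaifman-style iteration: if two blocks come closer than $R$ but remain more than $2r$ apart, then their union has diameter $O(kR)$, so a \emph{coarser} partition at a larger radius captures that tuple. We iterate the radius $k$ times, so that in some level the pattern is exactly characterized by ``anchors pairwise further than $R_\ell$,'' and the local formulas at that level also assert the separation inside merged blocks.

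For costs, we reduce $C$ to a polynomial range by clipping. Costs equal to $+\infty$ are dropped. If $C$ is huge we use the fact that the local solver accepts arbitrary integer bounds, so we only need the breakpoints of the local cost–value trade-off. If those are superpolynomial, we scale instead; in the discovery setting costs are distances, which are at most $|V(G)|$, so this is immediate.

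Finally, we feed $A_1,\ldots,A_p$, $\kappa$, $\nu$, the threshold $T=W-\alpha_j$, and the budget $B=C$ into the anchored weighted multicolored distance-$R$ independence oracle. We answer yes iff some enumerated pattern succeeds. Each step is FPT in $k$ and the expression.
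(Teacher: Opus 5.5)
Your proposal is correct and follows essentially the same route as the paper. Both use a Gaifman-type normal form with the side sentences fixed, local tables from \textsc{Local} \FOCV for every anchor and local budget, the $k$-level scale coarsening (the paper's $\sigma_\ell$, $D_\ell$, $R_\ell$ with a forest grouping blocks into clusters) that makes anchor separation both sound and complete, and a final anchored distance-independence call with threshold $W-\alpha_j$. Your evaluation of the basic local sentences through the two oracles even spells out a step the paper only asserts ``by assumption''.

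The cost handling needs tidying. \FOCV allows negative costs, while $\kappa$ must map into $\N$, so first shift every $c_i$ by its minimum, as in the paper's normalization observation. Then rely on the unary-encoding convention for a polynomially bounded budget rather than on ``scaling'', which would not preserve exactness.
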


This theorem is perhaps unsurprising to those familiar with \FO model checking. 
It is essentially the standard model-checking machinery based on the locality of first-order logic enriched with the optimization component. 
We see our contribution for \FOCV as mainly conceptual. 

We instantiate the theorem on several sparse and dense graph classes. For classes of locally structurally bounded expansion, we obtain the full \FOCV result. Here, every bounded-radius neighborhood belongs to a class of structurally bounded expansion. 
Classes of bounded expansion were introduced by Ne\v{s}et\v{r}il and Ossona de Mendez and form a robust notion of uniform sparsity~\cite{nevsetvril2008grad}.
They include classes excluding a fixed minor, and classes excluding a fixed topological minor.
Structurally bounded expansion, introduced by Gajarsk\'y et al.~\cite{gajarsky2020first}, is a dense analogue that includes natural dense
classes such as map graphs and fixed powers of bounded-expansion classes.

\begin{restatable}{theorem}{sbefocv}
\label{thm:sbe-FOCV}
Let $\Cc$ be a graph class with locally structurally bounded expansion.
Then \FOCV is fixed-parameter tractable on $\Cc$, parameterized by the
arity~$k$ and the defining value expression.
\end{restatable}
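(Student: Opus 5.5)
The plan is to apply Theorem~\ref{thm:conditional-FO-value-discovery}. It therefore suffices to show that a class $\Cc$ of locally structurally bounded expansion admits both ingredients. The first is efficient \textsc{Local} \FOCV. The second is efficient anchored weighted multicolored distance independence.

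\textbf{Local \FOCV.} Fix $r$. Given $G\in\Cc$, a center $a$, and a value expression $\{(\varphi_j(\bar x,z),\alpha_j)\}_j$, the whole task lives in $H:=G[N_r^G[a]]$, with $a$ marked by a unary predicate. By definition of locally structurally bounded expansion, the graphs $H$ obtained this way range over a class $\mathcal D_r$ of structurally bounded expansion.

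One subtlety is that the formulas are evaluated in $G$, not in $H$. Since every tuple lies in $N_r[a]$, Gaifman locality lets each $\varphi_j(\bar x,z)$ be rewritten as a Boolean combination of two kinds of formulas. The first are formulas local around $\bar x z$, whose radius is bounded in terms of $q=\mathrm{qr}(\varphi_j)$. The second are basic local sentences. The sentences can be evaluated once on $G$ using the FO model-checking algorithm for locally structurally bounded expansion classes, and then hardwired as constants. The local part is evaluated in a slightly larger ball $N_{r'}[a]$, with $r'$ depending on $r,k,q$, and that ball again lies in a structurally bounded expansion class. So we work in $H'=G[N_{r'}[a]]$, restricted to tuples in $N_r[a]$.

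In $H'$ we then use the known machinery for structurally bounded expansion. We compute in FPT time a bounded-expansion graph $H''$ and an FO transduction $\tau$ with $\tau(H'')=H'$, following Gajarsk\'y et al. We pull each $\varphi_j$ back to $H''$. By quantifier elimination on bounded expansion classes (low treedepth colourings, or the quantifier-elimination result of Dvo\v{r}\'ak--Kr\'al'--Thomas), each pulled-back formula becomes equivalent to a quantifier-free formula over an expanded structure with additional unary predicates and functions of bounded depth.

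Now the optimisation can be handled by dynamic programming. Costs and values are unary. We use a low treedepth colouring with $p(k)$ colours; any $k$-tuple lies in a union of $k\cdot(\text{depth})$ colour classes of bounded treedepth. For each such choice of classes we run a DP over the treedepth decomposition. The DP stores, for each partial assignment type (the quantifier-free type, of bounded size) and each cost value, the best achievable value. Because $W$ and $C$ are arbitrary integers, we index by cost only up to $C$ only when costs are polynomial. Otherwise we instead keep the Pareto frontier of (cost,value) pairs per type, which is also fine because the numbers involved are sums of $k$ unary terms. An alternative that avoids this issue is to enumerate, for each of the boundedly many types $\tau$ realised by the tuple, the optimisation of $\sum w_i - $ subject to $\sum c_i\le C$ as a DP over tree decompositions. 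I expect this step to be the main obstacle: combining quantifier elimination with simultaneous cost/value bookkeeping without blowing up beyond FPT.

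\textbf{Distance independence.} In the anchored problem, the condition $\dist(a(y_i),a(y_j))>r$ is FO-expressible. We enumerate candidates per anchor, keeping for each anchor vertex and colour $i$ only the Pareto-optimal (cost,profit) pairs, of which there are at most $B+1$ (polynomial). We guess the cost split $(B_1,\dots,B_p)$. This is too many guesses, so instead we argue as in the Remark. We reduce to the vertex formulation with unary predicates, then apply the FO model-checking-with-weights machinery on locally structurally bounded expansion classes via the same Gaifman/local-type DP. Vertices are grouped by local type and distance classes are handled by the standard splitter/locality argument, with a DP over the budget (polynomially bounded) and taking maxima of profit. With both ingredients established, Theorem~\ref{thm:conditional-FO-value-discovery} yields the claim.
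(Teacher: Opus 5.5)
There is a genuine gap. Your reduction to \Cref{thm:conditional-FO-value-discovery} and your treatment of \textsc{Local} \FOCV are sound, but the second ingredient, anchored weighted multicolored distance independence, is never established.

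This packing problem is exactly the global residue that \Cref{thm:conditional-FO-value-discovery} isolates. Appealing to ``the FO model-checking-with-weights machinery on locally structurally bounded expansion classes via the same Gaifman/local-type DP'' is therefore circular. A Gaifman-based treatment of a global weighted query ends up at precisely this problem. Moreover, global weighted tuple optimization on $\Cc$ would already give the theorem directly by \Cref{lem:tuple-optimization-implies-focv}.

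The concrete tools you mention do not close the gap either:
\begin{itemize}
\item Splitter games characterise nowhere dense classes, whereas classes of locally structurally bounded expansion are dense in general.
\item Grouping vertices by local type does not help once vertices of the same type carry different costs and profits.
\item The Frick--Grohe argument for Boolean distance independence (greedily find many pairwise far candidates, or else cover all candidates by boundedly many balls and solve locally) breaks down with costs and profits. A far-apart candidate found greedily need not have the required cost and profit. Swapping a solution candidate for a better nearby one can destroy far-apartness.
\end{itemize}
As you observe yourself, guessing the budget split only gives an \XP\ algorithm.

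The paper closes this step with a different tool. Classes of locally structurally bounded expansion are monadically stable. For such classes, \Cref{lem:monadic-stable-anchored-wmdi} rests on the efficiently computable quasi-bounded-size bounded-shrubdepth decompositions of \Cref{thm:alg-low-shrubdepth-covers-monadic-stable}. Any $p$ anchors lie together in one cluster of bounded shrubdepth, and there are at most $|V(G)|^{\varepsilon}$ clusters. Inside each cluster, the weighted tuple optimization of \Cref{lem:sbe-weighted-fo-cost-value-optimization} applies. To make the test $\dist_G>r$ absolute inside a cluster, it is safest to decompose $G^r$ rather than $G$. Its class is again monadically stable, and the condition becomes plain non-adjacency.

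For the local ingredient, your quantifier elimination followed by a DP over low-treedepth colourings is a legitimate substitute for Toru\'nczyk's max-plus aggregate-query theorem, which the paper uses in \Cref{lem:sbe-weighted-fo-cost-value-optimization}. Because maximisation is idempotent, no inclusion--exclusion is needed. You must only admit tuples whose relevant terms all evaluate inside the chosen colour classes. Your worry about arbitrary $W$ and $C$ is moot: after \Cref{lem:focv-cost-normalization} the budget is a unary natural number, so indexing by cost is polynomial.
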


The same result holds on graph classes of bounded cliquewidth.

\begin{restatable}{theorem}{cwfocv}
\label{thm:cw-FOCV}
\FOCV is fixed-parameter tractable parameterized by the arity $k$, the defining value expression, and the cliquewidth of the input
graph.
\end{restatable}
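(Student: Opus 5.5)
The plan is to prove Theorem~\ref{thm:cw-FOCV} directly, by dynamic programming over a clique-width expression, rather than through Theorem~\ref{thm:conditional-FO-value-discovery}. The first step is to compute a $(2^{\cw(G)+1}-1)$-expression for $G$ in FPT time using the algorithm of Oum and Seymour. This only blows up the width by a function of $\cw(G)$. The value expression has fixed formulas, so we may take $q$ to be the maximum quantifier rank among $\varphi_1,\dots,\varphi_m$.

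The second step reduces to a single formula at a time. For every $j\in[m]$ we decide whether some tuple $\bar v$ satisfies $G\models\varphi_j(\bar v)$, $\cost(\bar v)\le C$ and $\sum_i w_i(v_i)\ge W-\alpha_j$; we skip $j$ if $\alpha_j=-\infty$. To handle the weights, we mark the chosen vertices with fresh unary predicates $X_1,\dots,X_k$, each required to be a singleton. Then $\varphi_j(\bar x)$ becomes an $\MSO$ (indeed $\FO$) sentence $\psi_j$ over the expanded signature, with quantifier rank at most $q+k$. What we need is an optimization of a linear objective subject to a linear constraint over all singleton assignments satisfying $\psi_j$.

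The third step is a Courcelle--Makowsky--Rotics style dynamic program. By the Feferman--Vaught composition theorem for clique-width operations, the rank-$(q+k)$ type of a labelled graph with distinguished vertices is determined by three things: the types of the parts under disjoint union, the type of the part under relabelling, and the type of the part under edge insertion. Distinguished vertices are inherited by the part that contains them, and the number of types is bounded by a function of $q,k,\cw$. At each node $t$ of the expression, and for each subset $S\subseteq[k]$ of coordinates already placed and each type $\tau$, we store a Pareto table. Because costs may be huge integers, we index by value rather than by cost: for each type we keep the set of nondominated pairs (cost, value). Sizes are not polynomially bounded, so instead we use the following approach. The objective is a sum of $k$ independent coordinate contributions, and a coordinate is placed exactly once, at a leaf. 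Hence the table at $t$ maps $(S,\tau)$ to the set of achievable vectors $\big(\sum_{i\in S}c_i(v_i),\sum_{i\in S}w_i(v_i)\big)$, pruned to its Pareto frontier. A frontier can still be large, so this is the main obstacle.

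To handle it I would switch roles. Fix the target type, and observe that the existence question is a bicriteria problem. With $k$ fixed, however, I would guess the type of each single vertex $v_i$ within the final structure only through the leaf at which it is introduced. An alternative that avoids Pareto sets is as follows. Costs in \FOVD{} are distances, which are at most $|V(G)|$ or $+\infty$, so the cost budget can be taken as $C\le k|V(G)|$ without loss in the application. For general integer costs I would note that the theorem statement concerns FPT in $k$ and the expression, and restrict to polynomially bounded $C$ as in Definition~\ref{def:efficient-anchored-wmdi}. With that, the table entry at $(t,S,\tau,c)$ for $c\in\{0,\dots,C\}$ stores the maximum partial value. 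Leaves, relabelling and edge insertion update the types. A union combines two entries by taking the maximum over splits $S=S_1\sqcup S_2$ and $c=c_1+c_2$, in polynomial time. At the root we accept if some $\tau$ entailing $\psi_j$ and some $c\le C$ give value at least $W-\alpha_j$. The step I expect to need the most care is justifying the composition lemma with the distinguished singleton predicates and with unplaced coordinates. I would treat the unplaced coordinates by having $X_i$ empty until its vertex is chosen, so that the type is over the signature with all $X_i$ present.
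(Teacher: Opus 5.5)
Your proof is correct in substance, but it follows a more hands-on route than the paper. The paper proves \Cref{thm:cw-FOCV} modularly. \Cref{lem:cw-weighted-fo-cost-value-optimization} computes an Oum--Seymour expression, replaces $x_1,\ldots,x_k$ by singleton set variables, and invokes the LinEMSOL theorem of Courcelle, Makowsky and Rotics as a black box to maximize $\sum_i w_i$ subject to $\sum_i\gamma_i\le B$. \Cref{lem:tuple-optimization-implies-focv} then treats the cases $(\varphi_j,\alpha_j)$ one at a time, after the normalizations of \Cref{lem:focv-cost-normalization} and \Cref{lem:finite-threshold-tuple-reduction}.

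You instead unfold the black box into the Feferman--Vaught style dynamic program over the expression. Your table entries are indexed by the node, the set $S$ of placed coordinates, the rank-$(q+k)$ type, and the exact partial cost. What your route buys is transparency about the cost bound. That bound is not a linear objective but a knapsack-type side constraint, and the extra pseudo-polynomial cost index in your table is exactly what handles it. The paper folds this into its appeal to LinEMSOL without spelling it out. What the paper's route buys is brevity and reuse: the same tuple optimizer is plugged into \Cref{lem:local-neighborhood-optimization} to obtain the locally bounded cliquewidth result.

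Some points to tidy up:
\begin{itemize}
\item Justify the restriction to a polynomially bounded budget by the paper's standing convention that all numbers are encoded in unary, not by \Cref{def:efficient-anchored-wmdi} or by the discovery application. After shifting, the budget can be capped at the largest achievable finite total cost, which is then polynomial. This restriction is necessary rather than a loss of generality. With binary weights, already an edgeless graph, with a value expression forcing pairwise distinct coordinates, encodes $k$-\textsc{Subset Sum}, which is \W[1]-hard parameterized by $k$.
\item Costs lie in $\Z_{+\infty}$ and values in $\Z_{-\infty}$. Your index range $\{0,\ldots,C\}$ therefore presupposes the normalization of \Cref{lem:focv-cost-normalization}: subtract the minimum finite value of $c_i$ from $c_i$ and adjust $C$ accordingly.
\item Vertices with $c_i(v)=+\infty$ or $w_i(v)=-\infty$ must be forbidden for coordinate $i$ at the leaves.
\item At the leaf for a vertex $v$, allow an arbitrary set of coordinates to be placed simultaneously, since tuples may repeat vertices.
\item The Oum--Seymour algorithm yields a $(2^{3\cw(G)+2}-1)$-expression rather than a $(2^{\cw(G)+1}-1)$-expression. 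This does not matter for the argument.
\item Remove the abandoned Pareto-frontier discussion from the write-up.
\end{itemize}
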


We also obtain the corresponding local version. 
A graph class has locally bounded cliquewidth if, for every radius $r \in \N$, there is a bound on the cliquewidth of all radius-$r$ closed neighborhoods in graphs from the class.

\begin{restatable}{theorem}{localcwfocv}
\label{thm:local-cw-FOCV}
Let $\Cc$ be a graph class of locally bounded cliquewidth.
\FOCV is fixed-parameter tractable on $\Cc$, parameterized by the
arity $k$ and the defining value expression.
\end{restatable}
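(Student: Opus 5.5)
We derive the statement from Theorem~\ref{thm:conditional-FO-value-discovery}. This requires showing that every class $\Cc$ of locally bounded cliquewidth admits efficient \textsc{Local} \FOCV and efficient anchored weighted multicolored distance independence. Both ingredients will come from the bounded-cliquewidth case, Theorem~\ref{thm:cw-FOCV}. We apply that theorem either to a single ball $G[N_r^G[a]]$, or to a bounded-radius neighbourhood cover whose bags have bounded cliquewidth.

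\textbf{Local ingredient.} Fix $r$, an instance with center $a$, and formulas $\varphi_j(\bar x,z)$. We want to pass to the induced subgraph $H=G[N_r^G[a]]$, whose cliquewidth is at most $f_\Cc(r)$. The difficulty is that the formulas are evaluated in $G$, not in $H$. To handle this we use Gaifman locality. Each $\varphi_j(\bar x,z)$, restricted to tuples in $N_r[a]^k$, is equivalent to a Boolean combination of $\rho$-local formulas around $(\bar x,z)$ and basic local sentences. The local radius $\rho$ depends only on the formula. We will therefore work in the larger ball $H'=G[N_{r+\rho}^G[a]]$, which still has bounded cliquewidth $f_\Cc(r+\rho)$. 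Local formulas around $\bar x,z$ with all entries in $N_r[a]$ evaluate identically in $G$ and in $H'$, after relativizing to the appropriate radius. The basic local sentences are evaluated once on $G$ up front. This uses FO model checking on locally bounded cliquewidth, which is known to be FPT, or alternatively we handle them via the same cover argument below.

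To encode the center, we add a unary predicate marking $a$. We also set $c_i(v)=+\infty$ for $v\notin N_r[a]$, so that tuples are forced into the ball. Theorem~\ref{thm:cw-FOCV} holds for coloured graphs, since cliquewidth allows vertex labels. The rewritten value expression stays a partition, and its size depends only on the original expression. Theorem~\ref{thm:cw-FOCV} then decides the local instance in FPT time.

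\textbf{Global ingredient.} For anchored weighted multicolored distance-$r$ independence we proceed as in the Remark. First we build a neighbourhood cover, or else use the balls around each anchor directly. For each colour $i$ we replace the candidates sharing an anchor $u$ by the cost--profit Pareto frontier at $u$. Since $B$ is polynomially bounded, we keep, for every cost value $\beta\le B$, the best profit achievable. We then encode this information with polynomially many unary predicates $P_{i,\beta}$. Their number depends on $B$, so they cannot enter a formula directly. Instead we treat them through cost functions: we define an FOCV instance of arity $p$ in which the coordinate cost is $c_i(u)=\kappa$ and the value is $w_i(u)=\nu$. The difficulty is that several candidates share an anchor. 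To resolve this we keep, for each anchor, only the best profit for each cost, and guess nothing else. Alternatively, we split each anchor into pendant copies carrying distinct (cost, profit) pairs; this keeps the cliquewidth bounded and the distances shifted by an additive constant. The independence constraint $\dist(x_i,x_j)>r$ is an FO formula, so the resulting instance is an \FOCV instance with correction $0$ or $-\infty$. Anchors may be far apart in $G$, so we cannot simply restrict to one ball. At this point we invoke the standard Frick--Grohe / Dreier-style reduction for locally bounded cliquewidth, which we expect to hold directly in the weighted form.

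\textbf{Main obstacle.} We expect this global packing step to be the main difficulty. Tractability of distance independence on locally bounded cliquewidth goes through a Gaifman-style splitting into far-apart clusters plus a packing argument with $p$ colours. We must check that this argument survives the cost and profit bookkeeping. Our first attempt will be colour-coding/DP over the $p$ colours combined with the polynomial bound on $B$, following the classical argument.
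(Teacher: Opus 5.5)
Your overall architecture matches the paper's: the theorem follows from \Cref{thm:conditional-FO-value-discovery} once you supply a local and a global ingredient. Your local ingredient is essentially \Cref{lem:local-neighborhood-optimization} combined with \Cref{lem:local-cw-local-focv}. You pass to $G[N_{r+\rho}^G[a]]$ with a predicate for the center, decide the sentence part once on $G$ by \FO model checking on $\Cc$, and run the bounded-cliquewidth tuple optimizer (\Cref{lem:cw-weighted-fo-cost-value-optimization}) on the ball. Using classical Gaifman locality instead of the Grohe--Schweikardt form is fine, since all coordinates lie in one ball. One small correction: the rewritten expression ``stays a partition'' only on tuples inside $N_r^G[a]$. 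Add a catch-all case for the other tuples, which have infinite cost anyway.

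The genuine gap is the global ingredient, which you leave as an expectation. This is exactly where the standard argument breaks.
\begin{itemize}
\item The weighted Frick--Grohe argument is an exchange argument. You scan the candidates of colour $i$ by decreasing profit and greedily collect anchors pairwise at distance $>2r$. Once $p$ are collected, any later candidate can be swapped for one of them, because each of the other $p-1$ chosen anchors is within distance $r$ of at most one collected anchor.
\item This confines the relevant candidates to at most $p^2$ balls of radius $2r$. Their $r$-thickening has bounded cliquewidth, and LinEMSOL finishes.
\item With an independent cost coordinate, the swap is legal only if the replacement costs no more than the replaced candidate. So the argument must be run separately for each cost value $q\in\{0,\ldots,B\}$.
\item The relevant region then consists of up to $p(B+1)$ balls per colour. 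These can chain into regions of unbounded radius, on which locally bounded cliquewidth gives no bound. So there is no bounded-width structure on which to run the pseudo-polynomial budget DP.
\item Colour coding over the $p$ colours does nothing against this spatial spread.
\item Your pendant/Pareto preprocessing only reduces the anchored problem to the vertex version, which is not the hard part.
\end{itemize}
As written, the proof is therefore incomplete at exactly the step you flag. You need an argument that confines the relevant candidates to a number of balls bounded independently of $B$, or a different decomposition altogether.

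The paper handles this step with \Cref{lem:local-cw-anchored-wmdi}. Its only justification is that distance-$r$ independence is \FO-definable and hence covered by \Cref{lem:local-cw-local-focv}. But \textsc{Local} \FOCV confines all tuple entries to a single ball around the center, while distance independence concerns tuples whose entries are far apart. So citing that lemma would not by itself supply the missing packing argument either.
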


We also prove restricted results for \FOCV on monadically stable graph classes, which constitute the current boundary of efficient \FO model checking~\cite{DreierEMMPT23,DMS23}.
Our cost-value result applies to the existential fragment of first-order logic.

\begin{restatable}{theorem}{existentialfovaluemonadicstable}
\label{cor:existential-FO-value-monadically-stable}
Let $\Cc$ be a monadically stable graph class.
The existential fragment of \FOCV is fixed-parameter tractable on $\Cc$, parameterized by the arity $k$ and the defining value expression.
\end{restatable}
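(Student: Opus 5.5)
The plan is to apply Theorem~\ref{thm:conditional-FO-value-discovery}. This means checking, for a monadically stable class $\Cc$ and existential value expressions, the two ingredients: efficient \textsc{Local} \FOCV and efficient anchored weighted multicolored distance independence.

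The second ingredient needs only the \FO model-checking machinery for monadically stable classes~\cite{DreierEMMPT23,DMS23}. That machinery is robust under adding a bounded number of unary predicates, since monadic stability is preserved under colourings. Following the Remark after Definition~\ref{def:efficient-anchored-wmdi}, I would first discretize costs and profits. The budget $B$ is polynomial, so there are only polynomially many cost values per colour. Profits can be handled by guessing, for each colour, the profit level of the chosen candidate among the at most $|A_i|$ distinct values, or by a threshold-based sweep. For each guessed combination of cost and profit classes we mark anchor vertices with unary predicates. The remaining question is then whether there exist $x_1,\dots,x_p$ with $x_i$ in the marked set $P_i$ and $\dist(x_i,x_j)>r$. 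This is an \FO sentence whose length depends on $p$ and $r$. Enumerating combinations naively costs $n^{O(p)}$, which is too much. Instead I would keep, for each colour, only the Pareto frontier of (cost, profit) pairs at each anchor, and perform the cost bookkeeping by a DP over colours. The cost is bounded by $B=\mathrm{poly}(n)$, so the DP only needs to consult FO queries of bounded size. Making this genuinely fpt, rather than \XP, is the delicate part. The first thing I would try is the standard trick for distance-$r$ independence in monadically stable classes (flip-flatness / flipper-game based kernel-like arguments): a large set of candidates either contains a far-apart subfamily or is concentrated around a bounded number of vertices after few flips. This lets one replace each $A_i$ by a bounded-size representative set of optimal candidates per cost class.

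For the first ingredient, \textsc{Local} \FOCV with existential formulas, I would exploit that an existential formula $\varphi_j(\bar x,z)$ is equivalent to a disjunction of existentially quantified conjunctions of literals. The partition structure lets us guess which $\varphi_j$ holds, giving $f(k,|\Theta|)$ branches. Maximizing $\sum_i w_i(v_i)+\alpha_j$ under the cost bound then becomes a search for witnesses $\bar v,\bar y$ realizing a fixed quantifier-free type, with weights only on $\bar v$. Guessing the positive part of the type, we need to find a pattern in a colored graph of bounded size. Because negated literals also appear, this involves induced patterns, whose weighted detection in monadically stable classes should follow from the model-checking algorithm combined with a colour-coding or Pareto-compression step like the one above.

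The main obstacle is the interplay between additive numerical objectives and FO model checking. FO cannot sum weights, so every argument must reduce the weight bookkeeping to a bounded number of guessed threshold classes without an $n^{O(k)}$ blow-up. The reason to restrict to the existential fragment is that negation of the partition cells is not needed once we only require that the chosen cell's formula holds. Under the partition assumption, however, the other cells' formulas are then automatically false, so this restriction is only a convenience. I would expect most of the proof's length to go into the representative-set lemma for monadically stable classes.
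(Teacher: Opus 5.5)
Your route through \Cref{thm:conditional-FO-value-discovery} has a genuine gap. That theorem never passes the original cells $\varphi_j$ to the local oracle. Instead it passes the cluster formulas $\chi_{H,\tau,\ell,F,C}$ and their negations, built from the Gaifman normal form of \Cref{thm:refined-free-variable-gaifman}. These contain the pattern conjuncts $\dist_{>r}(x_u,x_v)$, which are universal. They also contain the $r$-local formulas $\psi^\tau_{H,B_h}$, which are in general arbitrary \FO formulas even when $\varphi_j$ is existential. So proving \textsc{Local} \FOCV only for existential expressions does not discharge the hypothesis. You would need it for full \FO, and then the theorem would give full \FOCV on monadically stable classes, which the paper explicitly leaves open.

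Beyond this mismatch, neither ingredient is actually proved. The local step rests on weighted induced-pattern detection that ``should follow'' from model checking plus colour-coding or Pareto compression. The anchored step rests on a conjectured flip-flatness representative-set lemma. Note that the paper obtains \Cref{lem:monadic-stable-anchored-wmdi} only as a consequence of the present theorem, so it is not available as an input. As you note yourself, \FO model checking cannot add weights, and you supply no mechanism that does.

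The missing idea is a cover argument that avoids locality altogether. Write $\varphi_j(\bar x)\equiv\exists\bar y\,\theta_j(\bar x,\bar y)$ with $\theta_j$ quantifier-free, and set $p=k+|\bar y|$. By \Cref{thm:alg-low-shrubdepth-covers-monadic-stable} (with, e.g., $\varepsilon=1$), compute polynomially many sets $U$ such that every set of at most $p$ vertices lies in some $U$ and each $G[U]$ has bounded shrubdepth. Each $G[U]$ therefore lies in a class of structurally bounded expansion, where weighted tuple optimization is available by \Cref{lem:sbe-weighted-fo-cost-value-optimization}. In each $G[U]$, optimize over $(\bar x,\bar y)$ subject to $\theta_j$, using the finite costs and values of \Cref{lem:finite-threshold-tuple-reduction} on $\bar x$ and cost and value zero on $\bar y$. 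Soundness holds because quantifier-free truth is absolute between $G[U]$ and $G$. Completeness holds because a solution together with its witnesses occupies at most $p$ vertices and hence sits inside one cluster.

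This also shows that your closing remark is mistaken. Existentiality is not a convenience; it is the very hypothesis that lets all witnesses be localized in one bounded-shrubdepth cluster and lets truth transfer from $G[U]$ up to $G$. With a universal quantifier, truth in $G[U]$ says nothing about $G$.
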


For full first-order logic on monadically stable classes, we currently obtain only the following Boolean discovery result. 
Here, \textsc{Boolean FO Discovery} denotes the special case of \FOVD in which all value functions are identically zero and the value expression only enforces an \FO-definable feasibility condition. The costs are the movement distances from the initial token placement.
Even though this is only a restricted result, it is the most interesting of the logic related results. 
The key idea is that the numerical information needed locally can be encoded by boundedly many unary colors. 
After the movement costs have been capped, we fix a center and a bounded offset pattern, add unary predicates for the corresponding distance layers, and then test the resulting local formulas by first-order model checking on a monadic expansion of the input graph. 
Thus the weighted information is absorbed into the formulas, while the global choice of local candidates is handled by anchored distance independence.

\begin{restatable}{theorem}{booleanfodiscoverymonadicstable}
\label{thm:boolean-fo-discovery-monadically-stable}
Let $\Cc$ be a monadically stable graph class.
\textsc{Boolean FO Discovery} is fixed-parameter tractable on $\Cc$, parameterized by the number $k$ of tokens and the defining value expression.
\end{restatable}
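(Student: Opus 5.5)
We plan to derive the theorem from Theorem~\ref{thm:conditional-FO-value-discovery}, or rather from its proof structure, verifying its two ingredients in a Boolean form that needs only \FO model checking on monadically stable classes. Here we may use that \FO model checking is fixed-parameter tractable on monadically stable classes and on their monadic (colored) expansions~\cite{DreierEMMPT23,DMS23}.

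\emph{Step 1: cap the costs.} In \textsc{Boolean FO Discovery} all values are zero, so the budget condition is $\sum_i \dist_G(s_i,t_i)\le b$. If $b$ is large compared to the parameters, we argue as follows. Any token that travels far contributes to the formula only through a bounded-radius local type. This is Gaifman locality of $\psi$ with some radius $r=r(\psi)$. We therefore classify tokens by the distance they move, capping at a threshold $D=f(k,\psi)$. Tokens moving at most $D$ stay in $N_D[s_i]$. Tokens moving more than $D$ are ``free'', and for them we only record the exact distance through a separate budget term.

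\emph{Step 2: guess a clustering.} We guess a partition of the tokens into clusters, meaning the connected components of the graph on $[k]$ in which $i\sim j$ whenever $\dist(t_i,t_j)\le 2r'$ for a suitable $r'$. In each cluster we fix a center token and a bounded offset pattern. For every center candidate $a$ and every token $i$ of the cluster, we add unary predicates $P_{i,d}$, with $d\le D$, marking the vertices at distance exactly $d$ from $s_i$. These predicates are finitely many, their number depending only on $k$ and $\psi$. Rather than encoding distances relative to the $s_i$ for arbitrary far positions, we color the initial positions $s_i$ with constants $S_i$. Distance layers $\dist(s_i,v)=d$ for $d\le D$ are then \FO-definable from $S_i$, so no coloring is even needed beyond marking the $s_i$.

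\emph{Step 3: the local formulas.} By the Gaifman normal form, $\psi$ is equivalent to a Boolean combination of local formulas on clusters together with basic local sentences. For each cluster $I$ and each cost vector $(d_i)_{i\in I}$ with entries in $\{0,\dots,D\}$, the condition ``there exist $(t_i)_{i\in I}$ near the center $a$ with the prescribed local type and $\dist(s_i,t_i)=d_i$'' is an \FO formula $\theta(z)$ in the expanded graph. For the free tokens the cost is $\dist(s_i,t_i)$ itself, which is unbounded. Here we use that a free token can be placed at any vertex of its type, and that the minimum distance to such a vertex is what matters. We handle this by minimizing, for each anchor, over a layered search via breadth-first search, which gives the local candidates with their costs. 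We thus obtain candidate sets $A_I$, each consisting of pairs of an anchor and a cost vector, with cost equal to the sum of the vector.

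\emph{Step 4: the global packing.} Choosing one candidate per cluster with pairwise anchor distance greater than $r''$ and total cost at most $b$ is exactly an instance of anchored weighted multicolored distance independence, with all profits zero. Following the Remark, we encode this as \FO model checking on an expansion of $G$ by unary predicates, one per (cluster, cost-vector) pair. We guess the cost vectors, of which there are boundedly many after capping, which leaves a pure \FO sentence. The free tokens are the main obstacle, since their costs are unbounded and so cannot be guessed. We intend to handle them by showing, via a standard argument, that an optimal free token can be taken at a nearest vertex of the required type. Its cost is then computable per type, and it can be subtracted from $b$ before the guessing.
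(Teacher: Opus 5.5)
There is a genuine gap, and it sits at the point the theorem is about: tokens that travel an unbounded distance. In Step~2 you explicitly decline to encode the layers $\{v:\dist_G(s_i,v)=d\}$ for large $d$. So Steps~3 and~4 only work for bounded costs, and the ``free'' tokens are patched by claiming that each can be placed on a nearest vertex of its required type, with that cost subtracted from $b$ before guessing. This claim is false. There is no per-token type: the distance pattern and the local formulas $\psi^\tau_{H,B_h}$ constrain a cluster jointly. Where a far-moving token may go therefore depends on where its cluster-mates go, and one cluster can mix far-moving and short-moving tokens whose sources are far apart. Distinct clusters also interact through the separation requirement. Take two singleton clusters whose tokens start at the same vertex $s$, far from every vertex satisfying the required local formula, and whose pattern says they must end more than $r$ apart. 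The nearest admissible vertex $u$ is nearest for both, so at most one token can use it, and whether the optimum uses $u$ at all depends on the second-best options of both. Costs of different clusters are not separable and cannot be pre-subtracted; resolving this joint choice is exactly what the weighted packing step is for.

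What you are missing is that unbounded costs never need to be guessed, only bounded offsets. Every coordinate $t_i$ of a cluster lies within distance $D_\ell$ of the cluster anchor $c$, so $\dist_G(s_i,t_i)=\dist_G(s_i,c)+\Delta_i$ with $|\Delta_i|\le D_\ell$. For each anchor $c$ and each of the boundedly many offset vectors, add for every $i$ in the cluster the unary predicate $L_{i,d_i}=\{v:\dist_G(s_i,v)=d_i\}$ with $d_i=\dist_G(s_i,c)+\Delta_i$, computed by breadth-first search rather than defined in \FO. This adds at most $k$ colors, monadic stability survives unary expansions, and one \FO model-checking call decides whether the cluster formula has a realization in exactly these layers. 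The result is candidates with exact, possibly large costs, at most $f(k,\psi)\cdot|V(G)|$ per cluster, and no bounded/free split is needed. After capping the budget at $\min\{b,k(|V(G)|-1)\}$, the packing is an anchored weighted independence instance with polynomially bounded costs. You should solve it with \Cref{lem:monadic-stable-anchored-wmdi} rather than by guessing cost vectors inside a single \FO sentence.

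A second, smaller problem is your single clustering scale $r'$. Components of the distance-$2r'$ graph can have radius about $(k-1)2r'$ while being only $2r'$ apart, but gluing independently chosen local realizations is sound only if anchors are more than $2D_\ell+r$ apart. You must guess, among boundedly many scales, one at which the clustering at scale $\sigma_\ell$ coincides with that at scale $2D_\ell+r$, as in the proof of \Cref{thm:conditional-FO-value-discovery}.
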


We expect that \FOCV is fixed-parameter tractable on monadically stable classes in full generality. Extending the known model-checking machinery to this setting appears to require an additional weighted optimization component, and we leave this problem for future work.

\paragraph*{Contribution 2: Partial Vertex Cover Discovery.}

We study the special cases of \PVCD and \VCD in more detail.
First, we show that \PVCD is $\W[1]$-hard when parameterized by $k+b$, where $k$ is the number of tokens and $b$ is the movement budget.
For the parameter $k+\cw(G)$, where $\cw(G)$ denotes the cliquewidth of the input graph, we obtain fixed-parameter tractability as an application of our general \FOCV result.
Our logical results imply fixed-parameter tractability of \PVCD for several graph classes, including locally bounded cliquewidth / treewidth.
For $d$-degenerate graphs, we give a dedicated algorithm whose main subroutine is the \textsc{Value Discovery} problem.
Finally, for \VCD, we show that it is \NP-hard on planar graphs, $\W[1]$-hard parameterized by the clique cover number, even when a clique cover is given as part of the input, and it remains $\W[1]$-hard with respect to cutwidth.

\paragraph*{Related logical optimization formalisms.}
Several formalisms extend first-order or monadic second-order logic with counting, weights, or aggregation, including LinEMSOL optimization~\cite{CourcelleMR00,CourcelleE12}, first-order logic with counting and cardinality conditions~\cite{KuskeSchweikardtLICS2017,GroheSchweikardtPODS2018}, weight-aggregation logics~\cite{vanBergeremSchweikardtCSL2021,vanBergeremSchweikardtCSL2025}, and aggregate-query languages~\cite{TorunczykPODS2020}. 
These formalisms allow aggregate values to be defined inside formulas or terms, for example, by counting witnesses or summing weights over definable sets.

Our setting is different: the formulas in an \FO value expression are ordinary first-order formulas, and the weights are used externally to evaluate a fixed tuple. This distinction is important because existing aggregation results either apply to more restrictive graph-classes or solve different algorithmic tasks, and they do not directly yield the weighted discovery results needed here.

\paragraph*{Organization.}
We first present the results for \PVCD and \VCD, including the dedicated algorithms and hardness reductions. We then develop the general \FOCV framework and prove the logical meta-theorems that underlie the class-based tractability results highlighted earlier. This order keeps the combinatorial applications in the foreground at the cost of a few forward references to the logical machinery. 
We additionally use section specific preliminaries.

\section{Preliminaries}
\label{sec:preliminaries}
We write $\N$ for the set of non-negative integers and $\Z$ for the set of integers.
For $n\in\N$, we let $[n]\coloneqq\{1,\ldots,n\}$.
All graphs are finite, simple, and undirected.
For a graph $G$, we write~$V(G)$ and $E(G)$ for its vertex set and edge set.
For vertices $u,v\in V(G)$, we write $\dist_G(u,v)$ for their distance in $G$, with $\dist_G(u,v)=\infty$ if $u$ and $v$ lie in distinct connected components.
For a vertex $v\in V(G)$ and an integer $r\in\N$, we write $N_r^G[v]\coloneqq\{u\in V(G):\dist_G(u,v)\le r\}$.
For a set $X\subseteq V(G)$, we write $G[X]$ for the subgraph of $G$ induced by $X$.

We use standard terminology from parameterized complexity and refer to~\cite{CyganFKLMPPS15} for background. 
Unless explicitly stated otherwise, all numerical weights in this paper are encoded in unary; hence, pseudo-polynomial dependence on these values is polynomial in the input size.

\section{Partial Vertex Cover and Vertex Cover Discovery}
\label{sec:pvc-vc}
In this section, we study \VCD and \PVCD.
Notice that whenever a positive result for \PVCD is proven, the corresponding result for \VCD follows immediately.
When we only know an argument for \VCD, we state it explicitly.
For lower bounds, analogously, hardness for \VCD already implies hardness for \PVCD by setting the coverage threshold to $|E(G)|$. 
In this section, we will use the following structural graph parameters. A graph is $d$-\emph{degenerate} if every subgraph has a vertex of degree at most~$d$.
Equivalently, its vertices admit an ordering in which each vertex has at most~$d$ later neighbors. The \emph{clique cover number} of $G$ is the minimum number of disjoint cliques whose union is~$V(G)$. The \emph{cutwidth} of $G$ is the minimum, over all linear orderings $v_1,\ldots,v_n$ of $V(G)$, of the maximum number of edges with one endpoint in ${v_1,\ldots,v_i}$ and the other in ${v_{i+1},\ldots,v_n}$.

Fellows et al.~\cite{fellows2025onsd} showed that \VCD is fixed-parameter tractable when parameterized by $k$, is $\W[1]$-hard when parameterized by $b$, and admits \XP\ algorithms parameterized by treewidth. 
Subsequent work showed that it admits a polynomial kernel of size $O(k^2)$, while also proving lower bounds for parameters such as pathwidth and feedback vertex set number~\cite{GroblerMMNRS24}. 
More recently, Saito et al.~\cite{saito2026solution} proved that the problem is in~\XP\ parameterized by cliquewidth, is \NP-complete on chordal graphs and on graphs of diameter two, and is polynomial-time solvable on split graphs. 
Algorithmic meta-theorems for solution discovery provide further context: it was shown that token-sliding solution discovery for MSO$_1$-definable properties, including vertex cover, is fixed-parameter tractable on graphs of bounded neighborhood diversity~\cite{bousquet2025algorithmic}.

Finally, \Cref{cor:existential-FO-value-monadically-stable} and  \Cref{thm:cw-FOCV} apply directly to \PVCD, since the value expression for \PVCD is quantifier-free.
Hence, \PVCD is fixed-parameter tractable parameterized by~$k$ on monadically stable graph classes and on graph classes of locally bounded cliquewidth.
By \Cref{thm:cw-FOCV}, the problem is fixed-parameter tractable when parameterized by $k+\cw(G)$. 

\subsection{In FPT on $d$-degenerate graphs with parameter k}
\label{subsec:pvc-degenerate}
We first prove the positive result for degenerate graphs. The proof extends a covering argument of Panolan et al.~\cite{PanolanSST22} for \textsc{Partial Vertex Cover} on $d$-degenerate graphs.

\begin{theorem}
\label{thm:pvcd-deg-new}
\PVCD is fixed-parameter tractable when parameterized by $k+d$.
\end{theorem}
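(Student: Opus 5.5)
We follow the color-coding idea of Panolan et al.~\cite{PanolanSST22}. The key observation is that in a $d$-degenerate graph, a target set $T=\{t_1,\ldots,t_k\}$ has $|E(G[T])|\le dk$. Hence its coverage $\sum_i\deg(t_i)-|E(G[T])|$ differs from the pure degree sum by at most $dk$. We distinguish two cases.

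\emph{Case 1: the optimum is nearly independent.} Suppose there is a feasible target $T$ that is an independent set. Then we only need to find $k$ pairwise distinct, pairwise non-adjacent vertices $t_i$ with $\sum_i \dist(s_i,t_i)\le b$ and $\sum_i\deg(t_i)\ge t$. More generally, fix the graph $H$ on $[k]$ describing which pairs of targets are adjacent, with at most $2^{k^2}$ guesses. Then the requirement becomes $\sum_i \deg(t_i)\ge t+|E(H)|$, together with the condition that the actual adjacency pattern is $H$.

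To enforce the non-adjacency constraints, we use random separation (derandomized with universal sets). We color every vertex with one of $k$ colors, plus a special "blocker" color. Among the vertices outside $T$ that are neighbors of $T$, only those relevant to the constraints need to receive the blocker color. However, degree is unbounded, so we cannot afford to separate all neighbors of $T$. We use degeneracy instead. Orient $G$ acyclically so that every vertex has out-degree at most $d$. The edges inside $T$ are then the out-edges of the $t_i$ landing in $T$, and they touch at most $dk$ out-neighbors in total. We require the coloring to give $t_i$ color $i$ and to give all out-neighbors of $T$ that lie outside $T$ a "discard" color. This involves only $k+dk$ vertices, so a family of $(n,k+dk)$-universal colorings of size $f(k,d)\log n$ suffices.

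Given a good coloring, consider the candidates of color $i$ whose out-neighbors are not discarded. Among them, an edge $t_it_j$ exists iff the out-neighbor lists intersect appropriately, so the required pattern $H$ can be checked on out-neighborhoods. The point is that each $t_i$'s out-neighbors within $T$ are prescribed, and all its other out-neighbors carry the discard color. Hence adjacency between chosen vertices of different colors is determined entirely by the out-neighbor lists and the guessed $H$.

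\emph{Reduction to Value Discovery.} We can go further and avoid guessing $H$ altogether. For each vertex $v$ of color $i$, define the local value $w_i(v)=\deg(v)-|\{u\in N^{\out}(v): u \text{ not discarded}\}|$, and set $w_i(v)=-\infty$ if $v$ has the wrong color. Every edge inside $T$ is an out-edge of exactly one endpoint, and under a good coloring its head is non-discarded. So for the correct $T$, $\sum_i w_i(t_i)=\cov(T)$. For any other choice of colorful vertices, the sum is at most the true coverage, because we subtract at most the number of internal edges. Therefore the answer is yes iff some coloring yields a \textsc{Value Discovery} solution of value at least $t$. Colors are distinct, which forces distinct targets.

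\emph{Solving \textsc{Value Discovery}.} With separable costs $\dist(s_i,\cdot)$, distinct colors and a budget $b\le nk$ polynomial in $n$, we run a dynamic program over $i=1,\ldots,k$ and the budget spent so far, maximizing the total value. This takes polynomial time, after a BFS from each $s_i$.

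The main obstacle is verifying the soundness inequality, i.e.\ that no spurious colorful tuple overestimates coverage. One must check that a subtracted out-neighbor is counted only if it actually lies in the chosen set; otherwise the sum is a lower bound rather than an exact count. So we subtract only out-neighbors whose color is one of $[k]$, and we argue that the slack only helps completeness, never soundness. Under a good coloring, every out-neighbor of $t_i$ with a color in $[k]$ lies in $T$, so the sum is exact there, and elsewhere it remains a lower bound.
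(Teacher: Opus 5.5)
Your final algorithm is correct and is essentially the paper's proof. Both use a degeneracy ordering and a universal family over the at most $k+kd$ vertices consisting of the target set and its out-neighbors (the paper's later neighbors). Both use the value $\deg(v)$ minus the out-neighbors that are not discarded; this is the paper's $\val_U(v)=|N(v)\setminus(LN_\lambda(v)\cap U)|$, with your non-discarded colors playing the role of $U$. Both finish with a group-knapsack dynamic program. Your extra labels $1,\ldots,k$ only fix the token-to-target assignment and distinctness, so you can run that dynamic program directly instead of going through the Grobler et al.\ meta-theorem, and the guessing of $H$ is unnecessary.

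One slip: the soundness inequality holds because you subtract \emph{at least}, not at most, the number of internal edges. Every internal edge has its head in the chosen set, so that head is not discarded and gets subtracted. Your last paragraph states this correctly.
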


\begin{proof}
Let $(G,S,b,t)$ be an instance of \textsc{Partial Vertex Cover Discovery}, where $|S|=k$, and let $G$ be $d$-degenerate.
Compute, in polynomial time, an ordering $\lambda \colon [n] \to V(G)$ such that, for every $1 \le i \le n$, the vertex $\lambda(i)$ has at most $d$ neighbors among $\lambda(i+1),\dots,\lambda(n)$~\cite{DBLP:journals/jacm/MatulaB83}.
For a vertex $v\in V(G)$, let $N(v)$ denote its open neighborhood, and let $LN_\lambda(v)$ denote the set of at most $d$ later neighbors of $v$ in $\lambda$.
Set $\ell \coloneqq \min\left\{n,\,k+kd\right\}$.

By the construction of universal sets of Naor, Schulman, and Srinivasan~\cite{NSS95}, there exists a family $\mathcal U\subseteq 2^{V(G)}$ such that for every set $A\subseteq V(G)$ of size at most $\ell$, the family $\{A\cap U : U\in\mathcal U\}$ contains all subsets of $A$. Moreover, such a family can be constructed with
$|\mathcal U| = 2^\ell \ell^{O(\log \ell)}\log n$
and can be listed in time $2^\ell \ell^{O(\log \ell)} n\log n$.

Fix a cluster $U\in \Ufam$.
For every vertex $v\in U$, define its value with respect to $U$ by $\val_U(v) \coloneqq |N(v)\setminus (LN_\lambda(v)\cap U)|$,
and for $v\notin U$ put $\val_U(v)\coloneqq0$.

For this fixed set $U$, we consider the unary-weighted \textsc{\FO Value Discovery} instance with vertex value function $\val_U$.
There is no additional target constraint on the final tuple, except that the tokens occupy distinct vertices. The objective is to maximize $\sum_{v\in T}\val_U(v)$ over all sets $T\subseteq V(G)$ with $|T|=|S|=k$ that are reachable from $S$ within budget $b$.

By the discovery meta-theorem of Grobler et al.~\cite{GroblerMMMRSS24}, this unary-weighted discovery problem is fixed-parameter tractable with respect to $k$ whenever the associated weighted rainbow optimization problem is as well.
In this optimization problem, the task is to select a set $T$ of total weight $b$ and size $k$ that maximizes $\val_U(T)\coloneqq\sum_{v\in T}\val_U(v)$ and such that the vertices in~$T$ have distinct colors.
Here, the weighted rainbow optimization problem reduces immediately to \GK: vertices become items, colors become groups, the movement costs become weights, and values become profits. 
Standard dynamic programming solves \GK\ in
polynomial time~\cite{Pinedo16}. 
Hence, for each cluster $U\in \Ufam$, we can compute a reachable set $\mathrm{sol}(U)$ maximizing total value.

It remains to relate a value to the number of covered edges, as done by Panolan et al.~\cite{PanolanSST22}. 
For every cluster $U$ and every set $A\subseteq U$, we have
$
\sum_{v\in A}\val_U(v) \le \cov_G(A)$.
Indeed, each term $\val_U(v)$ counts neighbors of $v$ except those later than $v$ in the ordering $\lambda$ and contained in~$U$. 
Consider any edge with at least one endpoint in $A$. If both endpoints lie in $A$, then the edge is counted only from the later endpoint, since the earlier endpoint does not count later neighbors in~$U$. 
If exactly one endpoint lies in $A$, then the edge is counted at most once, namely from its endpoint in $A$, and possibly not counted if the other endpoint is a later neighbor in~$U$. 
Hence, every edge in $E(A)$ contributes at most one to the sum, proving the inequality.

Now let $I$ be a partial vertex cover of size $k$. Define $\widetilde I \coloneqq I \cup \bigcup_{v\in I} LN_\lambda(v)$.
Since every vertex has at most $d$ later neighbors, we have $|\widetilde I| \le k+kd \le \ell$.
The universal-set property applied to $A=\widetilde I$ implies that every subset of $\widetilde I$ appears as
$\widetilde I\cap U$ for some $U\in\mathcal U$. 
Applying this to the subset $I\subseteq \widetilde I$, there exists
$U\in\mathcal U$ such that $\widetilde I\cap U = I$.
Hence, $I\subseteq U$ and no later neighbor of a vertex in $I$ lies in
$U\setminus I$.
For this cluster,
$
\sum_{v\in I} \val_U(v)=\cov_G(I)$.
Consequently, if $(G,S,b,t)$ is a yes-instance, then for some cluster $U$ we obtain $\sum_{v\in \mathrm{sol}(U)} \val_U(v) \ge t$.

By the upper bound displayed above, this implies that $\mathrm{sol}(U)$ covers at least $t$ edges. 
Conversely, if some $\mathrm{sol}(U)$ covers at least $t$ edges, then it is a witness for the
original discovery instance. 
Altogether, iterating over all clusters gives a fixed-parameter tractable algorithm with running time $2^\ell \ell^{O(\log \ell)} n^{O(1)}$, where $\ell$ depends only on $k+d$.
\end{proof}

\subsection{Hardness on planar graphs and graphs of bounded cutwidth}
We now present a parameterized reduction from \textsc{Circulating Orientation}, defined below, to \textsc{Vertex Cover Discovery}, and prove the following. 

\begin{theorem}\label{thm:VC-Planar-Cutwidth}
\textsc{Vertex Cover Discovery} is \W[1]-hard with respect to parameter cutwidth and \NP-hard on planar graphs.
\end{theorem}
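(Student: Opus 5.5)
The plan is a polynomial-time (and parameter-preserving) reduction from \textsc{Circulating Orientation}. I take it in its standard form: given a graph $H$ with edge weights $w\colon E(H)\to\N$ encoded in unary, decide whether $E(H)$ can be oriented so that at every vertex the total weight of incoming edges equals the total weight of outgoing edges. I will use that this problem is \W[1]-hard parameterized by the cutwidth of $H$ and \NP-hard on planar graphs; this is the only external input. Both claims of \Cref{thm:VC-Planar-Cutwidth} then follow from one construction, provided it is planarity-preserving and increases cutwidth only by an additive or multiplicative constant.

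\textbf{Edge gadgets.} Each edge $e=uv$ of weight $w(e)$ is replaced by a path-like gadget $P_e$ between a port at $u$ and a port at $v$. Tokens sit initially on every other vertex of $P_e$, arranged so that the initial configuration leaves uncovered edges at both ends of $P_e$. Covering $P_e$ then forces the tokens to \emph{shift} by one step either toward $u$ or toward $v$. The weight $w(e)$ is encoded by the length of the gadget, i.e.\ the number of tokens that have to shift, and not by parallel copies, so that cutwidth does not depend on the weights. The cost of covering $P_e$ is the same for both directions; call it $\beta_e$. The shift direction encodes the orientation of $e$: shifting toward $v$ corresponds to $u\to v$ and pushes one token out of $P_e$ into the gadget of $v$.

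\textbf{Vertex gadgets.} Each vertex $x$ of $H$ gets a gadget, for instance a cycle or a ladder subdivided according to the weights of the incident edges, to which the ports of all incident $P_e$ attach. The gadget is designed so that it can be covered exactly when the net token flow into it is zero, i.e.\ the incoming weight equals the outgoing weight. Intuitively, an imbalance leaves some edge of the gadget uncovered, or forces extra moves inside the gadget that the budget cannot afford. The budget is set to the tight value $b=\sum_e\beta_e+\sum_x\gamma_x$, where $\gamma_x$ is the forced internal cost at $x$.

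\textbf{Correctness.} The forward direction is a direct simulation: given a circulating orientation, shift every $P_e$ in its orientation direction and redistribute tokens inside each vertex gadget at cost $\gamma_x$. For the backward direction I would use a counting argument. Each edge gadget needs at least $\beta_e$ moves and each vertex gadget at least $\gamma_x$ moves, and these moves are disjoint. Tightness of $b$ then forces every gadget to be handled canonically, which yields a well-defined orientation. Any imbalance at some $x$ contradicts $x$'s gadget being covered.

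\textbf{Structural bounds and main obstacle.} Planarity of $H$ carries over, since gadgets are planar and attached in the cyclic order given by a planar embedding of $H$. For cutwidth, I order the gadgets along an optimal linear ordering of $H$, laying out each path gadget consecutively and each vertex gadget as a bounded-width band; a cut then crosses $O(1)$ edges per edge of $H$ crossing the corresponding cut of $H$. I expect the vertex gadget to be the main obstacle. It must enforce exact balance of weighted flow using only bounded-degree, planar, bounded-cutwidth structure, with no slack a cheating solution could exploit, for example by routing tokens through a vertex gadget or leaving $P_e$ half-shifted. My first attempt would be a single long cycle at $x$ whose length equals the weighted degree, with the ports of the incident edges interleaved along it. On this cycle, covering with the prescribed number of tokens requires exactly the original count, which I would verify by a parity and counting argument along the cycle.
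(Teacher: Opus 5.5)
There are two genuine gaps, and the second is independent of the first.

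\textbf{The source hardness you assume is false.} You rely on \textsc{Circulating Orientation} being \W[1]-hard parameterized by the cutwidth of $H$, but it is fixed-parameter tractable in that parameter. A graph of cutwidth $c$ has maximum degree at most $2c$. So consider a left-to-right dynamic program over an optimal linear ordering whose state is the orientation of the at most $c$ edges crossing the current cut. It can check the balance of each vertex when it is passed, because all incident edges of that vertex lie in the two adjacent cuts. This runs in time $2^{O(c)}\cdot n$, even with binary weights. The hardness that actually exists~\cite{DBLP:conf/gd/JansenKKLMS23} is for \emph{pathwidth}, on planar instances with polynomially bounded weights, and there vertices have unbounded degree. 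The cutwidth half of the theorem therefore requires turning bounded pathwidth of the source into bounded cutwidth of $G'$. Your layout argument, which charges $O(1)$ per edge of $H$ crossing a cut of $H$, gives nothing here, since one high-degree vertex already puts unboundedly many edges into a cut. The paper handles this in three steps. It replaces each vertex by a caterpillar whose attachment points follow the order in which the incident edges occur along the path decomposition, and lays the caterpillar out along the vertex's interval. It then subdivides attachment edges so both endpoints stay equidistant from each gadget, which keeps the budget tight. As a result, every cut crosses at most $|B_j|+7$ edges.

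\textbf{Your gadgets do not transmit the weights.} As you describe $P_e$, shifting it toward $v$ moves \emph{one} token across, whatever $w(e)$ is. The length of $P_e$ only affects $\beta_e$, which you deliberately make direction-independent, so $w(e)$ has no influence on feasibility. A vertex gadget that enforces zero net token flow then enforces unweighted in-degree $=$ out-degree, that is, an Eulerian orientation. Such an orientation exists whenever all degrees are even, so you would be reducing from a polynomial-time problem. The interleaved-port cycle cannot repair this, because tokens are indistinguishable and a flow-balance gadget only sees counts; each edge must move $w(e)$ tokens.

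The paper does exactly this, and as a consequence it needs no balance gadget at all. Each vertex $v$ holds $\frac12W(v)+1$ tokens, one of which is pinned by a shadow leaf. The gadget of $e$, a cycle on $4w(e)$ vertices, must receive exactly $w(e)$ tokens, all from one endpoint's reservoir by budget tightness. Since $\sum_v\frac12W(v)=\sum_e w(e)$, every vertex must send out exactly half its weighted degree, which is precisely the circulation condition.
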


\begin{definition}[\textnormal{\textsc{Circulating Orientation}}]
Let $G=(V,E)$ be an undirected graph and let $w \colon E \to \mathbb{N}$ be an edge-weight function.
For an orientation $D$ of $G$, let $A(D)$ denote the corresponding set of directed arcs.
For a vertex $v\in V$, define the
\emph{weighted indegree} and \emph{weighted outdegree} of $v$ by
\[
    \indeg_w^D(v)\coloneqq\sum_{(u,v)\in A(D)} w(\{u,v\}) \qquad\text{ and }\qquad \out_w^D(v)\coloneqq\sum_{(v,u)\in A(D)} w(\{v,u\}) \text{, respectively.}
\]
A \emph{circulating orientation} of $(G,w)$ is an orientation $D$ of $G$
        such that
        \[
            \indeg_w^D(v)=\out_w^D(v)=\frac{1}{2}\sum_{e \ni v} w(e)
            \qquad\text{for every }v\in V.
        \]
The problem \textnormal{\textsc{Circulating Orientation}} asks whether a given weighted graph~$(G,w)$ admits a circulating orientation.
\end{definition}

\textsc{Circulating Orientation} is \W[1]-hard with respect to parameter pathwidth, even for planar instances and when every edge capacity is bounded by a polynomial in the input graph size~\cite{DBLP:conf/gd/JansenKKLMS23}.

We start the proof by describing the construction of a \textsc{Vertex Cover Discovery} instance when a \textsc{Circulating Orientation} instance $(G,w)$ is given.
Intuitively, we invert the direction of the problem.
Instead of viewing edges as contributing weight to their incident vertices, we decide, for each edge of $G$, which of its two endpoints must send tokens to the corresponding edge gadget.
The construction ensures that exactly one endpoint can send these tokens, thus the choice determines the orientation of the corresponding edge in the original \textsc{Circulating Orientation} instance. Figure~\ref{fig:co-reduction} shows an example of a source instance together with the corresponding reduced instance.

\begin{figure}
    \begin{tikzpicture}[baseline={(0,0)},mainvertex/.style={draw, circle, minimum height=1.7em},hiddenmainvertex/.style={minimum height=1.3em}, weight/.style={},
    pathvertex/.style={draw=darkgreen, fill=darkgreen, circle, minimum width=0.25em},token/.style={draw,rectangle, minimum height=0.25em,  minimum width=0.25em, fill=black, inner sep=0em},
    longpath/.style={dashed},
    edge/.style={draw},
    pathedge/.style={draw,very thick,darkgreen,overlay}
]
    \newcommand\yscale{0.6}
    \node[mainvertex,label=left:$v_1$] (v1) at (1,2.5*\yscale) {};

    \node[mainvertex, label=left:$v_2$] (v2) at (0,-2.5*\yscale) {};
    \node[mainvertex, label=left:$v_3$] (v3) at (2,-2.5*\yscale) {};

    \draw[edge] (v1) to node[left] {\textbf{\textcolor{darkgreen}{$2$}}} (v2);
    \draw[edge] (v1) to node[left] {\textbf{\textcolor{darkgreen}{$4$}}} node[right] {\textbf{\textcolor{darkgreen}{$e$}}} (v3);

\end{tikzpicture}
    \hspace{-20em}\hfill
    {\newcommand\token{\tikz{\node[token]{};}}
\begin{tikzpicture}[baseline={(0,0)},
    mainvertex/.style={draw, circle, minimum height=1.7em},
    shadowvertex/.style={draw=darkgray, circle, minimum width=0.25em},
    longpathvertex/.style={draw, circle, minimum width=0.25em},
    hiddenmainvertex/.style={minimum height=1.3em}, weight/.style={},
    pathvertex/.style={draw=darkgreen, thick, circle, minimum width=0.25em},
    onetokenvertex/.style={draw, thick, circle, minimum width=1.2em},
    simplevertex/.style={draw, thick, circle, minimum width=0.25em},
    token/.style={draw,rectangle, minimum height=0.25em,  minimum width=0.25em, fill=black, inner sep=0em},
    longpath/.style={dashed},
    shadowpath/.style={draw=darkgray},
    pathedge/.style={draw,thick,darkgreen,overlay},
    cut/.style={draw,thick,darkred,overlay},
    simplepath/.style={draw,thick,overlay},
    hidden/.style={draw=none}
]
    \newcommand\scalex{0.6}
    \newcommand\scaley{1}
    \newcommand\mainvertex[5][missing]{
        \node[mainvertex,label=left:#1] (v#2) at (#3*\scalex, #4*\scaley){};
        \tokens{v#2}{0.46em}{#5}
        \node[shadowvertex, label=right:\color{darkgray}#1$'$] (s#2) at (#3*\scalex+1.1*\scalex, #4*\scaley){};
        \draw[shadowpath] (v#2) -- (s#2);
    }
    \newcommand\activetoken[1]{
        \node[token] at (#1) {};
    }
    \newcommand\tokens[3]{
        \ifnum#3>0
            \foreach \tkn in {1, ..., \the\numexpr#3+1\relax} {
            \ifnum\tkn>1
            \node[token] (n\tkn) at ($(#1)+({(\tkn - 1) * 360 / (#3 +1)}:#2)$) {};
            \else
                \node[token,fill=darkgray,draw=darkgray] (n\tkn) at ($(#1)+({0}:#2)$) {};
            \fi
        }
        \fi
    }

    \newcommand{\connectmidpoints}[2][]{%
        \draw[smooth, tension=0.8, pathedge]  [#1]
          \foreach \first/\second [count=\i] in {#2} {
            \ifnum\i=1
                ($(\first)!0.5!(\second)$)
              \else
                  to ($(\first)!0.5!(\second)$)
              \fi
        };
    }

    \newcounter{xcoordc}
    \newcommand\nextx{\stepcounter{xcoordc}}
    \newcommand\xcoord{{\value{xcoordc}*\scalex}}
    \newcommand\ycoord{0}
    \newcommand{\accessnode}[1]{\the\numexpr (#1)/2\relax}
    \newcommand\gadget[5][]{
        \nextx

        \foreach \i in {1, ..., \the\numexpr#5*2\relax} {
            \nextx
            \node[pathvertex,#1] (w#2#3#4#2\i) at (\xcoord,\ycoord+\scaley) {};
            \node[pathvertex,#1] (w#2#3#4#3\i) at (\xcoord,\ycoord-\scaley) {};
       		\node[hidden] (x#2#3#4\i) at ({(\value{xcoordc}*\scalex + \scalex}, \ycoord+0.5*\scaley) {};
            \node[hidden] (y#2#3#4\i) at ({(\value{xcoordc}*\scalex + \scalex},\ycoord-0.5*\scaley) {};
            \node[hidden] (z#2#3#4\i) at ({(\value{xcoordc}*\scalex + 0.6*\scalex},\ycoord) {};
            \node[hidden] (s#2#3#4\i) at ({(\value{xcoordc}*\scalex + 1.4*\scalex},\ycoord) {};

        }
        \ifnum#5>1
        \foreach \i in {1, 3, ..., \the\numexpr#5*2-1\relax} {
            \newcommand\lastlongpathvertex{}
            
            \foreach \i in {2, 4, ..., \the\numexpr#5*2\relax} {
         		\node[onetokenvertex] (p#2#3#4\i) at (x#2#3#4\the\numexpr\i-1\relax) {};
         		\activetoken{p#2#3#4\i}
         		\node[onetokenvertex] (q#2#3#4\i) at (y#2#3#4\the\numexpr\i-1\relax) {};

                \activetoken{q#2#3#4\i}

                \ifnum\i>2

                \node[simplevertex] (p#2#3#4\the\numexpr\i-1\relax)  at (x#2#3#4\the\numexpr\i-2\relax) {};
                \node[simplevertex] (q#2#3#4\the\numexpr\i-1\relax)  at (y#2#3#4\the\numexpr\i-2\relax) {};
                \node[onetokenvertex] (r#2#3#4\the\numexpr\i-1\relax)  at (z#2#3#4\the\numexpr\i-2\relax) {};
                \node[shadowvertex] (rs#2#3#4\the\numexpr\i-1\relax)  at (s#2#3#4\the\numexpr\i-2\relax) {};

                \tokens{r#2#3#4\the\numexpr\i-1\relax}{0.26em}{1}

                \draw[simplepath] (p#2#3#4\i) -- (p#2#3#4\the\numexpr\i-1\relax);
                \draw[simplepath] (p#2#3#4\the\numexpr\i-1\relax) -- (p#2#3#4\the\numexpr\i-2\relax);
                \draw[simplepath] (q#2#3#4\i) -- (q#2#3#4\the\numexpr\i-1\relax);
                \draw[simplepath] (q#2#3#4\the\numexpr\i-1\relax) -- (q#2#3#4\the\numexpr\i-2\relax);
                \draw[simplepath] (r#2#3#4\the\numexpr\i-1\relax) -- (p#2#3#4\the\numexpr\i-1\relax);
                \draw[simplepath] (r#2#3#4\the\numexpr\i-1\relax) -- (q#2#3#4\the\numexpr\i-1\relax);
                \draw[simplepath] (r#2#3#4\the\numexpr\i-1\relax) -- (rs#2#3#4\the\numexpr\i-1\relax);

                \fi

                \draw[simplepath] (p#2#3#4\i) -- (q#2#3#4\i);
                \draw[simplepath] (p#2#3#4\i) -- (w#2#3#4#2\i);

                \draw[simplepath] (q#2#3#4\i) -- (w#2#3#4#3\i);
                
            }
        }
        \fi
            \draw[longpath] (v#2) -- (w#2#3#4#21);
			\draw[longpath] (v#3) -- (w#2#3#4#31);

        \foreach \i in {1, ..., \the\numexpr#5*2-1\relax} {
            \draw[pathedge] (w#2#3#4#2\i) -- (w#2#3#4#2\the\numexpr\i+1\relax);
            \draw[pathedge] (w#2#3#4#3\i) -- (w#2#3#4#3\the\numexpr\i+1\relax);
        }

        \draw[pathedge] (w#2#3#4#21) to (w#2#3#4#31);
        \draw[pathedge] (w#2#3#4#2\the\numexpr#5*2\relax) to [bend left=65,looseness=1.3] (w#2#3#4#3\the\numexpr#5*2\relax);
    }

    \mainvertex[$v_1$]{1}{4.5}{2.5}{3};
    \mainvertex[$v_2$]{2}{2}{-2.5}{1};
    \mainvertex[$v_3$]{3}{7}{-2.5}{2};

    \gadget{1}{2}{a}{2}
    \gadget{1}{3}{b}{4}

    \node[label=\color{darkgreen}$p_4^{{{e}}}$] at (w13b17) {};
    \node[label=\color{darkgreen}$\bar p_{4}^{{{e}}}$] at (w13b18) {};
    \node[label=right:$s_{4}^{{{e}}}$\vphantom{$t_4^e$}] at (x13b7) {};
    \node[label=right:$t_{4}^{{{e}}}$] at (y13b7) {};

    \node[label=below:\color{darkgreen}\vphantom{$d$}$q_{4}^{{{e}}}$] at (w13b37) {};
    \node[label=below:\color{darkgreen}$\bar q_{4}^{{{e}}}$] at (w13b38) {};

    \draw[smooth, tension=0.8, cut] plot coordinates {(1*\scalex, -1.5*\scaley) (3*\scalex, -1.5*\scaley)
    (3.5*\scalex, -0.6*\scaley) (4*\scalex, -0.1*\scaley) (4.15*\scalex, 0.2*\scaley) (4.4*\scalex, 0.5*\scaley) (4.5*\scalex, 1*\scaley) (4.5*\scalex, 1.5*\scaley)};
\end{tikzpicture}}
    \caption{\textsc{Circulating Orientation} instance with equivalent \textsc{Vertex Cover Discovery} instance. Dotted edges may be replaced by subdivided paths, as long as the gadget remains equidistant.
    In red, one of the worst case cuts of size $7$ that are required for an edge gadget is given.}
    \label{fig:co-reduction}
\end{figure}

For a vertex $v_i\in V(G)$, let $W(v_i)\coloneqq\sum_{e \ni v_i} w(e)$ denote the total weight of its incident edges.
Additionally, we let $W\coloneqq\sum_{e\in E(G)} w(e)$.
If $W(v_i)$ is odd for some vertex~$v_i$, then the instance is trivially negative.
Edges of weight zero do not affect the constraints and may be deleted. After deleting such edges, isolated vertices may also be deleted. Hence, we may assume throughout that $W(v_i)$ is a positive even integer for every remaining vertex~$v_i$.
We construct an instance $(G',S,b)$ of \textsc{Vertex Cover Discovery} as follows.

We will argue later that the graph $G'$ can retain several properties, if present, of $G$.
Initially, we set $V(G') = V(G)$ and will subsequently add vertices to $G'$.
On every vertex $v_i\in V(G')$, we place $\frac{1}{2} W(v_i) + 1$ tokens.
For every vertex $v_i\in V(G')$, we add a shadow vertex $v_i'$ adjacent only to $v_i$; thus forcing at least one token to remain on $v_i$ to cover the edge ${v_i,v_i'}$.
For every edge $e = \edge{v_p}{v_q} \in E(G)$, we create an edge gadget $F_e$ as follows:
\begin{itemize}
    \item For every $i\in[w(e)]$, we create two choice vertices $p_i^e$ and $q_i^e$, two completion vertices $\bar p_i^e$ and $\bar q_i^e$, and two selector vertices $s_i^e$ and $t_i^e$. We write
    $P^e=\{p_i^e:i\in[w(e)]\}$, $Q^e=\{q_i^e:i\in[w(e)]\}$, and define $\bar P^e,\bar Q^e,S^e,T^e$ analogously. 
    In addition, for every $i\in[w(e)-1]$, we create a relay vertex $r_i^e$ and set $R^e=\{r_i^e:i\in[w(e)-1]\}$. 
    \item We connect the vertices of $P^e\cup \bar P^e$ and $Q^e\cup \bar Q^e$ alternately to form the two paths $\Pi_p^e=p_1^e,\bar p_1^e, p_2^e,\bar p_2^e, \dots, p_{w(e)}^e,\bar p_{w(e)}^e$ and $\Pi_q^e=q_1^e,\bar q_1^e, q_2^e, \bar q_2^e, \dots, q_{w(e)}^e, \bar q_{w(e)}^e$.
    \item We connect these two paths through the edges \edge{p_1^e}{q_1^e} and \edge{c_{w(e)}^e}{d_{w(e)}^e} such that they form a cycle containing $4w(e)$ vertices.
    These cycle vertices appear in green in Figure~\ref{fig:co-reduction}.
    \item For every $i \in [w(e)]$, we introduce the path $p_i^es_i^et_i^eq_i^e$ and place a token on both $s_i^e$ and $t_i^e$.
    \item For every $i \in [w(e)-1]$, we further introduce four vertices: the $i$th $p$-bridge vertex connected to $s_i^e$ and $s_{i+1}^e$, and the $i$th $q$-bridge vertex connected to $t_i^e$ and $t_{i+1}^e$.
    We also connect both vertices to the relay vertex $r_{i}^e$, which is only further connected to its own shadow vertex ${r_{i}^e}'$.
    \item We place two tokens on each $r_{i}$, with $i \in [w(e)-1]$, one of which is held in place by $r_{i}'$.
    \item We connect $p_1^e$ with $v_p$ and $q_1^e$ with $v_q$ and call the resulting edges attachment edges.
\end{itemize}

We next determine the movement budget contributed by the gadget $F_e$. The cycle of~$F_e$ will be covered in one of two symmetric ways. Either one token is moved from $v_p$ to each vertex of $P^e$, or one token is moved from $v_q$ to each vertex of $Q^e$.
Without loss of generality, consider the first case. By construction, the distances from $v_p$ to the vertices $p_1^e,p_2^e,\ldots,p_{w(e)}^e$
are $1,3,\ldots,2w(e)-1$.
Thus, moving one token from $v_p$ to each vertex of $P^e$ costs $w(e)^2$.
In addition, the gadget requires $w(e)$ one-step moves from either the selector set $S^e$ or the selector set $T^e$, and $w(e)-1$ one-step moves from the relay set $R^e$. Hence, the internal movement inside $F_e$ costs $w(e)+(w(e)-1)=2w(e)-1$.
Therefore, the total budget contribution of the edge gadget $F_e$ is $b_e \coloneqq w(e)^2+2w(e)-1$.

The only vertices of degree greater than four are the original vertices $v_i \in V(G')$.
We replace each with a binary caterpillar of depth $\deg_{G}(v_i)$.
This is an unbalanced binary tree with $v_i$ at the root, containing a single main branch on which all internal vertices of degree $3$ lie.
We place one token on each newly introduced internal vertex of a caterpillar and attach a shadow vertex to it.
For a caterpillar, each internal vertex is connected to a distinct edge gadget.
Suppose that the gadget $F_e$ is attached to the caterpillar for $v_p$ through an internal vertex at depth $\ell$, and to the caterpillar for $v_q$ through an internal vertex at depth $\ell' \le \ell$.
We then subdivide the attachment edge on the $v_q$-side $\ell-\ell'$ times.
After this subdivision, both endpoint reservoirs are at the same distance from the gadget.

We update the movement budget accordingly.
Every token that needs to move to the gadget $F_e$ needs to slide over each vertex of the caterpillar or edge subdivisions on its path to $v_p$ or $v_q$ once.
For a total distance $\ell$ to $v_p$ and $v_q$, we increase the budget by $\ell \cdot w(e)$.

We can now show that a vertex cover can be discovered if and only if a valid solution for \textsc{Circulating Orientation} exists.

Every edge, apart from edges in the cycles $C = \{C^e \mid e\in E(G)\}$, is covered initially.
Remaining to be covered are the disjoint cycles $C^e\in C$ with a total length of $4W$ vertices.
These cycles require at least $2W$ tokens to cover them (by having every other vertex with a token).
The internal structure of the edge gadgets has a surplus of $W$ tokens.
Using the tokens from $R^e$, at most $W$ tokens on the vertices of either $S^e$ or $T^e$ can be released (by moving them to bridge vertices). 
This consumes at least $2W-|E(G)|$ slides.
There are also exactly $W$ surplus tokens on the original vertices.
Therefore, all surplus $2W$ tokens must move to $C$ to obtain a full vertex cover.

For an edge gadget $F_e$, there are exactly two distinct final configurations for $C^e$: tokens are either on all of $P^e$ and $\bar Q^e$, and consequently, tokens are on all of $S^e$, or on all of $Q^e$ and~$\bar P^e$, and consequently, tokens are on all of $T^e$. 
The budget is sufficient to move tokens from either $v_p$ to the vertices of~$P^e$ or from $v_q$ to the vertices of $Q^e$, and perform the $2\cdot w(e) - 1$ slides to move both the tokens of either $S^e$ or $T^e$ and cover their thereby uncovered incident edges with the spare token from $R^e$.
As a result, for every gadget $F_e$, a solution can be found if all external tokens come either directly from $v_p$ or $v_q$.
It follows that every circulating orientation yields a feasible discovery sequence. 

For the other direction, we still need to rule out tokens moving from a vertex to an edge gadget of a non-incident edge in $G$ or an edge gadget $F_e$ for $e = \edge{v_p}{v_q}$ receiving tokens from both $v_p$ and $v_q$ in $G'$.
Both are ensured by having no spare tokens and no spare budget.
Without loss of generality, assume that $P^e$, $\bar Q^e$, and $S^e$ are full of tokens in the final configuration.
Any moving pattern besides the one described above will take an additional step, thus exceeding the total allowed budget of this gadget.
Similarly, any token coming from a different reservoir than $v_p$ or $v_q$ would have to traverse at least one additional edge.
Since all gadgets must be filled and there is no unused budget, such additional steps are impossible.
Hence, only one of the two reservoirs corresponding to the endpoints $v_p$ and $v_q$ can supply tokens to $F_e$.

For every edge $e=\edge{v_p}{v_q}$, the gadget $F_e$ has to absorb exactly $w(e)$ tokens from outside the gadget. By the budget-tightness argument, all these tokens must come either from the reservoir at $v_p$ or from the reservoir at $v_q$. We orient $e$ away from the endpoint that supplies these tokens.
For every vertex $v_i\in V(G)$, exactly $\frac12 W(v_i)$ of its reservoir tokens are allowed to move. Hence, the total weight of the edges oriented away from $v_i$ is exactly $\frac12 W(v_i)$. Since
\[
    W(v_i)=\sum_{e\ni v_i} w(e),
\]
the total weight of the edges oriented toward $v_i$ is also $\frac12 W(v_i)$. Thus, the orientation is circulating.

\begin{claim}\label{cl:planar}
If $G$ is planar, then so is $G'$.
\end{claim}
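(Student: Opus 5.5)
We argue that every step of the construction of $G'$ is planarity-preserving, starting from a fixed planar embedding of $G$. The construction has four kinds of local operations: (i) attaching pendant shadow vertices; (ii) replacing each edge $e=\edge{v_p}{v_q}$ by an edge gadget $F_e$ connected to the rest of the graph only through the two attachment edges at $p_1^e$ and $q_1^e$; (iii) replacing each original vertex $v_i$ by a binary caterpillar whose internal vertices each receive one attachment edge; and (iv) subdividing attachment edges. Pendant vertices and subdivisions clearly preserve planarity, so the real content is in (ii) and (iii).

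For (ii), we show that $F_e$ together with the two endpoints $v_p,v_q$ has a planar drawing in which $v_p$ and $v_q$ lie on the outer face. We draw the cycle $C^e$ with $\Pi_p^e$ on top and $\Pi_q^e$ at the bottom, closed by the edge $\edge{p_1^e}{q_1^e}$ on the left and the closing edge between $\bar p_{w(e)}^e$ and $\bar q_{w(e)}^e$ drawn around the right end (as in Figure~\ref{fig:co-reduction}). Inside the cycle we place the selector paths $p_i^es_i^et_i^eq_i^e$ as vertical chords, ordered left to right by $i$. Between consecutive chords $i$ and $i+1$, in the interior face they bound, we place the $p$-bridge vertex adjacent to $s_i^e,s_{i+1}^e$, the $q$-bridge vertex adjacent to $t_i^e,t_{i+1}^e$, and the relay $r_i^e$ adjacent to both bridge vertices, with the shadow ${r_i^e}'$ as a pendant. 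This face is bounded by $s_i^e,t_i^e,t_{i+1}^e,s_{i+1}^e$ together with parts of the cycle, so all these edges can be drawn inside it without crossings. The vertices $p_1^e$ and $q_1^e$ lie on the outer face, so the attachment edges to $v_p$ and $v_q$ can be routed outside. Thus $F_e$ plus its two attachment edges can be embedded in a thin region around the curve representing $e$: we replace the drawing of $e$ by this gadget drawing, and distinct edges of $G$ use disjoint regions.

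For (iii), we use the rotation system of the embedding. Around $v_i$ the incident edges $e_1,\dots,e_d$ appear in a cyclic order, and the caterpillar's internal vertices are assigned to the gadgets in this order along the spine. The spine is drawn as a small curve inside a disc around $v_i$, and each internal vertex has its attachment edge leaving on the same side of the spine, toward the outside of the disc. Since a caterpillar with all its leaves and attachments on one side of the spine is outerplanar-like, the attachment edges meet the boundary of the disc in the same cyclic order as before, and can be reconnected to the unchanged gadget drawings without crossings. The pendant shadows go inside the disc.

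The main obstacle is step (iii). The construction does not specify which internal vertex attaches to which gadget, so the claim requires choosing that assignment consistently with the rotation at $v_i$. We take this as part of the construction, since any assignment works for the budget argument after the equalizing subdivisions. A lesser point is the closing edge of $C^e$, written as $\edge{c_{w(e)}^e}{d_{w(e)}^e}$, which we read as $\edge{\bar p_{w(e)}^e}{\bar q_{w(e)}^e}$; it is routed around the right end of the gadget, on the outer side, exactly as in the figure.
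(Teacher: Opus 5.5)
Your proposal is correct and follows the paper's proof. Both arguments fix a planar embedding of $G$, replace each vertex by a caterpillar drawn in a small disc with its attachment points in the rotation order at that vertex, replace each edge by a planar gadget drawn in a thin neighbourhood of the edge with $p_1^e,q_1^e$ on its outer face, and note that subdivisions and pendant shadows preserve planarity. Your explicit drawing of $F_e$ (selector paths as parallel chords of $C^e$, with both bridge vertices, the relay and its shadow placed in the face between consecutive chords) supplies detail that the paper only asserts, and your rotation-consistent choice of the caterpillar assignment is exactly what the paper's proof stipulates.
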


\begin{claimproof}
    Fix a planar embedding of $G$. We replace each vertex $v$ with a caterpillar embedded inside a sufficiently small disk around $v$, with the attachment points for the incident edge gadgets placed in the same cyclic order as the corresponding edges around $v$.
    Since caterpillars are trees, this can be done without introducing crossings.
    Each original edge is then replaced by a planar edge gadget embedded inside its thin neighborhood; the two attachment vertices of the gadget lie on the outer face of the gadget. Finally, all additional subdivisions and shadow leaves preserve planarity. Hence, the resulting graph $G'$ is planar.
\end{claimproof}

\begin{claim}\label{cl:pwgivescw}
If $G$ has bounded pathwidth, then there exists a construction of $G'$ that has bounded cutwidth.
\end{claim}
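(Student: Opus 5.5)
We need a linear ordering of $V(G')$ in which every cut is crossed by a number of edges bounded by a function of $\operatorname{pw}(G)$. The freedom in ``there exists a construction'' lies in two choices: the order in which the edge gadgets are attached along each caterpillar, and the subdivision lengths of the attachment edges. Maximum degree cannot be the problem, since after the caterpillar replacement every vertex of $G'$ has degree at most four. The real obstacles are the long caterpillar spines and the long equalising subdivision paths, which could cross many cuts simultaneously.

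\emph{Step 1: order the edges of $G$.} Take a path decomposition $(X_1,\dots,X_s)$ of $G$ of width $w$. Assign each edge $e=\{v_p,v_q\}$ to the first bag containing both endpoints, and list the edges $e_1,\dots,e_M$ in order of their assigned bags (ties broken arbitrarily). Each vertex $v$ lives on an interval of bags. Hence at any position in this list, the only vertices with incident edges both earlier and later are those in the current bag, so there are at most $w+1$ of them.

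\emph{Step 2: build $G'$ compatibly.} For each $v$, attach its incident gadgets along its caterpillar spine in the order given by the edge list. The next part is where I expect the main difficulty. A gadget $F_e$ may be attached at very different depths $\ell$ and $\ell'$ on its two endpoint caterpillars, and the subdivided path of length $\ell-\ell'$ then crosses every cut in between. To avoid this, I would construct all caterpillars ``synchronously'': use a global depth counter equal to the position $j$ of the edge $e_j$ in the list. Each caterpillar spine of $v$ advances one step per global index $j$, including padding spine vertices of degree two when $v$ is not incident to $e_j$. Such a padding vertex should carry a token and a shadow leaf exactly as the internal vertices do, so that tokens and budget stay balanced.

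With this, $F_{e_j}$ is attached at depth $j$ on both caterpillars, so no equalising subdivision is needed. The remaining concern is that the caterpillar of $v$ may still need to run from the root, where its reservoir tokens sit, all the way to depth $j$. I would therefore place each root $v$ at the position where its interval in the path decomposition begins, and start its spine there. The correctness argument, which depends only on equidistance and on the budget update $\ell\cdot w(e)$, is unchanged. The padding does increase the budget, but only polynomially, since the edge weights are polynomially bounded.

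\emph{Step 3: the ordering and cut bound.} Lay out $G'$ left to right by index $j$. At position $j$, place the depth-$j$ spine vertices of all currently active caterpillars together with their shadows, and then all vertices of $F_{e_j}$ in the natural cut-friendly order. For the gadget, go along the cycle pairing $p_i,\bar p_i$ with $q_i,\bar q_i$, together with $s_i,t_i$, the bridge vertices, $r_i$ and $r_i'$. Inside a gadget every cut is then crossed by $O(1)$ edges; the red cut in the figure suggests $7$. A cut between consecutive blocks is crossed only by spine edges of active caterpillars, at most $w+1$ of them, plus $O(1)$ edges of the current gadget. Every cut therefore has size $O(w)$, which proves the claim. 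Together with the \W[1]-hardness of \textsc{Circulating Orientation} parameterized by pathwidth and the polynomial weights, this gives the cutwidth part of \Cref{thm:VC-Planar-Cutwidth}.
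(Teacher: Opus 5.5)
\textbf{There is a genuine gap in Step 2, exactly where you expected the difficulty: your two devices contradict each other.}

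If every spine advances one step per global index starting from index $1$, then $F_{e_j}$ does sit at depth $j$ on both caterpillars. But then the spine of a vertex whose interval starts late runs through every earlier cut. The number of spines crossing a cut becomes the number of vertices not yet forgotten, which is unbounded. Already for a path $G$ of pathwidth $1$ it is $\Theta(n)$ at the first cuts.

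If instead, as you propose, the root of $v$ sits at the start $s_v$ of its interval and the spine starts there, then the spine vertex at global index $j$ has depth $j-s_v$, up to an additive constant. For $e_j=\{v_p,v_q\}$, the gadget $F_{e_j}$ is then attached at depths $j-s_{v_p}$ and $j-s_{v_q}$, which differ in general. Equidistance fails, so the movement cost now depends on which endpoint supplies the tokens, and a single budget no longer works for every orientation. You are back to needing equalising subdivisions of length $|s_{v_p}-s_{v_q}|$, which is what you were trying to avoid. So the sentence ``the correctness argument \dots\ is unchanged'' does not hold for the construction as you describe it.

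\textbf{The missing observation is that your diagnosis of the subdivision is wrong.} A path whose vertices are consecutive in the linear order crosses every cut at most once, however long it is, and pendant shadow leaves placed right after their vertices add at most one more.

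This is what the paper uses:
\begin{itemize}
\item It keeps the original caterpillars, with attachment points in path-decomposition order.
\item It stretches each spine over its vertex's interval, so at most $|B_j|+1$ caterpillar edges cross any cut.
\item It aligns the two attachment vertices of each gadget.
\item It inserts $F_e$, together with its subdivided attachment path, as one contiguous block between those two vertices. This adds only a constant to each cut, giving the bound $\operatorname{pw}(G)+8$.
\end{itemize}

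With this observation your padding vertices are unnecessary. If you want to keep them, the initial padding segment of each spine (depths $1,\dots,s_v$) must likewise be laid out contiguously next to its root. That repair is exactly the same contiguous-path idea.
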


\begin{figure}\centering
    {\newcommand\token{\tikz{\node[token]{};}}

\newcommand\replacementedge[7][]{
    \draw (v#2#4) to (v#3#4);
}
\newcommand\cutlines{}

\newcommand\interval[4]{
    \foreach \i in {#3,...,#4} {
        \ifnum\i=#3
        \node[filledvertex,label={[label distance=0.4em,overlay]left:$v_{#1}$}] (v#1\i) at (\i,-#2) {};
        \else
            \node[filledvertex] (v#1\i) at (\i,-#2) {};
        \fi
    }

    \node[draw, rounded rectangle, fill=black,
        inner sep=0em,
        fit=(v#1#3) (v#1#4)] {};

}

\begin{tikzpicture}[baseline={(0,0)},
    mainvertex/.style={draw, circle, thick, minimum height=1.1em},
    longpathvertex/.style={draw, circle, minimum width=0.25em},
    hiddenmainvertex/.style={minimum height=1.3em}, weight/.style={},
    pathvertex/.style={draw=darkgreen, thick, circle, minimum width=0.25em},
    onetokenvertex/.style={draw, thick, circle, minimum width=1.2em},
    simplevertex/.style={draw, thick, circle, minimum width=0.3em, inner sep=0em},
    token/.style={draw,rectangle, minimum height=0.25em,  minimum width=0.25em, fill=black, inner sep=0em},
    longpath/.style={dashed},
    pathedge/.style={draw,thick,darkgreen,overlay},
    cut/.style={draw,thick,darkred,line cap=round},
    simplepath/.style={draw,ultra thick,overlay},
    hidden/.style={circle,draw=none,minimum width=0.5em,minimum height=0.5em,inner sep=0em},
    filledvertex/.style={draw,circle,fill=black,minimum width=0.5em,,minimum height=0.5em,inner sep=0em},
    smallfilledvertex/.style={draw,circle,fill=black,minimum width=0.4em,minimum height=0.4em,inner sep=0em},
    shadowvertex/.style={draw=darkgray,circle,fill=darkgray,minimum width=0.4em,minimum height=0.4em,inner sep=0em},
    shadowpath/.style={draw=darkgray},
]
\interval{a}{0}{0}{11}
\interval{b}{1}{0}{0}
\interval{c}{1}{2}{4}
\interval{d}{1}{9}{11}
\interval{e}{2}{4}{9}

\replacementedge{a}{b}{0}{0}{0}{0}
\replacementedge{a}{c}{2}{1}{0}{0}
\replacementedge{c}{e}{4}{1}{0}{0}
\replacementedge[minimum width=0em]{a}{e}{6}{1}{0}{0}
\replacementedge{d}{e}{9}{-2}{0}{0}
\replacementedge{a}{d}{11}{3}{0}{0}

\cutlines

\node[hidden] at (-1*2,0){};
\node[hidden] at (6*2,0){};

\end{tikzpicture}

}
    \caption{The layout of a \textsc{Circulating Orientation} instance in its interval representation}
    \label{fig:co-pathwidth}
\end{figure}

\begin{figure}\centering
    \input{vc-cutwidth}
    \caption{The structure of the \textsc{Vertex Cover Discovery} instance created from the interval representation in \cref{fig:co-pathwidth}.
    The cuts through the path representing an edge are shown in red}
    \label{fig:co-cutwidth}
\end{figure}

\begin{claimproof}
Let $\mathcal P=(B_1,\ldots,B_l)$ be a nice path decomposition of $G$ of width $\operatorname{pw}(G)$.
We can draw a graph $G$ with $\operatorname{pw}(G)-1$ overlapping intervals.
We provide an example of a graph in such an interval representation in~\cref{fig:co-pathwidth}.
We construct $G'$ with respect to $\mathcal P$ so that, for each vertex $v$, the attachment points on the caterpillar replacing $v$ appear in the left-to-right order in which the corresponding incident edges are encountered along $\mathcal P$.

Since every vertex of $G$ appears in a consecutive interval of bags, $\mathcal P$ induces an interval representation of the vertices of $G$.
To calculate the cutwidth of $G'$, this interval representation will be used.
Every vertex interval is replaced by the main branch of its caterpillar, with the original vertex, its shadow, and the first internal vertex placed consecutively, and thereafter each shadow leaf placed immediately after the corresponding caterpillar internal vertex.
At the end, internal vertices attached to the same edge gadget must be vertically aligned (i.e., longer edges can be drawn).

We define the cutwidth ordering by sweeping a vertical thin window through this template from left to right.
Whenever the window encounters original or internal vertices, we add them to the linear order in their top-to-bottom order along the sweep line, placing the shadow vertex of each such vertex immediately after it.
Note that at most one vertical edge between two caterpillars exists in any such window.
At any cutting position along the ordering thus far, the only edges from the caterpillars crossing the cut correspond to intervals of vertices that are active in the current bag $B_j$.
Hence, there are at most $|B_j|+1$ such crossing edges if the cut appears between an internal vertex and its shadow.
We give an example for the cuts in the interval between two edges in~\cref{fig:co-cutwidth} in red.
When a vertical path is encountered in the window, its vertices are added to the ordering between the two internal caterpillar vertices to which it attaches, following the shadow vertex corresponding to the first of these vertices.
A fixed internal ordering of the edge gadget contributes at most $7$ additional crossing edges on top of the edges of caterpillars.
An example for such a cut is shown in~\cref{fig:co-reduction} in red.
Therefore, every cut has a size of at most $|B_j|+7 \ge |B_j|+1$.
Since $|B_j|\leq \operatorname{pw}(G)+1$, the cutwidth of $G'$ is at most $\operatorname{pw}(G)+8$.
\end{claimproof}

Finally, we show that the construction takes polynomially many steps with respect to the size of $G$ whenever $W$ is bounded by a polynomial in $|V(G)|$.
Before introducing caterpillars, the construction adds $\mathcal O(W+|V(G)|+|E(G)|)$ vertices, and the caterpillars and subdivisions add only polynomially many further vertices.
The budget is polynomially bounded for the same reason.
Applying the reduction to the planar instances of \textsc{Circulating Orientation} that are \W[1]-hard parameterized by pathwidth and have polynomially bounded edge weights with~\cref{cl:planar}, we find that \textsc{Vertex Cover Discovery} is \NP-hard even on planar graphs. 
Given a path decomposition of width $p$ for $G$ (which can be computed in fixed-parameter tractable time with respect to $pw(G)$), the construction above with~\Cref{cl:pwgivescw} produces an equivalent \textsc{Vertex Cover Discovery} instance $G'$ of cutwidth at most $pw(G)+8$.
Therefore, \textsc{Vertex Cover Discovery} is \W[1]-hard parameterized by cutwidth.

\subsection{Hardness for clique cover number}

\begin{theorem}
\textsc{Vertex Cover Discovery} is $\W[1]$-hard when parameterized by the clique cover number of the input graph.
This remains true even if a clique cover of the claimed size is given as part of the input.
\end{theorem}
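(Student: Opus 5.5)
We reduce from \textsc{Multicolored Clique}, parameterized by the number $\kappa$ of colour classes, which is $\W[1]$-hard. The reduction builds an instance $(G',S,b)$ of \VCD in which $V(G')$ is covered by $f(\kappa)$ cliques, and it outputs this clique cover explicitly.

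The key structural fact we exploit is how a vertex cover meets a clique. A vertex cover of a clique $K$ contains all of $K$ except at most one vertex. So a clique of size $n$ holding $n-1$ tokens encodes the choice of exactly one uncovered ("missing") vertex. For every colour class $V_i=\{u^i_1,\ldots,u^i_n\}$ of the \textsc{Multicolored Clique} instance $(H,\kappa)$, we introduce a selection clique $K_i$ on copies of $V_i$. For every pair $i<j$, we introduce an edge clique $K_{ij}$ containing one vertex per edge of $H$ between $V_i$ and $V_j$. The missing vertex of $K_i$ selects a vertex of colour $i$, and the missing vertex of $K_{ij}$ selects an edge.

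Consistency is enforced by adding edges between cliques. We join the vertex $x_e\in K_{ij}$, for $e=u^i_a u^j_c$, to every copy of $u^i_{a'}$ in $K_i$ with $a'\neq a$ (and similarly on the $j$-side). Under this wiring, if $x_e$ is uncovered then all its neighbours must be in the cover, so the missing vertex of $K_i$ must be $u^i_a$. This gives $\kappa+\binom{\kappa}{2}$ cliques.

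Token movement is handled with a tight budget, in the same spirit as the earlier reduction. We place the initial tokens so that every clique initially contains all but one vertex, but the "wrong" vertex is missing. For instance, we add to each clique a dummy vertex that starts uncovered, together with a pendant neighbour that lies outside the clique. Since pendants would destroy the clique cover, we instead group all pendants into one extra clique $Z$ of forced vertices. Because the cliques are pairwise adjacent only via the consistency edges, distances are at most $2$ or $3$. A token leaving a clique's dummy-adjacent structure can reach any vertex of its own clique in one step, since cliques have diameter $1$. The target is a configuration in which each clique has exactly one missing vertex. The budget is then $b=\kappa+\binom{\kappa}{2}$ times a fixed per-clique cost, and every clique must perform exactly one relocation step.

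Correctness in the forward direction is direct: a multicolored clique $\{u^i_{a_i}\}$ gives the choice "missing $u^i_{a_i}$ in $K_i$, missing $x_{u^i_{a_i}u^j_{a_j}}$ in $K_{ij}$." Each clique then needs one slide, namely the token on the chosen vertex moving to the dummy, or the reverse. For the backward direction, a counting argument shows that each clique has exactly one missing vertex and that no token crosses between cliques, because a crossing would exceed the budget. The consistency edges then force the chosen edges to agree with the chosen vertices, so we obtain a multicolored clique.

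I expect the main obstacle to be designing the token and budget gadget so that it both keeps the clique cover number bounded by $f(\kappa)$ and rules out tokens migrating between cliques. The consistency edges create short paths between cliques, so a token could cheaply cover a vertex in a neighbouring clique. My first attempt would be to give each clique exactly $|K|-1$ tokens and use a global tight budget equal to the number of cliques. The hope is that any cross-clique move leaves some clique with two uncovered vertices, which is infeasible. If this fails, I would replace the consistency edges by paths of length two whose middle vertices are grouped into $O(\kappa^2)$ additional cliques, one per pair and side, so that the clique cover number stays bounded.
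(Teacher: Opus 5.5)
Your core encoding is sound: if $V(G')$ is partitioned into the cliques $K_i$ and $K_{ij}$, then an independent set with one vertex per clique is exactly a multicolored clique of $H$, thanks to your consistency edges. The gap is in the token and budget layer.

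The gadget you commit to breaks the reduction. Each $K\cup\{\delta_K\}$ starts with only $\delta_K$ empty, and the pendants in $Z$ are ``forced'', i.e.\ occupied. Then every neighbour of every $\delta_K$ carries a token, and every consistency edge joins two occupied vertices. So the initial placement is already a vertex cover, and every output instance is a yes-instance. If instead the vertices of $Z$ start empty, $Z$ needs $|Z|-1$ incoming tokens that no clique can spare, and every output is a no-instance.

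Your backward argument is also misjustified. A consistency edge joins $K_i$ and $K_{ij}$ directly, so a cross-clique slide costs one unit, exactly like an intra-clique slide. No tight budget can exclude it. Incidentally, the worry that motivated $Z$ was unfounded: $O(\kappa^2)$ pendants could each be a singleton clique.

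The missing idea is that you never need to control how tokens migrate, only what the final configuration looks like. Your ``first attempt'' works once you argue it by counting:
\begin{itemize}
\item Put tokens on all vertices except one arbitrary vertex of each clique, and set $b=\kappa+\binom{\kappa}{2}$. Dismiss an empty $K_{ij}$ as a trivial no-instance.
\item Forward: each clique needs at most one slide, namely the token on the chosen vertex sliding into the initially empty vertex of that clique.
\item Backward: the final token-free set is independent and has exactly as many vertices as there are cliques partitioning $V(G')$. Hence it contains exactly one vertex per clique, and your consistency edges finish the proof. The budget plays no role here.
\end{itemize}

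Written this way, your proof is correct but differs from the paper's. The paper reduces from \textsc{Multicolored Independent Set} by turning each colour class into a clique, so no edge cliques are needed. It starts with tokens on all original vertices, adds a clique $D$ of $\kappa$ universal vertices plus a vertex $u$ adjacent only to $D$, and sets the budget to $\kappa$. Universality forces $D$ to be filled, which exhausts the budget and leaves $\kappa$ empty original vertices, one per colour class. Your counting argument avoids that budget reasoning about $u$. The paper's choice of source problem avoids your edge cliques and consistency wiring.
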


\begin{proof}
We reduce from \textsc{Multicolored Independent Set}, which is $\W[1]$-hard parameterized by the number $\kappa$ of color classes~\cite{CyganFKLMPPS15}.
Let $V_1 \cup \dots \cup V_\kappa$ be the color classes of the graph $G$, and the question is whether $G$ contains an independent set consisting of exactly one vertex from each $V_i$.

We construct an instance $(G',S,b)$ of \VCD. First, we make each color class $V_i$ induce a clique, and we keep the edges between distinct color classes exactly as in $G$. 
We then add new vertices $d_1,\ldots,d_\kappa$ and $u$, and let $D\coloneqq\{d_1,\ldots,d_\kappa\}$.
We make $D$ into a clique and make every vertex of $D$ adjacent to every vertex outside $D$.
Thus, each vertex of $D$ is universal. 
The vertex $u$ has no neighbors outside $D$.
We set $S \coloneqq V(G')\setminus (D \cup \{u\})$ and $b \coloneqq \kappa$.

The clique cover number of $G'$ is at most $\kappa+2$, since $V_1,\dots,V_\kappa,D, \{u\}$
is a clique cover of $G'$.
Hence, the parameter of the constructed instance is bounded by a function of $\kappa$.

\medskip
\noindent
\emph{($\Rightarrow$)} Assume that $G$ contains a partitioned independent set $I = \{v_1,\dots,v_\kappa\}$
with $v_i \in V_i$ for every $i \in [\kappa]$. Since every $d_i$ is adjacent to every vertex outside $D$, we may slide the token on $v_i$ to $d_i$ for every $i\in[\kappa]$. This uses exactly $\kappa=b$ slides. 
The token-free set after these moves is exactly $I \cup \{u\}$.
The set $I$ is independent in $G'$: it contains exactly one vertex from each color class, so no edge added inside a color class is used, and no two chosen vertices are adjacent between color classes because $I$ was independent in $G$. 
Additionally, $I \cup \{u\}$ is independent in $G'$.
Therefore, the final token set $V(G') \setminus I \setminus \{u\}$ is a vertex cover of $G'$, and $(G',S,b)$ is a yes-instance of \VCD.

\medskip

\noindent
\emph{($\Leftarrow$)} Assume now that $(G',S,b)$ is a yes-instance of \textsc{Vertex Cover Discovery}.
$D \cup \{u\}$ is the initial set of token free vertices.
Since a token set is a vertex cover if and only if its complement is an independent set, there is an independent set $H\subseteq V(G')$ of size $\kappa+1$ that is reachable from the initial token-free set $D \cup \{u\}$ within budget $\kappa$.

We first claim that in a solution that uses the minimum number of token slides, no vertex of $D$ belongs to $H$. 
Indeed, every vertex of $D$ is universal, while $|H|=\kappa+1$; for $\kappa+1\ge 2$, an independent set of size $\kappa+1$ cannot contain a universal vertex. 
Thus, all $\kappa$ vertices of $D$ must be filled by tokens during the movement. 
Therefore, $\kappa$ initial tokens move from $V_1\cup\dots\cup V_\kappa$ into $D$ using a budget of $\kappa$ slides.
Thus, $H \setminus \{u\} \subseteq V_1\cup\dots\cup V_\kappa$
and $|H\setminus \{u\}|=\kappa$.

Since each $V_i$ induces a clique in $G'$, the independent set $H \setminus \{u\}$ contains at most one vertex from each color class. As there are $\kappa$ color classes and $|H \setminus \{u\}|=\kappa$, it contains exactly one vertex from each $V_i$. Finally, the edges between distinct color classes in $G'$ are exactly the edges of $G$. Since $H\setminus \{u\}$ is independent in $G'$, it is a multicolored independent set in $G$.
Since the construction is polynomial-time computable, and the clique cover number of $G'$ is at most $\kappa+2$, the theorem follows.
\end{proof}

\subsection{Hardness for parameter $k+b$}
\label{subsec:pvc-hardness}

We also show that the parameter $k+b$ alone is not sufficient for the fixed-parameter tractability of \PVCD.

\begin{theorem}
\label{thm:pvcd-hard-new}
\PVCD is $\W[1]$-hard parameterized by $k+b$.
\end{theorem}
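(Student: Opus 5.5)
We reduce from \textsc{Partial Vertex Cover}, which is $\W[1]$-hard parameterized by the solution size $k$ (Guo, Niedermeier, and Wernicke). Given an instance $(H,k,t)$, we build a \PVCD instance $(G,S,b,t')$ with $k$ tokens and budget $b=2k$. The idea is to place all tokens on a single hub vertex from which every vertex of $H$ is reachable in exactly two slides. The hub must cover few edges itself, and the auxiliary structure must not disturb the coverage count in an uncontrolled way.

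\textbf{Construction.} Start from $H$ with $n$ vertices. Add a new vertex $h$. For every $v\in V(H)$, add a private subdivision vertex $x_v$ adjacent to $h$ and to $v$. Place all $k$ tokens on $h$; since the paper's tuple model allows equal entries, they may share a vertex. Set $b\coloneqq 2k$. Every $v\in V(H)$ lies at distance exactly $2$ from $h$, so every choice of $k$ vertices of $H$ is reachable within the budget.

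\textbf{Counting.} For $X\subseteq V(H)$ with $|X|=k$ we get $\cov_G(X)=\cov_H(X)+k$, since each $v\in X$ additionally covers the edge $vx_v$. Set $t'\coloneqq t+k$. The forward direction is then immediate: moving each token to a distinct vertex of a partial vertex cover of $H$ uses exactly $2k$ slides.

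\textbf{Backward direction (main obstacle).} A final configuration may also use $h$ or some $x_v$, and $h$ has degree $n$. We neutralize $h$ by giving every original vertex a large weight while keeping the budget small. Replace each $v\in V(H)$ by attaching $M$ pendant leaves to it, with $M\coloneqq n+|E(H)|+1$. Then $\cov_G(X)=\cov_H(X)+k(M+1)$ for $X\subseteq V(H)$ of size $k$, and we set $t'\coloneqq t+k(M+1)$.

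Now suppose the final configuration contains $j<k$ distinct vertices of $V(H)$. Then it covers at most
\[
  j(M+1)+|E(H)|+n+2k \;<\; k(M+1)
\]
edges, where $n$ bounds the contribution of $h$, $2k$ bounds the contributions of the $x_v$'s, leaves, and repeated vertices, and $|E(H)|$ bounds the edges of $H$. The inequality holds once $M$ is chosen somewhat larger, say $M\coloneqq n+|E(H)|+2k+1$. Hence every solution occupies exactly $k$ distinct vertices of $H$, and these cover at least $t$ edges of $H$.

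\textbf{Conclusion.} The parameter of the constructed instance is $k+b=3k$, and the construction is polynomial. This gives $\W[1]$-hardness parameterized by $k+b$. The point to verify carefully is the inequality that rules out mixed configurations, including repeated tokens.
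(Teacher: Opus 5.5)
Your reduction is correct in substance but takes a different route from the paper. The paper reduces directly from \textsc{Multicolored Independent Set}. It turns each color class into a clique, pads every original vertex with private leaves to the common degree $(c-1)+\Delta$, puts one token in each class, and sets $b=\kappa$ and $t=(c-1+\Delta)\kappa$. Since $t$ equals the maximum possible degree sum of $\kappa$ vertices, any solution must consist of $\kappa$ pairwise non-adjacent original vertices, hence one per class, and movement inside the color cliques is trivial.

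You instead start from \PVC\ parameterized by $k$ (Guo, Niedermeier, Wernicke) and neutralize the budget with a hub at distance $2$ from all of $V(H)$. You use $M$ pendant leaves per original vertex as a penalty that makes every non-original position unprofitable. Your inequality $j(M+1)+|E(H)|+n+2k<k(M+1)$ for $j<k$ holds once $M\ge n+|E(H)|+2k$. Your route is more modular: it lifts a static $\W[1]$-hard selection problem to the discovery setting generically, with an uninformative start. The paper's route is self-contained, and its tight degree-sum argument needs no penalty gadget.

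One point needs repair. In this section the paper treats configurations as sets of $k$ distinct vertices: the start is a set $S$ with $|S|=k$, and the degenerate-graph algorithm requires distinct positions, as in standard token sliding. Stacking all tokens on $h$ is therefore not a legal instance there.

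The tuple reading you invoke is also delicate. The paper's value expression for \PVCD\ subtracts only adjacent pairs, so it would double-count repeated entries. For example, all tokens on one vertex of maximum $H$-degree could then reach $t'$ in a no-instance. Your argument silently uses ``coverage of the occupied set'' instead.

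The fix is easy. Use $k$ distinct hubs $h_1,\ldots,h_k$, each adjacent to every $x_v$, and enlarge $M$ so that $M+1>|E(H)|+\max\{n,k+1\}$; this absorbs the hub degrees and $\deg(x_v)=k+1$. Every token still reaches any $v\in V(H)$ in two slides through its own $x_v$. The same counting then goes through with distinct tokens.
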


\begin{proof}
We give a parameterized reduction from the \textsc{Multicolored Independent Set} problem.
Let $(G,\chi)$ be an instance where every color class has size exactly $c > 1$ and $\Delta$ be the maximum degree in $G$.
From $(G,\chi)$, we construct a graph $G'$ as follows. For each color class, we add all missing edges inside the class, turning it into a clique, while keeping all original edges of $G$. For every original vertex $v\in V(G)$, we add $\Delta-\deg_G(v)$ private leaves adjacent only to $v$. Thus, every original vertex has degree exactly $(c-1)+\Delta$ in $G'$, while every private leaf has degree $1$.
Let $S$ be any set containing exactly one original vertex from each color class. 
We use $S$ as the starting token set, so $k=|S|=\kappa$, and set $b\coloneqq\kappa$ and $t\coloneqq(c-1+\Delta)\kappa$.

Assume first that $G$ has a multicolored independent set $I$. 
Since $I$ contains exactly one original vertex from each color class, and since every color class is a clique in $G'$, each token can be moved from its start vertex in $S$ to the corresponding vertex of $I$ within its color clique. 
This uses at most $\kappa=b$ slides.
Every chosen vertex covers exactly $c-1$ clique edges inside its color class and $\Delta$ other edges.
Since $I$ is independent in $G$, and by construction, no edge is counted twice, and exactly $(c-1+\Delta)\kappa$ edges are covered.

Conversely, suppose that there exists a reachable token set $C$ of size $k$ that covers at least $t=(c-1+\Delta)\kappa$ edges in $G'$. 
Every original vertex has degree $(c-1)+\Delta$, and every private leaf has degree $1$. 
Choosing any $\kappa$ vertices will cover at most $(c-1+\Delta)\kappa$ edges, with equality only if every vertex of $C$ is an original vertex.
Therefore, every vertex of $C$ is original, and no covered edge is counted twice; equivalently, no two vertices of $C$ are adjacent in $G'$.
Since each color class is a clique in $G'$, the set $C$ contains at most one vertex from each color class. As $|C|=\kappa$ and there are exactly $\kappa$ color classes, it contains exactly one vertex from each color class. Finally, all original edges of $G$ are present in $G'$, so the independence of $C$ in $G'$ implies that $C$ is independent in $G$. Thus, $C$ is a multicolored independent set of $G$.

The reduction is clearly computable in polynomial time, and the parameter $k+b$ depends only on $\kappa$.
Therefore, \PVCD\ is $\W[1]$-hard parameterized by~$k+b$.
\end{proof}

\section{FO Cost-Value Decision and Discovery}
\label{sec:fo-value-discovery}
We now turn to the logical part of the paper. We begin by fixing the logical notation and recalling the locality theorem used in the proof of the meta-theorem. 
We then present two small numerical reductions.

\paragraph*{First-order and monadic second-order logic.}
We use first-order logic \FO and monadic second-order logic \MSO over the usual graph language with equality and adjacency relation $E$, possibly expanded by finitely many unary predicates, called colors. We refer to~\cite{Libkin04} for background. 
First-order variables are interpreted as vertices, and monadic second-order variables as vertex sets. 
For a formula $\varphi(\bar x)$ and a tuple $\bar a$ of vertices, we write $G\models\varphi(\bar a)$ in the usual way. For fixed $r\in\N$, we use $\dist_{\le r}(x,y)$ for an \FO formula expressing $\dist_G(x,y)\le r$, and we write $\dist_{>r}(x,y)$ for its negation.

We use the following convenient free-variable form of Gaifman locality due to Grohe and Schweikardt~\cite{GroheSchweikardt26}. 
Although their theorem is stated for $\FO^+$ with fixed-radius distance atoms, we state it in ordinary \FO, since these atoms are first-order definable for fixed radius.

For $k\ge 1$, let $\mathcal H_k$ be the set of all graphs with vertex set $[k]$. For a graph $G$, a tuple $\bar a=(a_1,\ldots,a_k)\in V(G)^k$, and $r\in\N$, let $H_G^r(\bar a)\in\mathcal H_k$ have edge $\{i,j\}$ exactly if $\dist_G(a_i,a_j)\le r$. We write $\operatorname{cc}(H)$ for the connected components of $H$.

\begin{theorem}[Graph reformulation of Theorem~7.1 of~\cite{GroheSchweikardt26}]
\label{thm:refined-free-variable-gaifman}
Let $\varphi(x_1,\ldots,x_k)$ be an \FO formula. There is an effectively computable radius $r\in\N$ such that, for every $H\in\mathcal H_k$, one can effectively compute a finite index set $\Lambda_H$ and, for every $\tau\in\Lambda_H$, 
\begin{itemize}
    \item a tuple-independent sentence $\xi_H^\tau$, and
    \item for every $C\in\operatorname{cc}(H)$, an $r$-local formula
$\psi_{H,C}^\tau(\bar x_C)$
\end{itemize}
such that, for every graph $G$ and every tuple $\bar a\in V(G)^k$, $G\models\varphi(\bar a)$ if and only if for $H\coloneqq H_G^r(\bar a)$, there is a $\tau\in\Lambda_H$ such that $G\models\xi_H^\tau$ and $G\models\psi_{H,C}^\tau(\bar a_C)$ for every $C\in\operatorname{cc}(H)$.
Moreover, for fixed $G$ and $\bar a$, this index $\tau$ is unique whenever it exists.
\end{theorem}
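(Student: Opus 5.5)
The plan is to obtain the statement directly from Theorem~7.1 of~\cite{GroheSchweikardt26} by translating from $\FO^+$ back into ordinary \FO, and then to refine the index set so that the uniqueness clause holds.

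\emph{Step 1: view $\varphi$ as an $\FO^+$ formula.} Every \FO formula over the graph signature (possibly with colors) is also an $\FO^+$ formula that simply uses no distance atoms. Applying Theorem~7.1 of~\cite{GroheSchweikardt26} to $\varphi(x_1,\ldots,x_k)$ yields an effectively computable radius $r$ and, for every pattern $H\in\mathcal H_k$, a finite family of pairs. Each pair consists of a sentence and a collection of local formulas, one for each connected component of $H$. These satisfy the stated equivalence: $G\models\varphi(\bar a)$ if and only if some pair is satisfied for the actual pattern $H_G^r(\bar a)$. This pattern is the ``distance type'' of $\bar a$, recording which coordinates lie within distance $r$ of each other. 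I will check that the notion of distance type in their theorem coincides with $H_G^r(\bar a)$. If their version records the pattern together with the condition on distances explicitly, I will absorb that condition into the choice of $H$.

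\emph{Step 2: eliminate distance atoms.} The formulas produced in Step~1 may contain atoms $\dist_{\le s}(y,z)$ for finitely many fixed $s$. Each such atom is replaced by its \FO definition, namely an existential formula quantifying over a path of length at most $s$. This turns sentences into sentences. The delicate point is that the resulting local formulas must still be $r$-local. If locality is defined by relativization to $N_r^G[\bar x_C]$, then a path witnessing $\dist_G(y,z)\le s$ may leave that neighborhood. To handle this, I will enlarge the radius to $r' := r+\max s$ and relativize the translated formula to $N_{r'}^G[\bar x_C]$. In that neighborhood all relevant distances up to $s$ between vertices of the original $r$-ball are realized inside. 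I then rename $r'$ as $r$ for the statement. Changing the radius also changes the pattern $H^{r}_G(\bar a)$, so I must enlarge the radius \emph{before} indexing by $H$. The remedy is to first compute the maximal atom radius $s$ occurring in the (computable) output and then re-apply their theorem with its radius parameter set large enough. I expect this radius bookkeeping to be the main technical obstacle.

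\emph{Step 3: uniqueness.} For fixed $G$ and $\bar a$, the pattern $H=H_G^r(\bar a)$ is determined, so only uniqueness of $\tau\in\Lambda_H$ needs attention. If it is not already guaranteed by~\cite{GroheSchweikardt26}, I will refine $\Lambda_H$ as follows. Let $\Sigma$ be the finite set of all sentences $\xi_H^\tau$ and all local formulas $\psi_{H,C}^\tau$ occurring for this $H$. Take as new indices the complete truth assignments to $\Sigma$. For such an assignment, the new sentence is the conjunction of the sentences in $\Sigma$ with their assigned truth values. The new local formula for the component $C$ is the conjunction of the formulas $\psi_{H,C}^{\tau'}$, over all $\tau'$, with their assigned truth values; negations of $r$-local formulas remain $r$-local. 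I keep exactly those assignments under which some old $\tau$ is fully satisfied. The new index sets are clearly mutually exclusive, because distinct complete assignments contradict each other. They also preserve the equivalence, because each tuple realizes exactly one assignment, and that assignment is kept precisely when the old condition holds. All steps are effective, which gives the effective computability claimed in the statement.
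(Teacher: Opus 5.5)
Your proposal follows the paper's route, which consists only of invoking Theorem~7.1 of~\cite{GroheSchweikardt26} and replacing the fixed-radius distance atoms by their \FO definitions. Your Step~3 refinement into complete truth assignments is a correct way to obtain the uniqueness clause, should the cited result not already supply it.

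Step~2 is unnecessary if $r$-locality is meant semantically (truth determined by the induced $r$-neighborhood): the translated formula is equivalent to the original on every graph and so remains $r$-local. If you really need radius $r+s$, do not re-apply the theorem, since that presupposes a free radius parameter and could produce larger atoms again. Instead, coarsen the pattern to $H_G^{r+s}(\bar a)$ and use, for each coarse component $C'$, the conjunction of $\delta_{H[C'],r}(\bar x_{C'})$ with the old local formulas of the radius-$r$ components inside $C'$.
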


The connected components of $H_G^r(\bar a)$ are the far-apart parts of the tuple. Indeed, if $C,D\in\operatorname{cc}(H_G^r(\bar a))$ are distinct, then $\dist_G(a_i,a_j)>r$ for all $i\in C$ and $j\in D$. Thus, after fixing the graph $H$ and the index $\tau$, the formula $\varphi$ is evaluated by one tuple-independent side condition and by local formulas on the far-apart components of $H$.

A unary expansion of a graph $G$ is obtained by adding finitely many unary predicates, interpreted as subsets of $V(G)$. 
A unary expansion of a graph class $\Cc$ is defined analogously, using a fixed finite set of additional unary predicates. Formulas over such expansions may use these predicates.

Value expressions are used only in the restricted sense of \Cref{def:weighted-fo-cost-value-optimization}: input weights are unary vertex weights, and the only non-unary contribution is the fixed constant attached to the logical case of the tuple. In particular, weights are not accessible to the logic.
\medskip

We now show the two numerical reductions.
The first reduction is a normalization of the cost constraint. 
We use it whenever an \FOCV instance is passed to one of the later optimization routines: it removes negative finite costs coordinatewise and replaces the original cost bound by a non-negative budget, while preserving exactly the tuples that satisfy the cost constraint. 
Thus all subsequent arguments may work with costs in $\N\cup\{+\infty\}$ and with a unary budget $B\in\N$, without repeating this numerical preprocessing.
The proof is immediate and is omitted.

\begin{observation}
\label{lem:focv-cost-normalization}
Let $G$ be a graph, let $c_1,\ldots,c_k\colon V(G)\to\Z_{+\infty}$ be cost functions, and let $C\in\Z$ be a cost bound.
In polynomial time, one can either correctly conclude that no tuple in~$V(G)^k$ has finite cost at most $C$, or compute cost functions $c'_1,\ldots,c'_k\colon V(G)\to\N\cup\{+\infty\}$ and a budget $B\in\N$ such that, for every tuple $\bar a=(a_1,\ldots,a_k)\in V(G)^k$, we have $\sum_{i=1}^k c_i(a_i)\le C$ if and only if $\sum_{i=1}^k c'_i(a_i)\le B$.
If all finite input numbers are encoded in unary, then polynomial dependence on $B$ is polynomial dependence on the input length.
\end{observation}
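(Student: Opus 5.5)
The plan is to handle the two potential failure modes separately: infinite costs and negative costs. First, for each coordinate $i$, I compute the finite part $F_i\coloneqq\{v\in V(G): c_i(v)<+\infty\}$. If some $F_i$ is empty, then every tuple has infinite cost in coordinate $i$. Hence no tuple has finite cost at most $C$, and we report this.

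Otherwise, set $m_i\coloneqq\min_{v\in F_i}c_i(v)\in\Z$, which is computable in linear time. Then define $c'_i(v)\coloneqq c_i(v)-m_i$ for $v\in F_i$ and $c'_i(v)\coloneqq+\infty$ otherwise. By construction each $c'_i$ takes values in $\N\cup\{+\infty\}$. For a tuple $\bar a$ with all coordinates finite, we have
\[
\sum_i c'_i(a_i)=\sum_i c_i(a_i)-M, \qquad\text{where } M\coloneqq\sum_i m_i .
\]
So $\sum_i c_i(a_i)\le C$ holds exactly when $\sum_i c'_i(a_i)\le C-M$. If some coordinate is infinite, both sums are $+\infty$ and both inequalities fail, using the convention that $+\infty$ exceeds every integer bound.

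Next I consider the threshold $C-M$. If $C-M<0$, then every finite tuple has original cost at least $M>C$, so no tuple is feasible and we conclude accordingly. If $C-M\ge 0$, I set $B\coloneqq C-M\in\N$, which gives the required equivalence for every tuple.

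Finally, for the size bound: $B\le |C|+\sum_i|m_i|$, and each $m_i$ is an input number. Under unary encoding, $B$ is therefore bounded by the input length, and so is the size of the output functions. All steps are clearly polynomial.

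There is no real obstacle. The only point requiring care is the convention for $+\infty$ in sums, which must be kept consistent with $\Z_{+\infty}$ in \Cref{def:weighted-fo-cost-value-optimization}.
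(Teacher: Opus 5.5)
Your proposal is correct. The paper omits this proof as immediate, and your argument is the natural one it has in mind: reject if some coordinate has no finite cost, shift each $c_i$ by its minimum finite value $m_i$, and either reject when $C-\sum_i m_i<0$ or set $B\coloneqq C-\sum_i m_i$, with $B\le |C|+\sum_i|m_i|$ giving the unary size bound.
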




The second reduction prepares a fixed threshold case for an ordinary weighted tuple optimizer. 
After the costs have been normalized and a finite case of the value expression has been fixed, it caps all costs that cannot appear in a budget-$B$ solution by $B+1$ and replaces every value $-\infty$ by a finite sentinel that is too small to reach the current threshold. 
Again we omit the simple proof. 

\begin{observation}
\label{lem:finite-threshold-tuple-reduction}
Let $\theta(\bar x)$ be a first-order formula with $\bar x=(x_1,\ldots,x_k)$ and 
let $c_1,\ldots,c_k\colon$ $V(G)\to\N\cup\{+\infty\}$ be non-negative cost functions, let $w_1,\ldots,w_k\colon V(G)\to\Z_{-\infty}$ be value functions, let $B\in\N$, and let $T\in\Z$.

For every coordinate $i$, define a finite cost function $\gamma_i\colon V(G)\to\N$ by putting $\gamma_i(v)\coloneqq c_i(v)$ if $c_i(v)<+\infty$ and $c_i(v)\le B$, and putting $\gamma_i(v)\coloneqq B+1$ otherwise.

Let $W^+\coloneqq\sum_{i=1}^k\max(\{0\}\cup\{w_i(v):v\in V(G),\ w_i(v)>-\infty\})$ and $M\coloneqq W^++|T|+1$.
For every coordinate $i$, define a finite value function $\omega_i\colon V(G)\to\Z$ by putting $\omega_i(v)\coloneqq w_i(v)$ if $w_i(v)>-\infty$, and putting $\omega_i(v)\coloneqq-M$ otherwise.

Then, there is a tuple $\bar a$ such that $G\models\theta(\bar a)$, $\sum_i c_i(a_i)\le B$, and $\sum_i w_i(a_i)\ge T$ if and only if
\begin{equation*}
        \max\left\{\sum_{i=1}^k \omega_i(a_i):G\models\theta(\bar a),\ \sum_{i=1}^k\gamma_i(a_i)\le B\right\}\ge T.
\end{equation*}
The maximum is $-\infty$ if the feasible set is empty.
If all finite input numbers are encoded in unary, then all numbers introduced by the reduction have unary size polynomial in the input length.
\end{observation}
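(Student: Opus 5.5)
The plan is to prove the equivalence directly by comparing, for a single tuple $\bar a$ with $G\models\theta(\bar a)$, the original quantities $\sum_i c_i(a_i)$ and $\sum_i w_i(a_i)$ with the modified quantities $\sum_i\gamma_i(a_i)$ and $\sum_i\omega_i(a_i)$. The feasible set on the right-hand side is a finite subset of $V(G)^k$, so the maximum is attained whenever that set is nonempty. Hence the right-hand condition holds if and only if some tuple $\bar a$ satisfies $G\models\theta(\bar a)$, $\sum_i\gamma_i(a_i)\le B$ and $\sum_i\omega_i(a_i)\ge T$. It therefore suffices to show that, for $\bar a$ with $G\models\theta(\bar a)$, the pair of original conditions holds exactly when the pair of modified conditions holds.

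\emph{Forward direction.} Assume $\sum_i c_i(a_i)\le B$ and $\sum_i w_i(a_i)\ge T$.
\begin{itemize}
\item Since all $c_i$ take values in $\N\cup\{+\infty\}$, every summand is finite and at most $B$. So $\gamma_i(a_i)=c_i(a_i)$ for all $i$, and the capped cost sum equals the original one, which is at most $B$.
\item Since $T\in\Z$ and the value sum is at least $T$, no $w_i(a_i)$ equals $-\infty$. So $\omega_i(a_i)=w_i(a_i)$ for all $i$, and the value sum is unchanged and at least $T$.
\end{itemize}

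\emph{Backward direction.} Assume $\sum_i\gamma_i(a_i)\le B$ and $\sum_i\omega_i(a_i)\ge T$.
\begin{itemize}
\item By non-negativity of the $\gamma_i$, each $\gamma_i(a_i)\le B$, so no coordinate received the cap $B+1$. Hence every $c_i(a_i)$ is finite and equal to $\gamma_i(a_i)$, and $\sum_i c_i(a_i)\le B$.
\item For the values, the main (though still easy) step is the sentinel bound. Suppose some coordinate $j$ has $w_j(a_j)=-\infty$, so $\omega_j(a_j)=-M$.
\item Every other coordinate satisfies $\omega_i(a_i)\le\max(\{0\}\cup\{w_i(v):w_i(v)>-\infty\})$. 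This also holds for further sentinel coordinates, since $-M<0$.
\item Summing gives $\sum_i\omega_i(a_i)\le W^+-M=-|T|-1<T$, a contradiction. Thus all $w_i(a_i)$ are finite, $\omega_i(a_i)=w_i(a_i)$, and $\sum_i w_i(a_i)\ge T$.
\end{itemize}

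Finally, for the size claim, observe that $B+1$, $W^+$, $|T|$ and hence $M$ are bounded by sums of at most $k$ input numbers plus constants. In unary encoding their size is therefore polynomial in the input length, and the new functions $\gamma_i$ and $\omega_i$ are computable in polynomial time.
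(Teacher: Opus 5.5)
Your proof is correct and is exactly the routine tuple-by-tuple verification that the paper leaves out as simple: by nonnegativity, the costs of a budget-feasible tuple are never capped, and the sentinel estimate $\sum_i\omega_i(a_i)\le W^+-M=-|T|-1<T$ rules out $-\infty$ coordinates in the backward direction. The empty-feasible-set case is also covered, since then the maximum is $-\infty<T$ and no original witness exists either.
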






After normalizing the costs, the following tuple optimizer only has to handle one formula at a time. 

\begin{definition}[Efficient weighted \FO tuple optimization]
\label{def:weighted-fo-tuple-optimization}
A graph class $\Cc$ admits \emph{efficient weighted \FO tuple optimization} if, for every fixed colored first-order formula $\theta(\bar x)$, where $\bar x=(x_1,\ldots,x_k)$, the following problem is fixed-parameter tractable parameterized by $k$ and~$\theta$.

The input consists of a graph $G\in\Cc$, finite value functions $w_1,\ldots,w_k\colon V(G)\to\Z$, finite cost functions $\gamma_1,\ldots,\gamma_k\colon V(G)\to\N$, and a budget $B\in\N$. All numerical inputs are encoded in unary.
The task is to compute
\begin{equation*}
        \max\left\{\sum_{i=1}^k w_i(a_i):G\models\theta(\bar a),\ \sum_{i=1}^k\gamma_i(a_i)\le B\right\},
\end{equation*}
with value $-\infty$ if the feasible set is empty.
\end{definition}

An \FOCV instance with cases $(\varphi_i,\alpha_i)_{i=1}^m$ is then handled by calling it separately for each case.
We again omit the proof of the following simple lemma. 

\begin{lemma}
\label{lem:tuple-optimization-implies-focv}
If a graph class $\Cc$ admits efficient weighted \FO tuple optimization, then \FOCV is fixed-parameter tractable on $\Cc$, parameterized by the arity and by the defining value expression.
\end{lemma}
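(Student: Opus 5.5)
The plan is to run through the cases of the value expression one at a time, reducing each to a single instance of weighted \FO tuple optimization. Fix an instance of \FOCV on a graph $G\in\Cc$ with cost functions $c_1,\ldots,c_k$, value functions $w_1,\ldots,w_k$, cases $(\varphi_1,\alpha_1),\ldots,(\varphi_m,\alpha_m)$, cost bound $C$ and threshold $W$.

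First, apply \Cref{lem:focv-cost-normalization}. This either rejects the instance outright, or yields non-negative costs $c'_i$ and a unary budget $B\in\N$ that preserve exactly the set of tuples satisfying the cost constraint. Since the formulas $\varphi_j$ partition $V(G)^k$, the instance is positive if and only if there is an index $j$ and a tuple $\bar a$ such that $G\models\varphi_j(\bar a)$, $\sum_i c'_i(a_i)\le B$, and $\sum_i w_i(a_i)+\alpha_j\ge W$. Cases with $\alpha_j=-\infty$ can never reach the finite threshold $W$, so we discard them. For every remaining case we set $T_j\coloneqq W-\alpha_j\in\Z$, and the condition becomes $\sum_i w_i(a_i)\ge T_j$.

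Second, for each remaining $j$, apply \Cref{lem:finite-threshold-tuple-reduction} with $\theta\coloneqq\varphi_j$, the costs $c'_i$, the values $w_i$, the budget $B$ and the threshold $T_j$. This produces finite capped costs $\gamma_i$ and finite values $\omega_i$, all of unary size polynomial in the input. The case admits a witness exactly when the maximum of $\sum_i\omega_i(a_i)$ over tuples with $G\models\varphi_j(\bar a)$ and $\sum_i\gamma_i(a_i)\le B$ is at least $T_j$. This maximum is precisely the quantity computed by the efficient weighted \FO tuple optimization oracle for the fixed formula $\varphi_j$. We call the oracle once per case, compare each result with $T_j$, and accept if and only if some case succeeds.

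For the optimization variant, we take the case that maximizes the oracle value plus $\alpha_j$. Here the sentinel $-M$ requires a little care: a tuple whose true value is $-\infty$ should not win. To handle this, we perform a binary search over thresholds $T$ (or simply compare the oracle value against the sentinel bound). This needs only polynomially many calls, because all numbers are unary.

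The running time is $m$ calls to the FPT oracle plus polynomial preprocessing. Since $m$ and the formulas are fixed by the value expression, the whole algorithm is fixed-parameter tractable in the arity and the expression. The main obstacle is only bookkeeping: checking that the $\pm\infty$ conventions (infinite costs and $-\infty$ values or corrections) are handled consistently by the two observations, so that no infeasible tuple is counted. This is exactly what those observations guarantee.
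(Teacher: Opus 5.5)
Your proof is correct and is essentially the argument the paper intends. The paper omits this proof, but it runs exactly this pipeline inside \Cref{lem:local-neighborhood-optimization}: normalize the costs with \Cref{lem:focv-cost-normalization}, discard cases with $\alpha_j=-\infty$, set $T_j=W-\alpha_j$, apply \Cref{lem:finite-threshold-tuple-reduction} to $\varphi_j$, and make one oracle call per case. Your paragraph on the optimization variant is not needed, since the lemma concerns the decision problem; if you keep it, note that the oracle returns only a value, so outputting an optimal tuple additionally requires a standard self-reduction (for example, forcing $a_i=v$ by raising all other costs of coordinate $i$ to $B+1$).
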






\begin{lemma}
\label{lem:local-neighborhood-optimization}
Let $\Cc$ be a graph class with fixed-parameter tractable first-order model checking. 
Assume that, for every radius $s\in\N$, the class of all induced neighborhoods $G[N_s^G[a]]$ with $G\in\Cc$ and $a\in V(G)$ admits efficient weighted \FO tuple optimization, also after adding a unary predicate for the center.
Then $\Cc$ admits efficient \textsc{Local} \FOCV.
\end{lemma}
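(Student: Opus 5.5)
We reduce a local instance to the tuple optimizer on a single neighborhood. Fix $r\in\N$ and an instance of \textsc{Local} \FOCV: a graph $G\in\Cc$, a center $a$, cost and value functions, a value expression $\{(\varphi_j(\bar x,z),\alpha_j)\}_{j\le m}$, a bound $C$, and a threshold $W$. Every feasible tuple must lie in $N_r^G[a]^k$. First apply \Cref{lem:focv-cost-normalization} to obtain non-negative costs and a budget $B$. Then, for each case $j$ with $\alpha_j>-\infty$, we must decide whether some $\bar v\in N_r^G[a]^k$ satisfies $G\models\varphi_j(\bar v,a)$, $\sum_i c_i(v_i)\le B$, and $\sum_i w_i(v_i)\ge W-\alpha_j$. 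The instance is positive if and only if some case succeeds.

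The main step transfers the evaluation of $\varphi_j(\bar v,a)$ from $G$ to the induced neighborhood $L\coloneqq G[N_s^G[a]]$ for a suitable radius $s$. The difficulty is that $\varphi_j$ may quantify over vertices far from $a$, so its truth in $G$ is not determined by $L$ alone.

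We handle this with Gaifman locality. Apply \Cref{thm:refined-free-variable-gaifman} to $\varphi_j(\bar x,z)$, treating $z$ as a $(k+1)$-st free variable, to obtain a radius $r'$. Since all of $\bar v$ lies within distance $r$ of $a$, the distance pattern $H=H^{r'}_G(\bar v,a)$ is either determined by distances inside the ball or is irrelevant. To keep this clean, enumerate all $H\in\mathcal H_{k+1}$ and all $\tau\in\Lambda_H$. The sentences $\xi_H^\tau$ are tuple-independent, so we evaluate them once on $G$ using FPT model checking and discard the failing $\tau$.

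The remaining conditions are of two kinds. The first is the pattern condition: $\dist_G(v_i,v_l)\le r'$ exactly according to $H$. The second is the $r'$-local formulas $\psi^\tau_{H,C}$ on the components. Both depend only on the $(r'+r)$-ball around the involved tuple entries, hence only on $N_s^G[a]$ for $s\coloneqq 2r+2r'+1$ or similar. The one subtlety is that distances in $G$ between vertices of the ball may use paths leaving the ball. Choosing $s\ge r+r'$ large enough ensures that every shortest path of length at most $r'$ between two vertices of $N_r^G[a]$ stays inside $N_s^G[a]$. Likewise, the $r'$-neighborhoods of the tuple entries, with their induced distances, are faithfully represented in $L$. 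Thus each local formula, relativized to these neighborhoods, has the same truth value in $L$ as in $G$.

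We therefore build one colored \FO formula $\theta_{j}(\bar x)$ over $L$ with a unary predicate $Z$ marking the center. It states that $\exists z\,(Z(z)\wedge\dist_{\le r}(z,x_i)$ for all $i)$, and that the disjunction over $(H,\tau)$ surviving the $\xi$-test of the conjunction of the $H$-pattern and the local formulas holds. Its size depends only on $r$, $k$, and the value expression.

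Finally, apply \Cref{lem:finite-threshold-tuple-reduction} with threshold $W-\alpha_j$ to obtain finite costs and values restricted to $V(L)$. Then call the assumed efficient weighted \FO tuple optimization on $L$, with the center predicate, and compare the returned maximum against the threshold. There are $f(k,\text{expression})$ many cases, $H$, and $\tau$, and each step is FPT, so the whole procedure is FPT.

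We expect the main obstacle to be the step of choosing $s$: showing that the pattern and local formulas evaluated in $L$ agree with those in $G$, in particular for distance atoms, which are not preserved under taking induced subgraphs unless the radius margin is large enough.
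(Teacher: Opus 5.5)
Your proposal is correct and follows essentially the same route as the paper: normalize the costs, apply the refined Gaifman normal form with the center as an extra free variable, model-check the tuple-independent sentences in $G$, and transfer the distance pattern and local formulas to a centered induced neighborhood. The paper uses $G[N_{s+r}^G[a]]$, which corresponds to your radius $r+r'$ and already suffices; it then applies the finite-threshold reduction and the assumed tuple optimizer, as you do.

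The differences are cosmetic. You fold all surviving alternatives into one disjunctive formula, whereas the paper makes one optimizer call per alternative. If you keep the single formula, let $\exists z$ scope over the whole disjunction, since the pattern and local formulas mention $z$.
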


\begin{proof}
Fix the radius $s$ of the local instance, the arity $k$, and the centered value expression. 
Let the input be a graph $G\in\Cc$, a center $a\in V(G)$, cost functions, value functions, a cost bound $C$, and a value threshold $W$.
Apply \Cref{lem:focv-cost-normalization} to the cost functions.
If the normalization rejects, then no local tuple can satisfy the cost bound, and we reject.
We keep the notation $c_i$ for the normalized costs and write $B$ for the normalized budget.

Consider one finite case $(\varphi(\bar x,z),\alpha)$ of the centered value expression, and put $T\coloneqq W-\alpha$.
Apply the refined free-variable Gaifman normal form to the formula $\varphi(\bar x,z)$, treating $z$ as an additional free variable.
Let $r$ be the resulting locality radius. For every normal-form alternative, we test its tuple-independent sentence in $G$ using the assumed first-order model-checking algorithm.
If the side sentence is false, we discard the alternative.

Let $G_a\coloneqq G[N_{s+r}^G[a]]$, expanded by a unary predicate $Z$ that marks the single vertex~$a$.
All vertices assigned to $\bar x$ in the local problem lie in $N_s^G[a]$.
Therefore, every radius-$r$ neighborhood of any free variable from $\bar x,z$ is contained in $G_a$.
Consequently, the distance pattern on $\bar x,z$ and all local formulas occurring in the normal form have the same truth value in $G$ and in the induced colored graph $G_a$.

For each surviving normal-form alternative, we form a colored first-order formula $\theta(\bar x)$ over~$G_a$. 
It has the form $\exists z\,(Z(z)\wedge\theta_0(\bar x,z))$, where $\theta_0$ states that every coordinate $x_i$ lies within a distance of at most $s$ from $z$, that the prescribed radius-$r$ distance pattern on $\bar x,z$ holds, and that all local formulas of the alternative hold.
We apply \Cref{lem:finite-threshold-tuple-reduction} on $G_a$ to this formula, to the restrictions of the normalized costs and values, to the budget $B$, and to the threshold~$T$.

The weighted tuple optimizer on the neighborhood class computes the corresponding finite optimum.
By \Cref{lem:finite-threshold-tuple-reduction}, the optimum reaches $T$ if and only if the local \FOCV instance has a tuple satisfying this case and this normal-form alternative.
Taking the disjunction over all finite cases and all surviving alternatives decides the local instance.
The number of alternatives and the radii depend only on the fixed local radius and on the value expression.
Thus, the running time is fixed-parameter tractable.
\end{proof}

We now prove the main theorem. 

\conditionalfovaluediscovery*

\begin{proof}
Let an instance of \FOCV be given.
Let its arity be $k$, let $G\in\Cc$ be the input graph, let $c_1,\ldots,c_k\colon V(G)\to\Z_{+\infty}$ be the cost functions, and let $w_1,\ldots,w_k\colon V(G)\to\Z_{-\infty}$ be the value functions.
Let the value expression be given by formulas $\varphi_1(\bar x),\ldots,\varphi_m(\bar x)$, where $\bar x=(x_1,\ldots,x_k)$, and constants $\alpha_1,\ldots,\alpha_m\in\Z_{-\infty}$.
Let $C$ be the cost bound, and let $W$ be the value threshold.
Apply \Cref{lem:focv-cost-normalization} to the cost functions.
If the normalization rejects, then no tuple of finite cost at most $C$ exists, and we reject.
We keep the notation $c_i$ for the normalized costs and write $B$ for the normalized budget.

We consider the cases of the value expression separately.
Cases with $\alpha_j=-\infty$ cannot yield a tuple of finite value at least the finite threshold $W$, and we ignore them.
Fix a remaining case $j$ and put $T_j\coloneqq W-\alpha_j$.
For this case, we have to decide whether there is a tuple $\bar a=(a_1,\ldots,a_k)$ such that $G\models\varphi_j(\bar a)$, $\sum_{i=1}^k c_i(a_i)\le B$, and $\sum_{i=1}^k w_i(a_i)\ge T_j$.

We apply the refined free-variable Gaifman normal form from \Cref{thm:refined-free-variable-gaifman} to the formula~$\varphi_j(\bar x)$.
This yields a radius $r$ and, for every graph $H\in\mathcal H_k$, a finite index set~$\Lambda_H$.
For every $\tau\in\Lambda_H$, it yields a tuple-independent side condition $\xi_H^\tau$ and, for every connected component $D$ of $H$, an $r$-local formula $\psi_{H,D}^\tau(\bar x_D)$.
All these objects depend only on the fixed formula $\varphi_j$.

We fix $H\in\mathcal H_k$ and $\tau\in\Lambda_H$.
Let $\mathcal P(H)=\{B_1,\ldots,B_p\}$ be the set of connected components of $H$.
We first test whether $G\models\xi_H^\tau$, which we can efficiently do by assumption.
If $G\not\models\xi_H^\tau$, then this choice of $H$ and $\tau$ is discarded.
Otherwise, $H$ describes the radius-$r$ distance pattern of the whole tuple, while the formulas $\psi_{H,B_h}^\tau$ describe the local behavior inside the components of this pattern.

For $I\subseteq [k]$, let $\delta_{H[I],r}(\bar x_I)$ be the first-order formula expressing the radius-$r$ distance pattern induced by $H$ on $I$, namely
\begin{align*}
        \delta_{H[I],r}(\bar x_I)
        \coloneqq
        \bigwedge_{\substack{u<v\\ u,v\in I\\ \{u,v\}\in E(H)}} \dist_{\le r}(x_u,x_v)
        \wedge
        \bigwedge_{\substack{u<v\\ u,v\in I\\ \{u,v\}\notin E(H)}} \dist_{>r}(x_u,x_v).
\end{align*}
Thus, $G\models\delta_{H,r}(\bar a)$ holds if and only if $H=H_G^r(\bar a)$.
For every component $B_h$, let $\rho_h\coloneqq\min B_h$.
Put $\widehat r\coloneqq\max\left\{1,r\right\}$ and $r^\ast\coloneqq(k+1)\widehat r$.
Since $B_h$ is connected in $H$, the formula $\delta_{H[B_h],r}(\bar x_{B_h})$ implies that every variable $x_i$ with $i\in B_h$ lies within a distance of at most $(|B_h|-1)r\le r^\ast$ from~$x_{\rho_h}$.

We now introduce a bounded coarsening of the Gaifman components.
Set $\sigma_0\coloneqq0$.
For $\ell\in\{0,\ldots,k\}$, define $D_\ell\coloneqq(k-1)\sigma_\ell+r^\ast$ and $R_\ell\coloneqq2D_\ell+r$.
For $\ell<k$, define $\sigma_{\ell+1}\coloneqq R_\ell$.
These numbers depend only on $k$ and $\varphi_j$.
The intended meaning is that components at distance at most $\sigma_\ell$ are grouped into one cluster, every such cluster is contained in a ball of radius $D_\ell$ around a cluster anchor, and distinct clusters are separated by distance greater than $R_\ell$ between their anchors.

We use the following elementary observation.
Let $\bar a$ be any tuple with $H_G^r(\bar a)=H$, and let $d_h\coloneqq a_{\rho_h}$ for $h\in[p]$.
For a scale $\sigma$, let $Q_\sigma(\bar a)$ be the graph on vertex set $[p]$ in which~$hh'$ is an edge if and only if $\dist_G(d_h,d_{h'})\le\sigma$.
As $\ell$ increases, connected components of $Q_{\sigma_\ell}(\bar a)$ can only merge.
If, for every $\ell\in\{0,\ldots,p-1\}$, the graphs $Q_{\sigma_\ell}(\bar a)$ and $Q_{R_\ell}(\bar a)$ had different connected components, then the number of connected components would decrease by at least $p$ times.
This is impossible.
Hence, there is an $\ell\in\{0,\ldots,p-1\}$ such that $Q_{\sigma_\ell}(\bar a)$ and $Q_{R_\ell}(\bar a)$ have the same connected components.
For this value of $\ell$, every component of $Q_{\sigma_\ell}(\bar a)$ has a representative whose whole cluster lies in its radius-$D_\ell$ neighborhood, and representatives of distinct clusters are at a distance greater than $R_\ell$.

We enumerate this coarsening.
Fix $\ell\in\{0,\ldots,p-1\}$ and a forest $F$ on the vertex set $[p]$.
Let $\mathcal Q(F)=\{C_1,\ldots,C_q\}$ be the set of connected components of $F$.
For a cluster $C\in\mathcal Q(F)$, put $I_C\coloneqq\bigcup_{h\in C}B_h$ and let $\eta_C\coloneqq\min C$.
The forest $F[C]$ is meant to witness that the representatives of the blocks in $C$ are connected at scale $\sigma_\ell$.

For a cluster $C\in\mathcal Q(F)$, define the centered cluster formula $\chi_{H,\tau,\ell,F,C}(\bar x_{I_C},z_C)$ by
\begin{align*}
        \chi_{H,\tau,\ell,F,C}(\bar x_{I_C},z_C)
        \coloneqq
        &(z_C=x_{\rho_{\eta_C}})
        \wedge
        \delta_{H[I_C],r}(\bar x_{I_C})\\
        &\wedge
        \bigwedge_{h\in C}\psi_{H,B_h}^\tau(\bar x_{B_h})\\
        &\wedge
        \bigwedge_{hh'\in E(F[C])}\dist_{\le\sigma_\ell}(x_{\rho_h},x_{\rho_{h'}})\\
        &\wedge
        \bigwedge_{i\in I_C}\dist_{\le D_\ell}(z_C,x_i).
\end{align*}
The last line is a protection condition.
It ensures that the whole partial tuple of the cluster is searched within the radius-$D_\ell$ ball around its anchor.
For fixed $k$ and $\varphi_j$, the formula $\chi_{H,\tau,\ell,F,C}$ has a fixed size.

For every cluster $C\in\mathcal Q(F)$, every center $a\in V(G)$, and every local budget $q'\in\{0,\ldots,B\}$, we solve the corresponding \textsc{Local} \FOCV instance.
Its arity is~$|I_C|$, its cost functions are $c_i$ for $i\in I_C$, its value functions are $w_i$ for $i\in I_C$, its center is $a$, its radius is $D_\ell+r$, and its centered value expression has the two cases $\chi_{H,\tau,\ell,F,C}$ with correction~$0$ and $\neg\chi_{H,\tau,\ell,F,C}$ with correction $-\infty$.
From the local decision algorithm, we compute the finite optimum by enumerating the unary range of possible finite values.
For every $i\in I_C$, let $m_i^-\coloneqq\min(\{0\}\cup\{w_i(v):w_i(v)>-\infty\})$ and $m_i^+\coloneqq\max(\{0\}\cup\{w_i(v):w_i(v)>-\infty\})$.
Every finite local value lies in the interval $[\sum_{i\in I_C}m_i^-,\sum_{i\in I_C}m_i^+]$, and this interval has unary-polynomial length.
Let this optimum be
\begin{align*}
        \nu_{H,\tau,\ell,F,C}(a,q')
        \coloneqq
        \max\Big\{
        \sum_{i\in I_C} w_i(x_i):
        &\ G\models\chi_{H,\tau,\ell,F,C}(\bar x_{I_C},a),\\
        &\ \bar x_{I_C}\in N_{D_\ell}^G[a]^{I_C},
        \sum_{i\in I_C}c_i(x_i)\le q'
        \Big\}.
\end{align*}
If no such local realization has a finite value, then we do not create a candidate for this choice.
There are at most $|V(G)|(B+1)$ local optimization computations for each cluster.
Since $B$ is unary, this is polynomial in the input size for fixed $k$ and fixed value expression.

We build an anchored weighted multicolored distance-$R_\ell$ independence instance.
The colors are the clusters in $\mathcal Q(F)$.
For a color $C$, the candidate set $A_C$ consists of all triples $y=(C,a,q')$ with $a\in V(G)$, $q'\in\{0,\ldots,B\}$, and $\nu_{H,\tau,\ell,F,C}(a,q')>-\infty$.
The anchor of~$y$ is $a(y)\coloneqq a$; its cost is $\kappa(y)\coloneqq q'$ and its profit is $\nu(y)\coloneqq\nu_{H,\tau,\ell,F,C}(a,q')$.
The budget is~$B$, and the profit threshold is $T_j$.
The budget is bounded by the number of budget states represented in the candidate sets and is therefore polynomially bounded in the size of the auxiliary instance.

We prove soundness.
Suppose that the auxiliary instance has a feasible solution, and let $y_C=(C,a_C,q_C)$ be the candidate chosen for cluster $C$.
By the definition of $\nu_{H,\tau,\ell,F,C}(a_C,q_C)$, there is a local tuple $\bar b_C=(b_i)_{i\in I_C}$ satisfying the cluster formula, with local cost at most $q_C$ and local value $\nu_{H,\tau,\ell,F,C}(a_C,q_C)$.
Combining the local tuples gives a tuple $\bar b=(b_1,\ldots,b_k)$.
For every cluster $C$ and every $i\in I_C$, the protection condition gives $\dist_G(a_C,b_i)\le D_\ell$.
Since anchors of distinct clusters are at a distance greater than $R_\ell=2D_\ell+r$, it follows that $\dist_G(b_i,b_{i'})>r$ whenever $i$ and $i'$ belong to distinct clusters.
Inside each cluster, the formula~$\delta_{H[I_C],r}$ enforces the radius-$r$ pattern induced by $H$.
Thus, $H=H_G^r(\bar b)$.
Moreover, every local formula $\psi_{H,B_h}^\tau(\bar b_{B_h})$ holds.
Together with $G\models\xi_H^\tau$, the refined Gaifman normal form implies $G\models\varphi_j(\bar b)$.
The total cost is at most $\sum_{C\in\mathcal Q(F)}q_C\le B$.
The total value satisfies
\begin{align*}
        \operatorname{val}(\bar b)
        =
        \sum_{i=1}^k w_i(b_i)+\alpha_j
        =
        \sum_{C\in\mathcal Q(F)}\nu_{H,\tau,\ell,F,C}(a_C,q_C)+\alpha_j
        \ge
        T_j+\alpha_j
        =
        W.
\end{align*}
Hence, $\bar b$ is feasible for the normalized instance, and therefore also for the original instance by \Cref{lem:focv-cost-normalization}.

We prove completeness.
Suppose that the original instance has a feasible tuple $\bar b$ whose value is realized by the case $j$.
After the normalization of \Cref{lem:focv-cost-normalization}, the same tuple has a normalized cost of at most $B$.
Let $H\coloneqq H_G^r(\bar b)$.
By the refined Gaifman normal form, there is a unique $\tau\in\Lambda_H$ such that $G\models\xi_H^\tau$ and $G\models\psi_{H,B_h}^\tau(\bar b_{B_h})$ hold for all components~$B_h$ of $H$.
For each $h\in[p]$, put $d_h\coloneqq b_{\rho_h}$.
By the scale observation, choose $\ell\in\{0,\ldots,p-1\}$ such that $Q_{\sigma_\ell}(\bar b)$ and $Q_{R_\ell}(\bar b)$ have the same connected components.
Choose a forest $F$ whose components are these common components and whose edges all lie in~$Q_{\sigma_\ell}(\bar b)$.
Let~$\mathcal Q(F)$ be the resulting cluster partition.
For every cluster $C\in\mathcal Q(F)$, set $a_C\coloneqq b_{\rho_{\eta_C}}$ and $q_C\coloneqq\sum_{i\in I_C}c_i(b_i)$, using the normalized costs.
The edges of $F[C]$ show that the block representatives in $C$ are connected at scale $\sigma_\ell$.
Every vertex of a block $B_h$ lies within a distance of at most $r^\ast$ from its representative $b_{\rho_h}$.
Thus, every vertex $b_i$ with $i\in I_C$ lies within a distance of at most~$D_\ell$ of~$a_C$.
Consequently, $G\models\chi_{H,\tau,\ell,F,C}(\bar b_{I_C},a_C)$.
The candidate $(C,a_C,q_C)$ is therefore present, and its profit is at least $\sum_{i\in I_C}w_i(b_i)$.
If~$C$ and $C'$ are distinct clusters, then they are distinct connected components of $Q_{R_\ell}(\bar b)$, so $\dist_G(a_C,a_{C'})>R_\ell$.
The candidate costs sum to the normalized cost of $\bar b$, which is at most~$B$.
The candidate profits sum to at least $\sum_{i=1}^k w_i(b_i)\ge W-\alpha_j=T_j$.
Thus, the auxiliary anchored distance-independence instance is feasible.

We run the construction for all finite cases $j$, all graphs $H\in\mathcal H_k$, all indices $\tau\in\Lambda_H$, all scale indices $\ell\in\{0,\ldots,k-1\}$, and all forests $F$ on the component set of $H$.
The number of such choices depends only on $k$ and the defining value expression.
Each auxiliary instance has at most $k$ colors and a distance parameter depending only on $k$ and the value expression.
By assumption, each auxiliary instance is solved in fixed-parameter time.
The original instance is accepted if and only if at least one auxiliary instance is feasible.
This proves fixed-parameter tractability of \FOCV on $\Cc$.
\end{proof}

\subsection{(Locally) structurally bounded expansion}
We now instantiate the general reduction for graph classes of locally structurally bounded expansion. We use structurally bounded expansion in the standard sense of Gajarsk\'y et al.~\cite{DBLP:journals/tocl/GajarskyKNMPST20}: a graph class has structurally bounded expansion if it is first-order transducible from a class of bounded expansion. A graph class has locally structurally bounded expansion if, for every $r\in\N$, the class of all induced neighborhoods $G[N_r^G[v]]$, with $G\in\Cc$ and $v\in V(G)$, has structurally bounded expansion.

We first prove the weighted tuple optimizer for structurally bounded expansion classes.
The proof uses the efficient reversal theorem for transductions of sparse graph classes due to Dreier, Gajarsk\'y, and Pilipczuk~\cite{DreierGP26}.
We use the following form of their result: for every graph class $\Cc$ of structurally bounded expansion, there is a polynomial-time algorithm that, given $G\in\Cc$, computes a vertex-colored graph $H$ from a bounded-expansion class and a fixed one-dimensional first-order interpretation $I$ such that $G=I(H)$.
We use a unary predicate~$U$ in $H$ for the domain of the interpretation, so all variables of pulled-back formulas are relativized to $U$.

We also use the aggregate-query machinery for bounded-expansion classes due to Toru\'nczyk~\cite{TorunczykPODS2020}.
In the max-plus form needed here, for every fixed first-order formula over a bounded-expansion class and unary numerical annotations on the vertices, one can compute the maximum annotated sum over all satisfying tuples in time $f(\varphi)\cdot |V(H)|^{O(1)}$, with arithmetic polynomial in the bit length of the annotations.

The following is the standard interpretation lemma, see~\cite{grohe2009methods}.

\begin{lemma}
\label{lem:pullback-through-interpretation}
Let $I$ be a fixed one-dimensional first-order interpretation with domain formula $\delta(x)$ and edge formula $\eta(x,y)$.
For every first-order formula $\varphi(\bar x)$ over graphs, one can effectively compute a first-order formula $\varphi^I(\bar x)$ over the source structure such that, whenever $G=I(H)$ and $\bar a$ is a tuple of vertices satisfying the domain formula, we have $G\models\varphi(\bar a)$ if and only if $H\models\varphi^I(\bar a)$.
Moreover, $\varphi^I$ can be chosen so that all quantified variables are relativized to $\delta$.
\end{lemma}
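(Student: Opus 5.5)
This is the standard interpretation (pullback) lemma, and I would prove it by structural induction on $\varphi$, defining $\varphi^I$ syntactically by substitution.

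The translation is as follows.
\begin{itemize}
\item Atomic formulas: set $(x=y)^I\coloneqq (x=y)$ and $E(x,y)^I\coloneqq \eta(x,y)$. Since the interpretation is one-dimensional, vertices of $I(H)$ are vertices of $H$ satisfying $\delta$, so equality carries over unchanged.
\item Unary color predicates (if the graph language is expanded): $I$ is assumed to supply a defining formula $\pi_P(x)$ for each predicate $P$, and we set $P(x)^I\coloneqq\pi_P(x)$.
\item Boolean connectives commute with the translation: $(\neg\psi)^I\coloneqq\neg\psi^I$ and $(\psi_1\wedge\psi_2)^I\coloneqq\psi_1^I\wedge\psi_2^I$.
\item Quantifiers are relativized: $(\exists y\,\psi)^I\coloneqq\exists y\,(\delta(y)\wedge\psi^I)$ and $(\forall y\,\psi)^I\coloneqq\forall y\,(\delta(y)\to\psi^I)$.
\end{itemize}
Before substituting, rename bound variables of $\delta$, $\eta$ and $\pi_P$ apart from the variables of $\varphi$ to avoid capture. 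This makes the construction effective, with size linear in $|\varphi|$ times the fixed size of $I$.

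Correctness is proved by induction on $\varphi$. The claim is: for every tuple $\bar a$ of elements of $H$ all satisfying $\delta$, we have $G\models\varphi(\bar a)$ if and only if $H\models\varphi^I(\bar a)$.
\begin{itemize}
\item Atomic case: this holds by the definition of $G=I(H)$, namely $V(G)=\delta(H)$ and $E(G)=\eta(H)\cap\delta(H)^2$. The restriction to domain elements is exactly what the hypothesis on $\bar a$ provides.
\item Connective cases: these are immediate.
\item Existential case: witnesses $b\in V(G)$ correspond exactly to witnesses $b\in V(H)$ with $H\models\delta(b)$. The extended tuple $\bar a b$ again consists of domain elements, so the induction hypothesis applies to it. The universal case is dual.
\end{itemize}
The relativization built into the quantifier clauses directly gives the ``moreover'' part. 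All quantified variables introduced for $\varphi$ are guarded by $\delta$. The internal quantifiers of $\eta$ and $\delta$ themselves range over $H$, as intended.

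The only mild subtlety, and the step I would check most carefully, is the atomic edge case. If $\eta$ is not symmetric or not irreflexive on $H$, one should use $\eta(x,y)\wedge\eta(y,x)\wedge x\ne y$, or whatever normalization defines $E(G)$ in the paper's convention of interpretation. The free variables of $\varphi$ are not relativized inside $\varphi^I$, which is why the statement only asserts equivalence for tuples satisfying the domain formula. Nothing beyond this bookkeeping is needed, in line with the standard treatment in~\cite{grohe2009methods}.
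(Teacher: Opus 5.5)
Your proposal is correct and is the standard argument. The paper gives no proof of its own and only cites \cite{grohe2009methods}, where the lemma is proved exactly as you do it: translate atoms via $\eta$, keep connectives unchanged, relativize quantifiers to $\delta$, and induct over tuples of domain elements. You also read the ``moreover'' clause correctly: only the quantifiers coming from $\varphi$ are guarded by $\delta$, while the internal quantifiers of $\eta$ and $\delta$ range over all of $H$.
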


\begin{lemma}
\label{lem:sbe-weighted-fo-cost-value-optimization}
Every graph class of structurally bounded expansion admits efficient weighted \FO tuple optimization.
\end{lemma}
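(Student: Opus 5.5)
We combine the efficient reversal theorem of Dreier, Gajarsk\'y, and Pilipczuk with Toru\'nczyk's aggregate-query machinery. Fix a class $\Cc$ of structurally bounded expansion and a colored first-order formula $\theta(\bar x)$ with $\bar x=(x_1,\ldots,x_k)$. Given $G\in\Cc$, we first use the reversal theorem to compute in polynomial time a colored graph $H$ from a fixed bounded-expansion class $\mathscr D$ and a fixed one-dimensional interpretation $I$ with domain predicate $U$ such that $G=I(H)$. Colors of $G$ used by $\theta$ need separate handling. Either the reversal is applied to the colored structure, since structurally bounded expansion is closed under unary expansions when the predicates are carried through the transduction, or the colors are copied onto the corresponding domain vertices of $H$. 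Adding finitely many unary predicates to a bounded-expansion class keeps bounded expansion. By \Cref{lem:pullback-through-interpretation} we then obtain the fixed formula $\theta^I(\bar x)$, and add the conjuncts $\bigwedge_i U(x_i)$. This gives $H\models\theta^I(\bar a)\wedge\bigwedge_i U(a_i)$ iff $G\models\theta(\bar a)$.

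Next we encode the numerical data. The values $w_i$ can be placed as unary annotations on the vertices of $H$, using the identity of domain vertices. The cost constraint is not a sum to maximize, so we absorb it into the logic, exploiting that $B$ is unary. We enumerate all cost vectors $(\beta_1,\ldots,\beta_k)\in\{0,\ldots,B\}^k$ with $\sum_i\beta_i\le B$. That is $(B+1)^k$ many, which is too many for FPT. Instead we handle costs by a max-plus dynamic program over coordinates, or more simply we annotate each vertex with the pair (cost, value). We then query, for each target total cost $q\le B$, the maximum value. To stay inside a pure max-plus aggregate, we use the standard trick of combining cost and value into one number: $\lambda_i(v)\coloneqq \gamma_i(v)\cdot M+ \ldots$. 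This alone does not enforce $\le B$, though.

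The cleaner route is to use Toru\'nczyk's result over a semiring. Aggregation works for any fixed commutative semiring with polynomial-size elements. We take the semiring of vectors indexed by $\{0,\ldots,B\}$ (truncated max-plus polynomials), with entry $q$ holding the best value at cost exactly $q$. Addition is entrywise max, and multiplication is truncated max-plus convolution. The annotation of coordinate $i$ at vertex $v$ is the vector that is $w_i(v)$ at position $\gamma_i(v)$, if that is at most $B$, and $-\infty$ elsewhere. The aggregate over all tuples satisfying $\theta^I$ yields a vector whose maximum entry is the desired optimum. Elements have size polynomial in $B$ and hence in the input, so the running time is $f(\theta,k)\cdot|V(H)|^{O(1)}$.

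The main obstacle is justifying that Toru\'nczyk's aggregate machinery applies to this polynomially sized semiring with per-coordinate weight functions, rather than just to a fixed finite semiring or plain max-plus. The first thing to try is to check that his proof uses only semiring operations, each costing $\mathrm{poly}(B)$. Failing that, we fall back on enumerating the $k$ per-coordinate cost values via color refinement. For each coordinate $i$ and each cost value $\beta$, introduce the unary predicate $P_{i,\beta}$. Then run plain max-plus aggregation for each cost profile. This is only polynomial if $k$ is treated carefully: we would guess a profile of at most $k$ nonzero costs in $\binom{B+k}{k}$ ways, which is polynomial for fixed $k$ but only XP. So the semiring route is really needed.
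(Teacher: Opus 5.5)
Your proposal is correct and follows the paper's proof: efficient reversal to a bounded-expansion preimage $H$, pulling $\theta$ back through the interpretation (relativized to $U$, with colors copied to the domain vertices), and then evaluating a max-plus aggregate on $H$ with Toru\'nczyk's machinery. The one difference is that you make the unary budget explicit through the truncated max-plus polynomial semiring on $\{0,\ldots,B\}$, whereas the paper only asserts that a ``max-plus aggregate query with a unary additive cost constraint'' is covered by the aggregate-query theorem; your encoding is what justifies that step, because Toru\'nczyk's bounds count semiring operations and each such operation costs only $\mathrm{poly}(B)$ here.
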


\begin{proof}
Let $\Cc$ be a graph class of structurally bounded expansion.
Fix a colored first-order formula $\varphi(\bar x)$, where $\bar x=(x_1,\ldots,x_k)$.
Let $G\in\Cc$ be given with finite value functions $w_i\colon V(G)\to\Z$ and finite cost functions $\gamma_i\colon V(G)\to\N$, all encoded in unary, and with a unary budget $B$.

By the efficient reversal theorem, we compute in polynomial time a vertex-colored graph~$H$ from a bounded-expansion class and a fixed one-dimensional interpretation $I$ such that $G=I(H)$.
We assume that the domain of the interpretation is marked by a unary predicate $U$, and we transfer the value and cost functions to the corresponding vertices of $H$ satisfying $U$.

By \Cref{lem:pullback-through-interpretation}, we compute a formula $\varphi^I(\bar x)$ over $H$ such that $G\models\varphi(\bar a)$ if and only if $H\models\varphi^I(\bar a)$ for all tuples $\bar a$ from $U$.
The required optimum is
$\max\big\{\sum_{i=1}^k w_i(x_i):H\models\varphi^I(\bar x)$, $\sum_{i=1}^k\gamma_i(x_i)\le B\big\}$.
This is a max-plus aggregate query with a unary additive cost constraint on a bounded-expansion class.
By the aggregate-query evaluation theorem, the optimum is computed in time $f(\varphi,k)\cdot |V(H)|^{O(1)}$.

The graph $H$ has size polynomial in $|V(G)|$ by the reversal theorem.
The arithmetic consists only of additions and comparisons of the input annotations.
\end{proof}

\begin{corollary}
\label{lem:locally-sbe-local-focv}
Every graph class of locally structurally bounded expansion admits efficient \textsc{Local} \FOCV.
\end{corollary}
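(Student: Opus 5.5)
The plan is to apply \Cref{lem:local-neighborhood-optimization} directly, so two hypotheses have to be checked for a class $\Cc$ of locally structurally bounded expansion.

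\emph{First hypothesis: fixed-parameter tractable first-order model checking on $\Cc$ itself.} A class of locally structurally bounded expansion need not itself have structurally bounded expansion. It is, however, locally of bounded expansion up to transductions, and in particular it is monadically stable, or at least monadically dependent via its neighborhoods. For these classes I would invoke the known model-checking results. The most direct route uses Gaifman locality (\Cref{thm:refined-free-variable-gaifman} applied to sentences). Evaluating a sentence reduces to two tasks: evaluating $r$-local formulas on neighborhoods $G[N_r^G[v]]$, which lie in a structurally bounded expansion class where model checking is fixed-parameter tractable via the reversal theorem of Dreier, Gajarsk\'y and Pilipczuk, and solving a distance-independence ("scattered set") problem on the marked vertices. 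The scattered-set step is the main obstacle. I would handle it with the standard Frick--Grohe argument: either a greedy choice finds many far-apart marked vertices, or all marked vertices lie in boundedly many balls of radius $2r$, each of which is again a neighborhood of bounded radius and hence lies in a structurally bounded expansion class.

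\emph{Second hypothesis: efficient weighted \FO tuple optimization on neighborhoods, with the center marked.} For every radius $s$, the class $\{G[N_s^G[a]]\colon G\in\Cc,\ a\in V(G)\}$ has structurally bounded expansion by definition. Adding a unary predicate for the center preserves this. The reason is that structurally bounded expansion is closed under adding a single unary color: we color the corresponding vertex in the bounded-expansion preimage, obtained from the transduction, and let the interpretation read off that color. Alternatively, it suffices to observe that the reversal theorem is applied to colored graphs anyway. \Cref{lem:sbe-weighted-fo-cost-value-optimization} then gives efficient weighted \FO tuple optimization on this colored neighborhood class.

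With both hypotheses in place, \Cref{lem:local-neighborhood-optimization} yields efficient \textsc{Local} \FOCV on $\Cc$. The key steps, in order, are: closure of the neighborhood class under the center color; the tuple optimizer from \Cref{lem:sbe-weighted-fo-cost-value-optimization}; model checking on $\Cc$ via locality; and finally the application of the lemma. I expect the model-checking step to need the most care. If the paper prefers a citation over the Frick--Grohe-style argument, I would cite fixed-parameter tractable model checking on monadically stable classes, once I have verified that locally structurally bounded expansion classes are monadically stable.
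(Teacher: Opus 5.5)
Your proposal is correct and matches the paper's proof: both apply \Cref{lem:local-neighborhood-optimization}, taking the centered tuple optimizer from \Cref{lem:sbe-weighted-fo-cost-value-optimization} on the neighborhood classes (which have structurally bounded expansion) and taking model checking on $\Cc$ from Gaifman locality together with the reversal and bounded-expansion machinery. You add detail the paper compresses into one sentence: closure of the neighborhood class under the center color, and the Frick--Grohe scattered-set step, in which balls whose centers lie close together must be merged into bounded-radius clusters before the local subproblems are solved.
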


\begin{proof}
Let $\Cc$ be a graph class of locally structurally bounded expansion.
For every fixed radius $s$, the class of all induced neighborhoods $G[N_s^G[a]]$ with $G\in\Cc$ and $a\in V(G)$ has structurally bounded expansion.
By \Cref{lem:sbe-weighted-fo-cost-value-optimization}, every such neighborhood class admits efficient weighted \FO tuple optimization, even after adding finitely many unary predicates.

First-order model checking on locally structurally bounded expansion classes follows from Gaifman locality, together with the same bounded-neighborhood reversal and bounded expansion model checking machinery.
Thus, the hypotheses of \Cref{lem:local-neighborhood-optimization} are satisfied, and the claim follows.
\end{proof}

\begin{lemma}
\label{lem:locally-sbe-anchored-wmdi}
Every graph class of locally structurally bounded expansion admits efficient anchored weighted multicolored distance-$r$ independence for every fixed $r\in\N$.
\end{lemma}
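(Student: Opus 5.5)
The plan is to reduce the anchored problem to a weighted \FO tuple optimization problem and solve that with the same machinery as \Cref{lem:sbe-weighted-fo-cost-value-optimization}, using locality to deal with the fact that only neighborhoods have structurally bounded expansion.

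\textbf{Step 1: remove the anchors.} Given $G$, candidate sets $A_1,\ldots,A_p$, anchor map $a$, costs $\kappa$, profits $\nu$, threshold $T$ and budget $B$, first discard every candidate with $\kappa(y)>B$. For each color $i$, anchor vertex $v$ and cost value $q\in\{0,\ldots,B\}$, keep only the best profit
\[
\mu_i(v,q)\coloneqq\max\{\nu(y): y\in A_i,\ a(y)=v,\ \kappa(y)=q\}.
\]
This does not change feasibility, since the constraints depend only on anchors, costs and profits. To stay within vertex-based tuple optimization, which carries a single cost and value per vertex and coordinate, attach $B+1$ fresh pendant leaves $v^{(q)}$ to each anchor $v$. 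This stays inside locally structurally bounded expansion: it is a first-order transduction (copying plus pendant vertices) of the original class, and locally still so. Give coordinate $i$ the cost $\gamma_i(v^{(q)})=q$ and value $w_i(v^{(q)})=\mu_i(v,q)$ where defined, and use a sentinel $-M$ or cost $B+1$ otherwise, as in \Cref{lem:finite-threshold-tuple-reduction}. The distance condition between anchors becomes $\dist>r+2$ between chosen leaves, and the leaves are marked by a unary predicate. This yields a weighted \FO tuple optimization instance for the quantifier-light formula
\[
\theta(\bar x)=\bigwedge_i L(x_i)\wedge\bigwedge_{i<j}\dist_{>r+2}(x_i,x_j)
\]
on a unary expansion of the modified graph $G'$.

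\textbf{Step 2: solve the resulting tuple optimization.} The class of $G'$ is only locally structurally bounded expansion, so \Cref{lem:sbe-weighted-fo-cost-value-optimization} cannot be applied directly. Instead, run the proof of \Cref{thm:conditional-FO-value-discovery} on $\theta$, which is possible because efficient \textsc{Local} \FOCV holds by \Cref{lem:locally-sbe-local-focv}. That proof in turn generates a distance-independence instance, so the argument would be circular. The remedy is the standard greedy/splitter-style argument for distance independence in the sparse-neighborhood setting, which is the main obstacle.

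\textbf{The main obstacle and the intended resolution.} The plan is to handle the global packing directly via the \emph{bounded-expansion reversal applied locally plus a Gaifman-type argument for independence}.

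Take an optimal solution. For each color, the chosen anchor could be replaced by any other candidate of the same color with at least as good a (cost, profit) profile, provided it is far from the other $p-1$ anchors.

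Fix a cost vector $(q_1,\ldots,q_p)$ with $\sum_i q_i\le B$. There are polynomially many such vectors, since $B$ is unary, although this count is $B^{p}$, which is XP rather than FPT. This suggests a refinement: use a knapsack-style dynamic program over colors combined with a pruning lemma. For each color, keep only the top-ranked candidates, per cost value, until a ball-packing certificate shows that $p$ of them are pairwise at distance greater than $2r$. If that happens, any $p-1$ other anchors block at most $p-1$ of them, so we may assume the chosen candidate lies among these boundedly many.

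If no such packing exists, the candidates of that color, per cost value, are covered by at most $p$ balls of radius $2r$. The whole instance then localizes into boundedly many neighborhoods, each of structurally bounded expansion. There, \Cref{lem:sbe-weighted-fo-cost-value-optimization}, applied with extra colors encoding the selected balls, solves the max-plus problem.

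I expect this pruning/localization dichotomy to be the hard step. In particular, combining it with the budget so that the bounded candidate lists remain valid under the cost constraint requires that the ranking be done per cost value.
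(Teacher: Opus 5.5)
There is a genuine gap. Your Step~1 is fine, although its justification should not be a copying transduction, since $B+1$ is unbounded; attach the leaves in the bounded-expansion preimages of the neighborhood classes instead. But Step~2 \emph{is} the lemma, and your pruning/localization sketch does not give an FPT algorithm.

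\begin{itemize}
\item\emph{The kept prefix is not bounded.} For a fixed color and cost value $q$, keeping the profit-sorted prefix until $p$ pairwise $(>2r)$-far members appear does not bound the number of kept candidates. It only shows that the candidates relevant for this pair lie in at most $p$ balls of radius $2r$.
\item\emph{The balls depend on $q$.} The cost $q$ ranges over $B+1$ values, and $B$ is only polynomial in $|V(G)|$. For example, one color may have candidates $y_0,\ldots,y_B$ with $\kappa(y_q)=\nu(y_q)=q$, placed at pairwise huge distances. None dominates another, so all $B+1$ balls stay relevant as long as the budget split is unknown. Hence your claimed localization into boundedly many neighborhoods holds only after fixing $(q_1,\ldots,q_p)$. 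That is exactly the $(B+1)^p$ enumeration you already identified as \XP{}.
\item\emph{The knapsack dynamic program over colors does not repair this.} The constraints $\dist_G(a(y_i),a(y_j))>r$ couple all colors, so the usable choices for one color depend on where the others are placed. Decoupling far-apart groups of colors is precisely the clustering step in the proof of \Cref{thm:conditional-FO-value-discovery}. That step outputs another anchored distance-independence instance, which is the circularity you noticed.
\end{itemize}

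What is missing is an FPT way to combine an additive budget and profit with a \emph{global} pairwise far-apart condition on a class that is only \emph{locally} of structurally bounded expansion. \Cref{lem:sbe-weighted-fo-cost-value-optimization} cannot be applied to the whole graph.

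For comparison, the paper gives no direct argument. It uses that classes of locally structurally bounded expansion are monadically stable and defers to \Cref{lem:monadic-stable-anchored-wmdi}. This does not supply your missing step for free either. That lemma is justified only by appeal to \Cref{cor:existential-FO-value-monadically-stable}, and that appeal does not apply as written. The condition $\bigwedge_{i<j}\dist_{>r}(x_i,x_j)$ is universal, not existential. It does not transfer from $G[U]$ to $G$, which is the absoluteness step that theorem's proof relies on. So the weighted global packing problem you ran into is also left unargued on the paper's route.
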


\begin{proof}
Classes of locally structurally bounded expansion are monadically stable.
Hence, the claim follows from the anchored distance-independence lemma for monadically stable classes, \Cref{lem:monadic-stable-anchored-wmdi}, proved below.
\end{proof}

\sbefocv*

\begin{proof}
By \Cref{lem:locally-sbe-local-focv}, the class $\Cc$ admits efficient \textsc{Local} \FOCV.
By \Cref{lem:locally-sbe-anchored-wmdi}, it admits efficient anchored weighted multicolored distance-$r$ independence for every fixed $r\in\N$.
The result follows from \Cref{thm:conditional-FO-value-discovery}.
\end{proof}

\subsection{(Locally) bounded cliquewidth}

We next prove the results for bounded and locally bounded cliquewidth.
For the global theorem, we use the LinEMSOL optimization theorem of Courcelle, Makowsky, and Rotics~\cite{CourcelleMR00}. A cliquewidth expression of width bounded by a function of $q$ whenever the cliquewidth of $G$ is at most $q$ can be computed in time $f(q)\cdot |V(G)|^{O(1)}$ by the algorithm of 
Oum and Seymour~\cite{OumSeymour06}.

\begin{lemma}
\label{lem:cw-weighted-fo-cost-value-optimization}
For every $q\in\N$, the class of graphs of cliquewidth at most $q$ admits efficient weighted \FO tuple optimization.
\end{lemma}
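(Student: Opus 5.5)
The plan is to reduce to the LinEMSOL optimization theorem of Courcelle, Makowsky, and Rotics~\cite{CourcelleMR00}. That theorem optimizes a linear function of the cardinalities (or weights) of the free set variables of an MSO formula, on graphs given with a cliquewidth expression of bounded width. First compute such an expression in time $f(q)\cdot|V(G)|^{O(1)}$ via Oum and Seymour~\cite{OumSeymour06}.

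Fix the colored first-order formula $\theta(x_1,\ldots,x_k)$. Since all numbers are unary, the value and cost functions have polynomially bounded ranges. Encode them by unary predicates:
\begin{itemize}
\item for every coordinate $i$ and every possible value $\omega$ in the range of $w_i$, a predicate $P_{i,\omega}$;
\item for every coordinate $i$ and every possible cost $\gamma\le B$, a predicate $Q_{i,\gamma}$.
\end{itemize}
The number of such predicates is polynomial rather than bounded, so they cannot be added as colors directly. Instead, use LinEMSOL's native weight handling. Replace each first-order variable $x_i$ by a set variable $X_i$ and assert $|X_i|=1$ in MSO. Relativize $\theta$ to the unique elements of the $X_i$, giving an MSO formula $\Theta(X_1,\ldots,X_k)$. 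LinEMSOL permits vertex weight functions $f_1,\ldots,f_s$ and optimizes expressions of the form $\sum_i\sum_j a_{ij}\, f_j(X_i)$. Take one weight function per coordinate, namely $w_i$ evaluated on $X_i$, and maximize $\sum_i w_i(X_i)$. Since $X_i$ is a singleton, this equals $\sum_i w_i(x_i)$.

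The cost constraint $\sum_i\gamma_i(x_i)\le B$ is the main obstacle. Plain LinEMSOL optimizes a linear objective but does not impose a linear side constraint on a second weight. The first approach to try is to enumerate cost splits: guess a vector $(g_1,\ldots,g_k)$ with $g_i\in\{0,\ldots,B\}$ and $\sum g_i\le B$. There are at most $(B+1)^k$ such vectors, which is polynomial only for fixed $k$; this gives $n^{O(k)}$, which is XP rather than FPT. So instead, push the budget into the objective. Define the combined weight $\widetilde w_i(v)\coloneqq w_i(v)\cdot M' - \gamma_i(v)$ for a large multiplier $M'$. This still does not enforce the bound.

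The fix is to encode the constraint per singleton via dynamic programming. For each coordinate $i$ and each cost value $g$, precompute the set $V_{i,g}=\{v:\gamma_i(v)=g\}$. Rather than coloring by these sets, observe that the tuple optimization problem is itself solved by a direct dynamic program over the cliquewidth expression. The table is indexed by the MSO/FO type, by the set of coordinates already placed, and by the partial cost in $\{0,\ldots,B\}$, and it stores the maximum partial value. The standard Courcelle-style type composition handles joins and relabelings, because the types of fixed quantifier rank of $\theta$ over $q$-labeled graphs with $k$ distinguished vertices form a finite set depending only on $(q,k,\theta)$. The two numerical operations are simple: at disjoint union, partial costs add (capped at $B+1$) and partial values add; at the leaves, placing coordinate $i$ on vertex $v$ contributes $(\gamma_i(v),w_i(v))$. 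This is exactly the proof of LinEMSOL, extended by one unary budget counter. The table has size $f(q,k,\theta)\cdot(B+2)$, and each combination costs $O(B^2)$, so the running time is $f(q,k,\theta)\cdot n^{O(1)}$ since $B$ is unary. The verification that the budget coordinate composes correctly with Feferman–Vaught type composition is the step I expect to require the most care. The final answer is the maximum over accepting root types with all $k$ coordinates placed and cost at most $B$, or $-\infty$ if there is none.
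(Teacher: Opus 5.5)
Your proof is correct, but at the one substantive step it takes a different route from the paper. The paper sets up what you set up: an Oum--Seymour expression, singleton set variables $X_1,\ldots,X_k$, and the objective $\sum_i\sum_{v\in X_i}w_i(v)$. It then simply asserts that the LinEMSOL theorem of Courcelle, Makowsky, and Rotics solves the problem \emph{together with} the side constraint $\sum_i\sum_{v\in X_i}\gamma_i(v)\le B$, in time polynomial in the unary budget.

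You instead note that LinEMSOL in its standard form only optimizes a linear objective over MSO-definable tuples of sets, with no linear side constraints. So you open up its proof: a dynamic program over the $q$-expression whose table is indexed by the bounded-rank type of the partially placed tuple and by the accumulated cost in $\{0,\ldots,B\}$, storing the maximum partial value, with max-plus convolution at disjoint-union nodes. This is exactly the pseudo-polynomial extension that the paper's citation tacitly relies on. Your route gives a self-contained justification of that non-standard step and makes the dependence on the unary $B$ explicit: table size $f(q,k,\theta)\cdot(B+1)$ and $O(B^2)$ work per union node. The paper's route gains only brevity.

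Two details to settle when you write it out:
\begin{itemize}
\item Tuples range over $V(G)^k$, so at a leaf you must allow any subset of the not-yet-placed coordinates to be put on the same vertex, contributing the summed costs and values, not just a single coordinate~$i$.
\item To make ``types of partial placements'' precise, mark coordinate $i$ by a unary predicate $P_i$ that is a singleton or empty. Take rank-$(\operatorname{qr}(\theta)+k)$ types of the structure expanded by the labels, the colors, and $P_1,\ldots,P_k$, so that the root type decides $\exists\bar x\,(\bigwedge_i P_i(x_i)\wedge\theta)$. Disjoint union is then covered by Feferman--Vaught. Edge creation and relabeling are quantifier-free redefinitions of $E$ and the label predicates, so the rank-$r$ type of the result is determined by the rank-$r$ type of the argument.
\end{itemize}
Your abandoned first attempts (polynomially many value and cost predicates, guessing cost splits, folding cost into the objective) play no role in the final argument and can be dropped.
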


\begin{proof}
Fix a colored first-order formula $\varphi(\bar x)$, where $\bar x=(x_1,\ldots,x_k)$.
Let $G$ be a graph of cliquewidth at most $q$, let $w_i\colon V(G)\to\Z$ be finite value functions, let $\gamma_i\colon V(G)\to\N$ be finite cost functions, all encoded in unary, and let $B\in\N$ be a unary budget.
We compute a cliquewidth expression of width bounded by a function of $q$ using the approximation algorithm of Oum and Seymour.
Additional unary colors can be incorporated by multiplying the number of labels by a constant depending only on the fixed formula.

We translate the tuple variables into singleton set variables $X_1,\ldots,X_k$ in the standard way.
We obtain an $\MSO_1$ formula $\varphi^\star(X_1,\ldots,X_k)$ such that singleton assignments to the sets correspond exactly to tuples satisfying $\varphi$.

We optimize the linear objective $\sum_{i=1}^k\sum_{v\in X_i}w_i(v)$ subject to the linear side constraint $\sum_{i=1}^k\sum_{v\in X_i}\gamma_i(v)\le B$.
By the LinEMSOL optimization theorem on graphs of bounded cliquewidth, this optimization problem is solvable in time $f(q,k,\varphi)\cdot |V(G)|^{O(1)}$, with arithmetic polynomial in the bit length of the input values and in the unary budget.
This yields exactly the required weighted tuple optimum.
\end{proof}

\cwfocv*

\begin{proof}
By \Cref{lem:cw-weighted-fo-cost-value-optimization}, graph classes of cliquewidth at most $q$ admit efficient weighted \FO tuple optimization for every fixed $q$.
The claim follows from \Cref{lem:tuple-optimization-implies-focv}, with $q$ included in the parameter.
\end{proof}

Again, the local version is immediate. 

\begin{lemma}
\label{lem:local-cw-local-focv}
Every graph class of locally bounded cliquewidth admits efficient \textsc{Local} \FOCV.
\end{lemma}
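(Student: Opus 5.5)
The plan is to invoke \Cref{lem:local-neighborhood-optimization}, which has two hypotheses. First, the class $\Cc$ must have fixed-parameter tractable first-order model checking. Second, for every radius $s$, the class of induced neighborhoods $G[N_s^G[a]]$ with a unary predicate for the center must admit efficient weighted \FO tuple optimization.

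The second hypothesis is the easy one. Since $\Cc$ has locally bounded cliquewidth, there is a function $g$ such that every neighborhood $G[N_s^G[a]]$ has cliquewidth at most $g(s)$. Adding one unary predicate for the center at most doubles the number of labels needed in a cliquewidth expression, and in any case \Cref{lem:cw-weighted-fo-cost-value-optimization} already handles colored formulas. That lemma then gives efficient weighted \FO tuple optimization on this neighborhood class. The parameter dependence is through $g(s)$, which depends only on $s$. Here $s$ will be the radius $D_\ell+r$ (or $s+r$ inside the proof of the lemma), and this is fixed by the value expression and the arity.

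For the first hypothesis I will argue that \FO model checking is fixed-parameter tractable on classes of locally bounded cliquewidth, and I will prove it rather than cite it as folklore. By Gaifman's theorem, every sentence is a Boolean combination of basic local sentences of the form: there exist $x_1,\ldots,x_m$ that are pairwise at distance greater than $2r$ and each satisfy an $r$-local formula $\psi(x)$. The argument then has three steps:
\begin{enumerate}
\item For every vertex $v$, evaluate $\psi(v)$ on $G[N_r^G[v]]$. This neighborhood has cliquewidth at most $g(r)$, so Courcelle--Makowsky--Rotics, together with Oum--Seymour to obtain an approximate expression, solves each evaluation in FPT time. Mark the satisfying vertices with a color $P$.
\item Decide whether $P$ contains $m$ vertices that are pairwise more than $2r$ apart. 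This is exactly distance-$2r$ independence restricted to a colored set, i.e.\ the ordinary vertex case of anchored multicolored distance independence with all costs zero.
\item Combine the answers for the basic local sentences according to the Boolean combination.
\end{enumerate}

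I expect step 2 to be the main obstacle, since bounded local cliquewidth does not bound the global cliquewidth. I would use the standard Frick--Grohe style argument. Greedily pick marked vertices that are pairwise far apart. If $m$ of them are found, accept. Otherwise all marked vertices lie within distance $4r$ of at most $m-1$ chosen vertices. The search then reduces to the union of boundedly many balls of radius about $4r\cdot m$, restricted to its connected pieces. Each such piece lies inside a neighborhood of radius depending only on $r$ and $m$, so it has bounded cliquewidth, and there the distance-independence condition is MSO-expressible and decided by Courcelle's theorem. Different pieces are combined by a bounded brute-force distribution of the $m$ vertices among them.

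If this step is cleaner via a citation, I would instead appeal to the known result that FO model checking is FPT on classes of locally bounded cliquewidth. With both hypotheses verified, \Cref{lem:local-neighborhood-optimization} gives efficient \textsc{Local} \FOCV.
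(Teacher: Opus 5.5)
Your proposal is correct and matches the paper's proof: both reduce to \Cref{lem:local-neighborhood-optimization}, obtain the neighborhood tuple optimizer from \Cref{lem:cw-weighted-fo-cost-value-optimization} (the center predicate costs only a constant factor in labels), and use fixed-parameter tractable \FO model checking on classes of locally bounded cliquewidth as the remaining hypothesis. The only difference is that the paper simply cites this model-checking theorem, whereas you also sketch its Frick--Grohe-style proof via Gaifman locality; that sketch is sound as outlined.
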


\begin{proof}
Let $\Cc$ be a graph class of locally bounded cliquewidth.
For every fixed radius $s$, all induced neighborhoods $G[N_s^G[a]]$ with $G\in\Cc$ and $a\in V(G)$ have cliquewidth bounded by a constant depending only on $s$ and $\Cc$.
By \Cref{lem:cw-weighted-fo-cost-value-optimization}, these neighborhood classes admit efficient weighted \FO tuple optimization after adding a unary predicate for the center.
First-order model checking on $\Cc$ is fixed-parameter tractable by the standard local cliquewidth model checking theorem.
Therefore, \Cref{lem:local-neighborhood-optimization} applies.
\end{proof}

The following lemma follows from \Cref{lem:local-cw-local-focv}, as the multicolored distance-$r$ independent set is itself an \FO formula.

\begin{lemma}
\label{lem:local-cw-anchored-wmdi}
Every graph class of locally bounded cliquewidth admits efficient anchored weighted multicolored distance-$r$ independence for every fixed $r\in\N$.
\end{lemma}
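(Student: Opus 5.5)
The plan is to reduce the anchored weighted multicolored distance-$r$ independence problem on $\Cc$ to a \textsc{Local} \FOCV-style computation combined with an FO-definable global condition, and then apply the bounded-cliquewidth machinery. The difficulty is that the candidates $y_1,\ldots,y_p$ are pairwise far apart, so they do not fit into a single bounded-radius neighborhood. A purely local argument therefore cannot succeed. Instead I would treat the whole problem as an instance of \FOCV on a unary expansion of $G$. The pattern is then handled by the locality-based procedure behind \Cref{thm:conditional-FO-value-discovery}: the distance-independence subproblem produced by that procedure is again of the same shape, so the recursion has to be broken.

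I would break it as follows. First encode candidates by anchors. For each color $i$ and each anchor $v$, only the candidates in $A_i$ with anchor $v$ matter. For each budget value $q\le B$ I keep only the best profit among those candidates with cost at most $q$; since $B$ is polynomially bounded, this is polynomial. This turns the instance into a problem on vertices with per-color cost functions $c_i(v,q)$ and values. To remove the dependence on $q$, I would replace each vertex $v$ by the pairs $(v,q)$, but that changes the graph. So instead I would directly run a dynamic program over a cliquewidth expression.

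Concretely, $\dist_G(x_i,x_j)>r$ is not local, but it is FO-definable. On a graph of locally bounded cliquewidth, Gaifman/Grohe–Schweikardt locality (\Cref{thm:refined-free-variable-gaifman}) applied to $\bigwedge_{i\ne j}\dist_{>r}(x_i,x_j)$ has a trivial form: the pattern graph $H$ is edgeless, and each component is a single variable. The check is then simply whether $p$ anchors pairwise at distance $>r$ exist with maximum total profit under the budget.

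For this I would use the standard greedy/packing argument used for FO model checking on locally bounded cliquewidth (Frick–Grohe style). Compute, for each color, its candidate anchors. If some color has many candidates spread out, that is, a distance-$2r$ scattered set of size $>p\cdot(\text{something})$, then that color can always be satisfied last. The weighted version of this does not immediately work, though, because the best-profit choice matters.

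The \textbf{main obstacle} is therefore this weighted packing step. What I would try first is the ``clustering by scales'' trick from the proof of \Cref{thm:conditional-FO-value-discovery}. Guess a scale $\ell$ at which the chosen anchors split into clusters, each contained in a ball of radius $D_\ell$ around a representative, with representatives separated by more than $R_\ell$. Within a ball, the problem is local and is solved by \Cref{lem:local-cw-local-focv}, which handles bounded-cliquewidth neighborhoods with costs and values via LinEMSOL (\Cref{lem:cw-weighted-fo-cost-value-optimization}).

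This again leaves a far-apart selection of cluster representatives, with fewer effective constraints. The remaining selection uses only unary per-vertex data, namely the best local profit for each cost. At that point I would argue that the problem reduces to group knapsack whenever the far-apart constraint can be dropped. If it cannot be dropped, I would argue via a bounded number of representatives per color, namely the top candidates by profit at each cost, keeping $f(p,r)$ of them, so that a brute-force search over boundedly many combinations suffices.
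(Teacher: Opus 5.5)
\textbf{There is a genuine gap.} You correctly identify the crux: choosing one candidate per colour, with pairwise far-apart anchors, under a \emph{joint} cost budget and profit threshold. The proposal never resolves it, and each tool you list for it fails.
\begin{itemize}
\item \emph{Global dynamic program.} $G$ has only locally bounded cliquewidth, so there is no bounded-width expression for $G$ itself. Neither a global dynamic program nor \Cref{lem:cw-weighted-fo-cost-value-optimization} is available.
\item \emph{Clustering by scales.} This makes no progress. If the chosen anchors are pairwise very far apart, the scale $\ell=0$ is the one selected and every cluster consists of a single colour. The remaining ``far-apart selection of representatives'' is then exactly the original anchored independence problem, with the same $p$ colours and the larger radius $R_0$. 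That is the very recursion you wanted to break, not a problem with fewer constraints.
\item \emph{Representative sets.} This step fails under either reading.
  \begin{itemize}
  \item The top-profit candidates of a colour at a fixed cost level may all have anchors within distance $r$ of a single anchor chosen for another colour, so they are blocked simultaneously.
  \item The standard exchange argument requires $p$ pairwise $2r$-scattered candidates that dominate $y_i$ in both cost and profit. The profit-ordered prefix before such candidates appear can be arbitrarily large.
  \item An antichain (costs $1,\dots,m$, profits $1,\dots,m$, anchors spread over $G$) admits no exchange at all.
  \item Even if $f(p,r)$ representatives per cost level sufficed, each colour would keep $f(p,r)(B+1)$ candidates. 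Brute force over combinations then takes $n^{O(p)}$ time, i.e.\ \XP\ rather than FPT.
  \end{itemize}
\item \emph{Group knapsack.} It restores polynomial dependence on $B$ only once the distance constraint has been discharged, and you give no way to discharge it.
\end{itemize}

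\textbf{Comparison with the paper.} The paper's own proof is a single sentence: the independence condition is first-order, so the lemma follows from \Cref{lem:local-cw-local-focv}. Your remark that \textsc{Local} \FOCV only searches tuples inside one ball $N^G_s[u]$ applies to that sentence as well. So the paper does not supply the missing global step either.

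\textbf{A route that works.} One option is to count instead of optimise: counting admits inclusion--exclusion, while max-plus optimisation does not.
\begin{itemize}
\item Discard candidates with $\kappa>B$.
\item Give colour $i$ at vertex $v$ the Laurent polynomial $P_{i,v}=\sum_{y\in A_i,\,a(y)=v}X^{\kappa(y)}Y^{\nu(y)}$. It has polynomially many terms because the numbers are unary.
\item Fix a set $F$ of colour pairs. The weighted number of selections in which every pair in $F$ has anchors at distance at most $r$ factorises over the components $D$ of $([p],F)$.
\item Within a component $D$, all anchors lie within distance $(|D|-1)r$ of the anchor of $x_{\min D}$. Sum over the possible positions $u$ of that anchor and count inside $G[N^G_{|D|r}[u]]$. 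This graph has bounded cliquewidth and preserves all distances up to $r$ between these anchors. The count is therefore computed by the counting version of the cliquewidth dynamic program, with coefficients in $\Z[X,Y,Y^{-1}]$.
\item Inclusion--exclusion over all $2^{\binom{p}{2}}$ sets $F$ yields, for every pair $(c,v)$, the exact number of valid selections with cost $c$ and profit $v$.
\item Finally, check whether some pair with $c\le B$ and $v\ge T$ has a positive coefficient.
\end{itemize}
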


\localcwfocv*

\begin{proof}
By \Cref{lem:local-cw-local-focv}, the class $\Cc$ admits efficient \textsc{Local} \FOCV.
By \Cref{lem:local-cw-anchored-wmdi}, it admits efficient anchored weighted multicolored distance-$r$ independence for every fixed $r \in \N$.
The result follows from \Cref{thm:conditional-FO-value-discovery}.
\end{proof}

\subsection{Monadically stable classes}
We finally prove the two monadic-stability results. The existential weighted result uses quasi-bounded-size bounded-shrubdepth decompositions. The full first-order result is for the Boolean optimization version and uses first-order model checking together with the anchored distance-independence lemma.

For the existential fragment, we first recall the decomposition machinery used for these graph classes. Let~$X$ be a graph invariant, such as treedepth, shrubdepth, cliquewidth, or rankwidth. 
The class $\Cc$ has \emph{quasi-bounded-size bounded-$X$ decompositions} if, for every $p\in\N$ and every $\varepsilon>0$, there is a number $d(p,\varepsilon)$ such that every graph $G\in\Cc$ admits a family $\mathcal U=\{U_1,\ldots,U_m\}$ of subsets of $V(G)$ satisfying the following conditions:
\begin{itemize}
    \item $m\le |V(G)|^\varepsilon$.
    \item For every set $A\subseteq V(G)$ with $|A|\le p$, there is an index $i\in[m]$ such that $A\subseteq U_i$.
    \item For every $i\in[m]$, the induced subgraph $G[U_i]$ has $X$-value at most $d(p,\varepsilon)$.
\end{itemize}
We call such decompositions \emph{efficiently computable} if, for every fixed choice of the parameters in the definition, the family $\mathcal U$ and certificates for the asserted bounds on the subgraphs $G[U_i]$ can be computed in time $f(p,\varepsilon)\cdot |V(G)|^{O(1)}$.
In the bounded-size case, the dependence on $\varepsilon$ is omitted.
Certificates may be treedepth decompositions, tree-models, cliquewidth expressions, or rank-decompositions, depending on the invariant $X$.

The sparsity theory of Ne\v{s}et\v{r}il and Ossona de Mendez shows that (monotone) nowhere dense classes are exactly the classes with quasi-bounded-size bounded-treedepth decompositions~\cite{GroheKS17,NesetrilOM12}.
For hereditary graph classes, Braunfeld et al.~characterize hereditary monadic stability by quasi-bounded-size bounded-shrubdepth decompositions~\cite{BraunfeldNOdMS25}.

The last characterization is presented as an existential statement in~\cite{BraunfeldNOdMS25}.
The proof combines sparse neighborhood covers with the existence of bounded-length winning strategies in the Flipper game.
The neighborhood covers were made algorithmic in the first-order model-checking work on monadically stable graph classes~\cite{DreierEMMPT23}, and the Flipper-game strategies can be computed efficiently by the algorithmic version of the Flipper-game characterization of monadically stable classes~\cite{GajarskyMMOPPST23}.
Substituting these algorithmic ingredients into the decomposition construction yields the efficiently computable quasi-bounded-size bounded-shrubdepth decompositions used here.

\begin{theorem}[Algorithmic version of \cite{GajarskyMMOPPST23}]
\label{thm:alg-low-shrubdepth-covers-monadic-stable}
Let $\Cc$ be a monadically stable graph class.
Then~$\Cc$ admits efficiently computable quasi-bounded-size bounded-shrubdepth decompositions.
\end{theorem}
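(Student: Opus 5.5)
The statement to prove is that every monadically stable class $\Cc$ admits efficiently computable quasi-bounded-size bounded-shrubdepth decompositions. I would follow the existential proof of Braunfeld et al.~\cite{BraunfeldNOdMS25} step by step. At each step where a non-constructive object appears, I would substitute an algorithmic counterpart, and I would check that the substitution preserves the quantitative bounds. The existential proof has two layers: a sparse neighborhood cover, followed by a recursive flipping procedure driven by Flipper-game strategies. Since monadic stability is preserved under taking induced subgraphs of members of the hereditary closure, I may assume $\Cc$ is hereditary throughout.

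\emph{Step 1 (neighborhood covers).} For a fixed radius $r$ depending on $p$, I would use the algorithmic sparse neighborhood covers from the model-checking work on monadically stable classes~\cite{DreierEMMPT23}. These give, in time $f(r,\varepsilon)\cdot|V(G)|^{O(1)}$, a family of clusters with the following properties: every $r$-ball lies in some cluster, every vertex lies in at most $|V(G)|^{\varepsilon}$ clusters, and every cluster has bounded radius in $G$. The cover property is what guarantees that any set $A$ with $|A|\le p$ whose elements are pairwise close lies in a single cluster.

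\emph{Step 2 (Flipper recursion).} Inside each cluster I would run the Flipper-game strategy computed by the algorithm of~\cite{GajarskyMMOPPST23}. This strategy wins in a number of rounds $\ell$ that depends only on $\Cc$ and the radius. In each round, the Flipper applies a definable flip with a bounded number of parts and then localizes to a ball. I would simulate one round by branching over all possible centers of that ball, and then recursively cover the flipped graph again by Step 1. Because $\ell$ is bounded, the recursion tree has depth $\ell$, so the total number of sets produced is at most $(|V(G)|^{\varepsilon'})^{\ell}$. Choosing $\varepsilon'=\varepsilon/\ell$ gives the required bound $m\le|V(G)|^{\varepsilon}$. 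At the leaves of the recursion the game has terminated, meaning the current graph is a single vertex or a bounded-size set. Undoing the $\ell$ bounded flips then shows that the induced subgraph on the leaf set has shrubdepth bounded by a function of $\ell$ and the number of flip parts. The flip sequence itself serves as the certificate: it yields an explicit tree-model of bounded depth and bounded number of labels.

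\emph{Step 3 (covering $p$-sets).} For a set $A$ of size at most $p$, I would group the elements of $A$ by their distance at the current scale, using the same scale-separation trick as in the proof of \Cref{thm:conditional-FO-value-discovery}. Then I would argue as in~\cite{BraunfeldNOdMS25} that $A$ survives every round in some branch, and hence $A\subseteq U_i$ for some leaf set $U_i$.

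\emph{Main obstacle.} I expect the hardest step to be keeping the bounds controlled through the recursion. Two things must hold simultaneously. First, flips change distances, so the covers in Step 1 must be recomputed in the flipped graphs. This requires the flipped graphs to stay in a monadically stable class, which is true because such classes are closed under bounded flips by a transduction argument. Second, the radius and $\varepsilon$ used at each level must be chosen up front as functions of $p$ and $\ell$ only. My first attempt would be a top-down choice of parameters, $r_{i+1}=g(r_i)$ and $\varepsilon_i=\varepsilon/\ell$, verifying that the degree of every polynomial factor in the running time remains independent of $|V(G)|$.
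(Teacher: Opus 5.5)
At the level of detail the paper gives, which is only the remark that one substitutes the algorithmic covers of \cite{DreierEMMPT23} and the algorithmic Flipper strategies of \cite{GajarskyMMOPPST23} into the existential construction of \cite{BraunfeldNOdMS25}, your plan agrees with the paper. The concrete construction in your Steps~2 and~3, however, does not produce a quasi-bounded-size decomposition. Every set you output is a subset of a single cluster of the top-level cover, so it lies in a ball of bounded radius in $G$. Yet the family must cover \emph{every} $A$ with $|A|\le p$, including sets whose elements are pairwise far apart. For $G$ a path on $n$ vertices and $p=2$, none of your sets contains both endpoints.

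The count fails as well. An $r$-neighborhood cover of the path has at least $n/(4r+1)$ clusters. Your bound $(|V(G)|^{\varepsilon'})^{\ell}$ confuses the \emph{degree} of a cover, i.e.\ the number of clusters containing a fixed vertex (the quantity you bound in Step~1), with its total number of clusters. Branching over all ball centers likewise costs a factor $|V(G)|$ per round. In fact, $\sum_i\binom{|U_i|}{p}\ge\binom{n}{p}$ together with $m\le n^{\varepsilon}$ forces some $U_i$ to have size roughly $n^{1-\varepsilon/p}$. So the sets $U_i$ are necessarily global objects, not local clusters.

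This also puts your shrubdepth argument in the wrong place. At a leaf, where the arena is a single vertex, bounded shrubdepth is trivial. What must be shown is that a global set, assembled from pieces lying in many different clusters, has bounded shrubdepth. That requires controlling how the pieces interact. For instance, you could make them pairwise far apart in the current flipped graph, so that each level of the recursion is a disjoint union followed by a bounded flip, which keeps SC-depth bounded.

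The missing ingredient is therefore a local-to-global assembly step. It must organize the local pieces from all clusters into only $n^{o(1)}$ global sets, each a bounded-depth nesting of disjoint unions and bounded flips, such that every $p$-set lies in one of them. For a spread-out $A$, this includes the pieces hosting its separate distance groups. Your scale-separation remark in Step~3 points in this direction, but as written $A$ is still required to lie in a leaf of a recursion rooted at a single cluster. Some such assembly step is unavoidable, and it is exactly the part that has to be carried out, and made algorithmic, for the statement to follow; your proposal does not supply it.
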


All the mentioned graph-classes come with first-order model checking meta-theorems.
On nowhere dense classes, \FO model checking is fixed-parameter tractable by the theorem of Grohe, Kreutzer, and Siebertz~\cite{GroheKS17}.
On monadically stable graph classes, \FO model checking is fixed-parameter tractable by the work of Dreier and coauthors~\cite{DreierEMMPT23,DMS23}.

Tuple optimization on bounded shrubdepth is fixed-parameter tractable as a consequence of the fact that bounded shrubdepth classes have structurally bounded expansion and \Cref{lem:sbe-weighted-fo-cost-value-optimization}.  

\existentialfovaluemonadicstable*

\begin{proof}
Let an instance of the existential fragment of \FOCV on a graph $G\in\Cc$ be given.
Apply \Cref{lem:focv-cost-normalization} to the cost functions.
If the normalization rejects, then no feasible tuple exists, and we reject.
We keep the notation $c_i$ for the normalized costs and write~$B$ for the normalized budget.
We then consider one finite case of the value expression at a time.
Let this case be $(\varphi_j,\alpha_j)$, and write the existential formula as $\varphi_j(\bar x)\equiv\exists\bar y\,\theta_j(\bar x,\bar y)$ with~$\theta_j$ quantifier-free.
Let $\bar x=(x_1,\ldots,x_k)$, let $\ell\coloneqq|\bar y|$, and put $p\coloneqq k+\ell$.
Let $T_j\coloneqq W-\alpha_j$.
Let $\gamma_i^j$ and $\omega_i^j$ be the finite cost and value functions produced by \Cref{lem:finite-threshold-tuple-reduction} for the formula~$\varphi_j(\bar x)$, the normalized costs, the value functions, the budget~$B$, and the threshold $T_j$.

Choose a fixed $\varepsilon>0$, for instance $\varepsilon=1$.
Compute the quasi-bounded-size bounded shrubdepth decomposition from \Cref{thm:alg-low-shrubdepth-covers-monadic-stable} for the parameter $p$ and this value of $\varepsilon$.
For every cluster~$U$ of the decomposition, we optimize over tuples $(\bar x,\bar y)\in U^{k+\ell}$ inside the induced graph~$G[U]$.
For the selected coordinates $x_i$, we use the finite value functions $\omega_i^j$ and the finite cost functions $\gamma_i^j$.
For the witness coordinates in $\bar y$, we use value zero and cost zero.
We apply \Cref{lem:sbe-weighted-fo-cost-value-optimization} to the quantifier-free formula $\theta_j(\bar x,\bar y)$ on the bounded-shrubdepth graph $G[U]$ and budget $B$.
We keep the best modified coordinate value over all clusters for this case.

For soundness, suppose that the best value for the current case is at least $T_j$.
Then, for some cluster $U$, the optimizer returns tuples $(\bar a,\bar b)$ with $G[U]\models\theta_j(\bar a,\bar b)$, modified cost at most $B$, and modified coordinate value at least $T_j$.
Quantifier-free truth is absolute between $G[U]$ and $G$ for tuples contained in $U$, so $G\models\theta_j(\bar a,\bar b)$ and hence, $G\models\varphi_j(\bar a)$.
By the construction in \Cref{lem:finite-threshold-tuple-reduction}, the tuple $\bar a$ has a finite original value sum of at least $T_j$ and a normalized cost of at most $B$.
Therefore, its original \FOCV value is at least $T_j+\alpha_j=W$, and \Cref{lem:focv-cost-normalization} transfers the cost bound back to the original instance.

For completeness, let $\bar a$ be a feasible tuple for the considered case, and let $\bar b$ be witnesses such that $G\models\theta_j(\bar a,\bar b)$.
The set of vertices appearing in $(\bar a,\bar b)$ has size at most $k+\ell=p$, so it is contained in some cluster $U$ of the decomposition.
When this cluster is processed, the optimizer considers the tuple $(\bar a,\bar b)$.
By \Cref{lem:focv-cost-normalization}, the normalized cost of $\bar a$ is at most~$B$.
Since $\sum_i w_i(a_i)\ge T_j$ and this value is finite, the modified coordinate value coincides with the original coordinate value by \Cref{lem:finite-threshold-tuple-reduction}.
Thus, the best value for this case reaches~$T_j$.

The instance is accepted if and only if the best value reaches $T_j$ for at least one finite case.
The decomposition has at most $f(\Cc,p,\varepsilon)\cdot n^\varepsilon$ clusters, and every cluster is processed in polynomial time for fixed parameters.
Thus, the running time is fixed-parameter tractable.
\end{proof}

For the Boolean full first-order fragment, we need the following anchored distance-independence algorithm.
It follows from \Cref{cor:existential-FO-value-monadically-stable}, since the multicolored distance-$r$ independence condition is \FO definable.

\begin{lemma}
\label{lem:monadic-stable-anchored-wmdi}
Let $\Cc$ be a monadically stable graph class.
The anchored weighted multicolored distance-$r$ independence problem is fixed-parameter tractable on $\Cc$, parameterized by the number $p$ of colors and by the radius~$r$.
\end{lemma}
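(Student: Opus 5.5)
We follow the route suggested by the remark after \Cref{def:efficient-anchored-wmdi}. First we turn the anchored problem into an ordinary vertex-based weighted problem. Then we solve that problem with the decomposition machinery used in the proof of \Cref{cor:existential-FO-value-monadically-stable}, using \FO model checking on monadically stable classes for the global part.

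\emph{Step 1 (de-anchoring).} From $G$ and the candidate sets we build $G'$. For every candidate $y\in A_1\cup\cdots\cup A_p$ we add a new vertex $\hat y$ adjacent only to $a(y)$, and we mark $\hat y$ by a unary color $P_i$ when $y\in A_i$. Costs and profits become unary vertex functions: we put $\kappa'(\hat y)=\kappa(y)$ and $\nu'(\hat y)=\nu(y)$ on coordinate $i$, and $+\infty$ cost on non-$P_i$ vertices. Then $\dist_{G'}(\hat y,\hat y')=\dist_G(a(y),a(y'))+2$. The original instance is therefore equivalent to finding a tuple $(\hat y_1,\ldots,\hat y_p)$ with $P_i(\hat y_i)$, pairwise distance $>r+2$, cost at most $B$ and profit at least $T$. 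The graph $G'$ is obtained from $G$ by a copying operation followed by a fixed \FO transduction that attaches pendants and colors them. Since monadic stability is preserved under transductions, the class of all such $G'$ is again monadically stable, and $G'$ is computable in polynomial time.

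\emph{Step 2 (reduction to one formula).} The problem is now \FOCV with arity $p$ and a single case $\theta(\bar x)=\bigwedge_i P_i(x_i)\wedge\bigwedge_{i<j}\dist_{>r+2}(x_i,x_j)$. We normalise it with \Cref{lem:focv-cost-normalization} and \Cref{lem:finite-threshold-tuple-reduction}. Because $B$ is polynomially bounded, we may moreover precompute, for every color $i$ and anchor $v$, the best profit of a candidate at $v$ for every cost value $q\le B$.

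\emph{Step 3 (the main obstacle).} The formula $\theta$ is universal rather than existential. Consequently \Cref{cor:existential-FO-value-monadically-stable} cannot be invoked verbatim: the decomposition argument needs all witnesses of $\theta$ to lie in one bounded-shrubdepth cluster, whereas the distance condition quantifies over paths.

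What I would try first is to treat the pairwise-far condition in its negated, existential form. There are $\binom{p}{2}$ constraints $\dist_{\le r+2}(x_i,x_j)$, and each has a witness path with at most $r+1$ internal vertices. I would then interleave the weighted optimization with sparse neighborhood covers of radius $O(pr)$ for monadically stable classes, which are computable as in~\cite{DreierEMMPT23}. These covers suffice to decide $\dist_{>r+2}$ inside a cluster for tuples whose partial solution already lies in a common ball. The plan is a recursion on $p$ in the style of the scale-coarsening argument of \Cref{thm:conditional-FO-value-discovery}:
\begin{itemize}
\item Clusters of nearby chosen vertices live in a single ball of a neighborhood cover. Inside such a ball we optimise with the bounded-shrubdepth tuple optimiser from \Cref{lem:sbe-weighted-fo-cost-value-optimization}, using the shrubdepth decomposition of \Cref{thm:alg-low-shrubdepth-covers-monadic-stable} with radius-bounded witnesses.
\item Clusters that are mutually far apart are combined by a knapsack-style dynamic program over the polynomially bounded budget.
\end{itemize}
The delicate point is combining far-apart clusters without reintroducing a global distance check. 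Here I would follow the usual \FO model-checking reduction to distance-independent sets on monadically stable classes: either some color has many pairwise far candidates, so that it can be chosen last by a greedy argument that keeps only a bounded number of best candidates per cost value, or all candidates of that color are concentrated in boundedly many balls. I expect the weighted version of this greedy/concentration dichotomy to be the hardest step.
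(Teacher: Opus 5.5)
Your diagnosis in Step~3 is correct, and it also applies to the paper's own proof. The paper derives this lemma in one line from \Cref{cor:existential-FO-value-monadically-stable}, ``since the multicolored distance-$r$ independence condition is \FO definable''. However, the only finite case, $\bigwedge_i P_i(x_i)\wedge\bigwedge_{i<j}\dist_{>r}(x_i,x_j)$, is universal. The soundness step of that proof uses only the fact that quantifier-free truth transfers from $G[U]$ to $G$. Since $\dist_{G[U]}$ can exceed $\dist_G$, a tuple that is far apart in $G[U]$ need not be far apart in $G$. So you are right to discard that route.

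The problem is that you do not replace it. Steps~1 and~2 are bookkeeping. Step~3 is a plan whose decisive ingredient, the weighted greedy/concentration dichotomy, you explicitly leave open. That ingredient is the whole content of the lemma, so as written the proposal does not prove the statement.

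The pieces you list also do not close the gap on their own.
(i) A neighborhood cover lets you evaluate $\dist_{>r+2}$ in a cluster $G[X]$ instead of in $G$. But you then optimize over $G[X]$ using the bounded-shrubdepth decompositions of \Cref{thm:alg-low-shrubdepth-covers-monadic-stable}. Their parts $U$ are only guaranteed to contain bounded-size sets, not the $r$-balls around them. Path witnesses therefore certify $\dist_{\le r+2}$ but never $\dist_{>r+2}$, so the same obstacle reappears one level down.
(ii) The ``choose a color last'' exchange needs $p$ pairwise $2r$-far candidates that are at least as good as the optimum's choice in cost \emph{and} profit. Keeping boundedly many best candidates per cost value still leaves $B+1$ cost levels, hence polynomially many candidates per color. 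Choosing one per color under the pairwise-distance and knapsack constraints is then the original problem again.
(iii) The concentration branch is automatic: if a color has no $p$ pairwise $2r$-far candidates, a maximal such set yields fewer than $p$ covering balls of radius $2r$. But recursing into such balls makes no progress, because bounded-radius balls in a monadically stable class need not be structurally simpler than the whole graph. The model-checking algorithms for these classes progress via flips, and you would have to show how the knapsack bookkeeping survives flips, which change distances.

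Finally, Step~1 is not justified as stated. A copying transduction creates a bounded number of copies per vertex, whereas the number of candidates sharing an anchor is unbounded (up to $p(B+1)$ even after Step~2). So $G'$ is not a fixed transduction of $G$, and monadic stability of such pendant extensions needs a separate argument.
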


Finally, we solve the non-weighted case of \textsc{\FO Discovery} on monadically stable classes.
The key idea of the Boolean monadically stable case is that the numerical information needed locally can be encoded by boundedly many unary colors. 
After the movement costs have been capped, we fix a center and a bounded offset pattern, add unary predicates for the corresponding distance layers, and then test the resulting local formulas by first-order model checking on a monadic expansion of the input graph. 
Thus the weighted information is absorbed into the formulas, while the global choice of local candidates is handled by anchored distance independence.

\booleanfodiscoverymonadicstable*

\begin{proof}
Let $\psi(\bar x)$ be the Boolean target formula, let $\bar s=(s_1,\ldots,s_k)$ be the initial tuple, and let $b$ be the movement budget.
Put $B\coloneqq\min\left\{b,k(|V(G)|-1)\right\}$.
If a token and a target vertex lie in different connected components, their distance is infinite, and they cannot appear in a solution of finite budget.
Therefore, it is enough to cap all finite distances at $B+1$.
We enumerate all permutations $\pi\in S_k$ and solve the labeled problem in which coordinate $i$ is paid for by the source token $s_{\pi(i)}$.

Fix one permutation $\pi$.
Apply the refined free-variable Gaifman normal form to $\psi(\bar x)$.
Let $r$ be the locality radius.
We enumerate a graph $H\in\mathcal H_k$ and an index $\tau\in\Lambda_H$.
The tuple-independent side condition $\xi_H^\tau$ is tested by first-order model checking on the monadically stable class.
If it is false, we discard this choice.
Let $\mathcal P(H)=\{B_1,\ldots,B_p\}$ be the connected components of $H$.
For every component $B_h$, let $\rho_h\coloneqq\min B_h$.
Define $\widehat r\coloneqq\max\left\{1,r\right\}$, $r^\ast\coloneqq(k+1)\widehat r$, $\sigma_0\coloneqq0$, $D_\ell\coloneqq(k-1)\sigma_\ell+r^\ast$, $R_\ell\coloneqq2D_\ell+r$, and $\sigma_{\ell+1}\coloneqq R_\ell$ for $\ell<k$.
These are the same numbers as in the proof of \Cref{thm:conditional-FO-value-discovery}.

We enumerate a number $\ell\in\{0,\ldots,p-1\}$ and a forest $F$ on the component set $[p]$.
Let $\mathcal Q(F)$ be the set of connected components of $F$.
For every cluster $C\in\mathcal Q(F)$, put $I_C\coloneqq\bigcup_{h\in C}B_h$ and $\eta_C\coloneqq\min C$.
We use the same cluster formula $\chi_{H,\tau,\ell,F,C}(\bar x_{I_C},z_C)$ as in the proof of \Cref{thm:conditional-FO-value-discovery}.
It contains the radius-$r$ pattern induced by $H$ on $I_C$, the local formulas $\psi_{H,B_h}^\tau$ for $h\in C$, the forest edges witnessing scale-$\sigma_\ell$ connectivity, the equation $z_C=x_{\rho_{\eta_C}}$, and the protection conditions $\dist_G(z_C,x_i)\le D_\ell$ for $i\in I_C$.

We now construct candidates for the cluster $C$.
For every center $c\in V(G)$ and every offset vector $(\Delta_i)_{i\in I_C}\in\{-D_\ell,\ldots,D_\ell\}^{I_C}$, define $d_i\coloneqq\dist_G(s_{\pi(i)},c)+\Delta_i$.
If some source~$s_{\pi(i)}$ is disconnected from $c$, or if some $d_i$ is not in $\{0,\ldots,|V(G)|-1\}$, we skip this choice.
Otherwise, let $L_{i,d_i}\coloneqq\{v\in V(G):\dist_G(s_{\pi(i)},v)=d_i\}$.
Using first-order model checking on the expansion of $G$ by a singleton predicate $Z_c$ for $c$ and by the finitely many layer predicates~$L_{i,d_i}$, we test whether the formula
\begin{equation*}
        \exists z_C\exists \bar x_{I_C}\left(Z_c(z_C)\wedge\chi_{H,\tau,\ell,F,C}(\bar x_{I_C},z_C)\wedge\bigwedge_{i\in I_C}L_{i,d_i}(x_i)\right)
\end{equation*}
holds.
The number of added predicates depends only on $k$, and monadic stability is preserved under such unary expansions.
Whenever the test succeeds, we create a candidate for color $C$, anchored at $c$, with cost $\sum_{i\in I_C}d_i$ and profit zero.
The offset range is bounded by a function of $k$ and $\psi$, so only $f(k,\psi)\cdot |V(G)|$ candidates are produced for each cluster.

It remains to choose one candidate for each cluster so that the anchors are pairwise at a distance greater than $R_\ell$, and the total cost is at most $B$.
This is an anchored weighted multicolored distance-$R_\ell$ independence instance with all profits equal to zero and a profit threshold of zero.
By \Cref{lem:monadic-stable-anchored-wmdi}, it is fixed-parameter tractable.

We prove soundness.
Assume that the anchored instance has a feasible choice of candidates.
For every cluster $C$, the corresponding model-checking test gives a local realization $\bar b_C$ of the formula $\chi_{H,\tau,\ell,F,C}$ whose coordinate $i$ lies in the distance layer $L_{i,d_i}$.
Combining these local realizations gives a tuple $\bar b=(b_1,\ldots,b_k)$.
The layer predicates ensure that the movement cost of coordinate $i$ from $s_{\pi(i)}$ is exactly $d_i$ and hence, the total movement cost is the sum of the chosen candidate costs.
This sum is at most $B\le b$.
The protection conditions and the anchored distance-$R_\ell$ independence guarantee that variables belonging to distinct clusters are pairwise at distance greater than $r$.
Inside each cluster, the formula $\delta_{H[I_C],r}$ enforces the radius-$r$ pattern induced by $H$.
Therefore, $H=H_G^r(\bar b)$.
All local formulas $\psi_{H,B_h}^\tau$ hold by construction, and the side condition $\xi_H^\tau$ was tested to be true.
By the refined Gaifman normal form, $G\models\psi(\bar b)$.
Thus, the initial placement can be moved within budget $b$ to a feasible tuple.

We prove completeness.
Suppose that there is a tuple $\bar b=(b_1,\ldots,b_k)$ satisfying $G\models\psi(\bar b)$ and reachable from the initial placement within budget $b$.
Choose a permutation $\pi$ witnessing the minimum movement cost.
For this permutation, let $H\coloneqq H_G^r(\bar b)$ and let $\tau$ be the unique normal-form index certified by \Cref{thm:refined-free-variable-gaifman}.
As in the number observation in the proof of \Cref{thm:conditional-FO-value-discovery}, choose $\ell\in\{0,\ldots,p-1\}$ such that $Q_{\sigma_\ell}(\bar b)$ and $Q_{R_\ell}(\bar b)$ have the same connected components, and choose a forest $F$ whose components are these common components and whose edges lie in $Q_{\sigma_\ell}(\bar b)$.
For every cluster $C\in\mathcal Q(F)$, set $c_C\coloneqq b_{\rho_{\eta_C}}$.
Every vertex $b_i$ with \mbox{$i\in I_C$} lies within a distance of at most $D_\ell$ from $c_C$.
Thus, $\Delta_i\coloneqq\dist_G(s_{\pi(i)},b_i)-\dist_G(s_{\pi(i)},c_C)$ lies in $\{-D_\ell,\ldots,D_\ell\}$.
The model-checking test for the center $c_C$ and this offset vector succeeds, because it is witnessed by $\bar b_{I_C}$.
Hence, the corresponding candidate is created.
Candidates belonging to distinct clusters have anchors at distance greater than $R_\ell$.
Their costs sum to the movement cost of $\bar b$ under $\pi$, which is at most $B$.
Therefore, the anchored distance-independence instance is feasible.

We repeat the construction over all permutations, all normal-form alternatives, all scales, and all forests.
Their number depends only on $k$ and on $\psi$.
Every model-checking call uses a fixed formula over a monadic expansion of a monadically stable graph, and every anchored independence instance is fixed-parameter tractable by \Cref{lem:monadic-stable-anchored-wmdi}.
Thus, Boolean \textsc{FO Discovery} is fixed-parameter tractable on $\Cc$.
\end{proof}

\bibliography{main}
\end{document}